\renewcommand\@fnsymbol[1]{\ifcase#1\or †\or \dagger\or \ddagger\else\@arabic{#1}\fi}
  \providecommand\BibTeX{{%
    \normalfont B\kern-0.5em{\scshape i\kern-0.25em b}\kern-0.8em\TeX}}}
\begin{document}

\title{Federated Heavy Hitter Analytics with Local Differential Privacy}

\author{Yuemin Zhang}
\affiliation{
  \institution{The Hong Kong Polytechnic University}
  \country{China}
}
\email{ymin.zhang@connect.polyu.hk}

\author{Qingqing Ye}
\authornote{Corresponding author.}
\affiliation{
  \institution{The Hong Kong Polytechnic University}
  \country{China}
}
\email{qqing.ye@polyu.edu.hk}

\author{Haibo Hu}
\affiliation{
  \institution{The Hong Kong Polytechnic University}
  \country{China}
}
\email{haibo.hu@polyu.edu.hk}

\begin{abstract}
Federated heavy hitter analytics enables service providers to better understand the preferences of cross-party users by analyzing the most frequent items. As with federated learning, it faces challenges of privacy concerns, statistical heterogeneity, and expensive communication. Local differential privacy (LDP), as the \textit{de facto} standard for privacy-preserving data collection, solves the privacy challenge by letting each user perturb her data locally and report the sanitized version. However, in federated settings, applying LDP complicates the other two challenges, due to the deteriorated utility by the injected LDP noise or increasing communication/computation costs by perturbation mechanism. To tackle these problems, we propose a novel target-aligning prefix tree mechanism satisfying $\epsilon$-LDP, for federated heavy hitter analytics. In particular, we propose an adaptive extension strategy to address the inconsistencies between covering necessary prefixes and estimating heavy hitters within a party to enhance the utility. We also present a consensus-based pruning strategy that utilizes noisy prior knowledge from other parties to further align the inconsistency between finding heavy hitters in each party and providing reasonable frequency information to identify the global ones. To the best of our knowledge, our study is the first solution to the federated heavy hitter analytics in a cross-party setting while satisfying the stringent $\epsilon$-LDP. Comprehensive experiments on both real-world and synthetic datasets confirm the effectiveness of our proposed mechanism.

\end{abstract}

\keywords{Local Differential Privacy,  Heavy Hitters, Federated Analytics}

\maketitle
\section{Introduction}
Federated learning~\cite{kairouz2021advances}, which was introduced in 2016, allows developers to train machine learning models across multiple data parties without centralized data collection. Following the success of this computing paradigm, federated analytics (FA), which was coined by Google in 2020 for its Gboard application~\cite{R2020}, involves utilizing federated technologies to answer more fundamental queries about decentralized data that often do not involve machine learning. Federated analytics mainly considers analytical queries, especially statistical queries~\cite{elkordy2023federated}, such as heavy hitters (i.e., the most frequent items) identification~\cite{ZKM2020, CB2022, WLJ2019}. As with federated learning, FA faces the challenges of {\bf privacy concerns, statistical heterogeneity, and expensive communication}~\cite{WSZ2021}. Several privacy-preserving techniques (e.g., differential privacy, secure multi-party computing) have been investigated to tackle privacy concerns in statistical queries~\cite{xu2020collecting, acharya2019communication, BK2021, wang2022fedfpm}. However, most of them are restricted to a simplified scenario where the client is semi-trusted and holds users' raw data~\cite{DSZ2022}, or each client only includes a single user~\cite{wang2022fedfpm}. 

In practice, users are often distributed across different parties~\cite{SHH2023}. For instance, Amazon may require its branches in different regions (e.g., Europe and America) to collaboratively identify the top $k$ items frequently purchased during the Christmas holiday, to develop more effective marketing plans. For the sake of privacy, users are not willing to upload their raw data to the branches since they may not be fully trusted. Local differential privacy~\cite{KLN2011} (LDP), as the golden standard for private data collection, allows users to perturb their data locally and send the sanitized data to the server, making it well-suited for privacy-preserving federated data analytics.

In this work, we study the problem of federated (a.k.a., cross-party) heavy hitter analytics in the context of LDP. In fact, LDP has been proposed and adopted in many distributed systems, such as Emoji and Safari usage data collection in Apple~\cite{A2017} and Chrome usage data analysis in Google~\cite{EPK2014}. While the privacy challenge of federated heavy hitter identification appears to be addressable by LDP, it unfortunately complicates the other two challenges, namely statistical heterogeneity and expensive communication/computation. In particular, as the data across different parties usually exhibit statistical heterogeneity (e.g., non-IID nature), simply aggregating the noisy data perturbed by the LDP mechanism may not accurately reflect the global representation of the data distribution. \textcolor{black}{On the other hand, LDP mechanisms (e.g., unary encoding~\cite{EPK2014}) often involve encoding a single value into a vector (e.g., with a domain length $|\mathcal{X}|=2^{64}$), resulting in high communication costs (e.g., up to $O(|\mathcal{U}|\cdot |\mathcal{X}|)$) in the server side where $|\mathcal{U}|$ is the user population size. 
Furthermore, such mechanisms identify heavy hitters by making $|\mathcal{X}|$ queries to estimate the frequency of every value in $\mathcal{X}$, however, issuing $|\mathcal{X}|$ queries is computationally infeasible~\cite{WLJ2019,CMM2021,LLL2024}. 
As such, parties are not expected to transmit the large amount of user data (even being perturbed) to the central server, as it may consume large network bandwidth and excessive computation cost.} 

Fortunately, most of FA tasks can be disentangled into several sub-tasks, building on the natural basis that the overall results are composed of various partial results held by different parties and users. For example, in federated heavy hitter identification, an item is identified as a global top $k$ item if the sum of its counts on Amazon Europe and Amazon America is ranked within the top $k$. 
As such, the central server can break down the task of federated heavy hitter identification into some sub-tasks and delegates them to each party instead of requiring them to upload the local datasets.

Following this design, a naive idea is to implement an existing LDP solution (e.g., PEM~\cite{WLJ2019}) in each party to identify local heavy hitters and then the central server aggregates these results by counting. However, it is non-trivial to design such a mechanism with good utility in the federated setting, because of the non-IID nature of decentralized data. First of all, data heterogeneity across different parties can lead to skewed (and noisy) frequency distributions, resulting in many false positives. 
On the other hand, the non-IID issue also complicates the estimation and aggregation processes. The local heavy hitters identified within a party may not contribute to identifying the global ones. For instance, some globally infrequent items may be popular within a specific party, while overwhelming the other globally frequent ones and resulting in inconsistencies. 

In this paper, we propose a target-aligning prefix tree mechanism for federated heavy hitter analytics, motivated by the typical using of a prefix tree (a.k.a, \textit{trie}) structure to iteratively identify heavy hitters~\cite{ZKM2020,CB2022} in the literature. In particular, we develop a two-phase process to align local and global targets. In the first phase, we design a shared shallow trie construction strategy, where each party independently estimates its trie with the same level. The aggregated leaves of this trie serve as a warm start for subsequent independent estimation of the second phase in each party, assisting in the filtration of false positive prefixes at a shallow level. Meanwhile, we introduce an adaptive trie extension strategy, by integrating the frequency distribution and the LDP noise scale as constraints to enhance the estimation. This strategy serves to adaptively determine the prefixes to be extended at different levels within each party.

We further optimize the estimation accuracy by introducing a consensus-based pruning strategy. This is because, in the trie construction process, the extended domains are likely to contain numerous unnecessary candidates, resulting in large domain sizes and thus introducing excessive LDP noises. Firstly, globally infrequent prefixes or items do not serve as useful candidates. Secondly, some prefixes or items may be popular globally but infrequent in specific parties due to the non-IID issue. Both types of prefixes fail to provide reliable frequency information. In the FA setting, it is a natural idea to seek priori knowledge from other parties to improve the current party's estimation. Therefore, our consensus-based pruning strategy is designed to implement the second phase sequentially, so that each party adaptively prunes unnecessary candidates suggested by the previous one. 

The key contributions of our study are summarized as follows:
\begin{itemize}
    \item[$\bullet$] This is the first to explore stringent $\epsilon$-LDP solution to the federated heavy hitter analytics under the cross-party setting. A novel target-aligning prefix tree mechanism is designed to adaptively extend the prefix tree at different levels.
    
    \item[$\bullet$] We further propose a consensus-based pruning strategy that enhances the alignment between the local and global targets. This strategy utilizes the prior knowledge from other parties to assist the pruning in the current party effectively.  
    
    \item[$\bullet$] We conduct comprehensive experiments on real-world and synthetic datasets to confirm the superiority of our mechanism in comparison with existing solutions.
\end{itemize}

The remainder of this paper is organized as follows. We discuss related works in Section~\ref{related work}. In Sections~\ref{preliminary} and~\ref{limitations}, we provide necessary background information and the problem statement. In Sections~\ref{method} and~\ref{Optimization}, we detail the proposed mechanism. In Section~\ref{experiment}, we present the experimental results. Finally, we conclude our paper in Section~\ref{conclusion}. 
\section{Related Work}\label{related work}
Differential privacy (DP)~\cite{DMN2006} has been considered the golden standard for privacy protection in recent years, and has widely used in many tasks such as frequent subgraph mining~\cite{CSX2018}, high-dimensional data publishing~\cite{ZCP2017, CXZ2015, STC2016, CTS2020}, and learning problems~\cite{fu2023dpsur}. As it always relies on a trusted server, local differential privacy (LDP)~\cite{KLN2011,li2019mobile} is proposed to allow users to locally perturb their data and report a noisy version. A number of studies have been conducted on the application of LDP, e.g., key-value data collection~\cite{YHM2019} and time-series data release~\cite{ye2023stateful}, and several companies have implemented LDP in their products~\cite{EPK2014, A2017, DKY2017}. Some recent works study data poisoning attacks to LDP~\cite{cao2021data, sun2024ldprecover,huang2024ldpguard}.

Federated analytics~\cite{R2020} allows individuals to collaboratively contribute to analytical tasks that do not require training, such as histogram construction and heavy hitters~\cite{WSZ2021}, in contrast to federated learning, which requires training a neural network. The problem of identifying privacy-preserving heavy hitters is a fundamental non-training application and has been studied under both DP and LDP settings. Zhu \textit{et al}.~\cite{ZKM2020} propose TrieHH that satisfies DP. It focuses on single-party settings and does not satisfy LDP. 
In contrast, our study focuses on and well addresses challenges raised in the multi-party setting while achieving a rigorous LDP guarantee. TrieHH++~\cite{CB2022} extends TrieHH and provides ($\epsilon, \delta$)-aggregated differential privacy. 
Chadha \textit{et al}.~\cite{CCD2023} address a scenario where each user possesses multiple data points under aggregate DP. 
Li \textit{et al}.~\cite{LLL2024} explore tracking heavy hitters on data streams at bounded memory expense under LDP. 
Wang \textit{et al}.~\cite{WLJ2019} introduce a protocol under LDP for identifying heavy hitters by discovering the popular prefixes with increasing lengths, named the prefix extending method (PEM), which is also designed for the single-party setting. 
 
The existing studies that comply with LDP predominantly focus on a traditional local setting with a single server, disregarding the more practical scenario where users' data are distributed across multiple servers. 
The most relevant work~\cite{SHH2023} to ours employs a hierarchical approach to identify local and global heavy hitters in cross-party settings, but does not satisfy LDP. In this work, each user utilizes its own data and its prefix to select the  random item $X$ to add to the domain. It aims to address the out-of-domain issue when a user's prefix lies outside the predefined domain. However, it results in different output domains rely on each user's sensitive data, undermining the LDP guarantee. In response, our work introduces a federated heavy hitter analytics mechanism designed to detect heavy hitters across multiple parties while mitigating the negative impacts of non-IID data and satisfying $\epsilon$-LDP.

\section{Preliminaries}\label{preliminary}

\subsection{Local Differential Privacy}
\label{subsec:ldp}
As the local variant of differential privacy (DP)~\cite{DMN2006}, LDP~\cite{KLN2011} eliminates the need for a trusted data collector, making it a viable privacy model for numerous applications in distributed settings, such as frequency and mean estimation~\cite{WBL2017,WXY2019}, trajectory data synthesis and collection~\cite{DHZ2023,ZYC2023}, and preference ranking analysis~\cite{YCS2022}. The following provides a formal definition of LDP.

\begin{definition}[$\epsilon$-LDP]
\label{def:ep_ldp}
Given a privacy budget $\epsilon>0$, a randomized mechanism $\mathcal{M}:\mathcal{X}\to\mathcal{Y}$ provides $\epsilon$-LDP if and only if, for any two inputs $x, x^{\prime}\in \mathcal{X}$ and any possible output $y\in \mathcal{Y}$, the following inequality holds: 
$\mathrm{Pr}[\mathcal{M}(x)=y]\leq e^{\epsilon} \times \mathrm{Pr}[\mathcal{M}(x^{\prime})=y].$
\end{definition}

The plausible deniability of LDP is achieved by perturbing users' data locally and is controlled by privacy budget $\epsilon$. Similar to centralized settings, LDP enjoys the desired property of post-processing~\cite{DR2014}.

\begin{theorem}
    (Post-Processing) Let $\mathcal{M}:\mathcal{X}\to\mathcal{Y}$ be a mechanism that satisfies $\epsilon$-LDP, and $\mathcal{F}:\mathcal{Y}\to\mathcal{Y}^\prime$ be an arbitrary randomized mapping. Then \textcolor{black}{$\mathcal{F}\circ\mathcal{M}:\mathcal{X}\to\mathcal{Y}^\prime$ satisfies $\epsilon$-LDP}, where $\circ$ denotes the composition of $\mathcal{F}$ and $\mathcal{M}$.
\end{theorem}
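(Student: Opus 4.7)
The plan is to reduce the claim to the LDP inequality for $\mathcal{M}$ by marginalizing over the (independent) randomness introduced by $\mathcal{F}$. Concretely, I would fix arbitrary inputs $x,x'\in\mathcal{X}$ and an arbitrary output $y'\in\mathcal{Y}'$, and aim to show
\[
\Pr[(\mathcal{F}\circ\mathcal{M})(x)=y']\le e^{\epsilon}\,\Pr[(\mathcal{F}\circ\mathcal{M})(x')=y'].
\]

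First, I would decompose the probability on the left-hand side using the law of total probability over the intermediate output $y\in\mathcal{Y}$, together with the fact that the coin flips of $\mathcal{F}$ are independent of those of $\mathcal{M}$. This yields
\[
\Pr[(\mathcal{F}\circ\mathcal{M})(x)=y']=\sum_{y\in\mathcal{Y}}\Pr[\mathcal{F}(y)=y']\cdot\Pr[\mathcal{M}(x)=y].
\]
Next, I would apply Definition~\ref{def:ep_ldp} termwise: since $\mathcal{M}$ satisfies $\epsilon$-LDP, $\Pr[\mathcal{M}(x)=y]\le e^{\epsilon}\Pr[\mathcal{M}(x')=y]$ for every $y$. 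Because $\Pr[\mathcal{F}(y)=y']\ge 0$, I can multiply through, sum over $y$, and factor out $e^{\epsilon}$ to recover exactly $e^{\epsilon}\,\Pr[(\mathcal{F}\circ\mathcal{M})(x')=y']$, completing the argument.

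There is essentially no hard step; the only thing to be careful about is the independence assumption between $\mathcal{F}$ and $\mathcal{M}$, which justifies the factorization $\Pr[\mathcal{F}(y)=y',\mathcal{M}(x)=y]=\Pr[\mathcal{F}(y)=y']\cdot\Pr[\mathcal{M}(x)=y]$. If $\mathcal{Y}$ were continuous, I would replace the sum with an integral against the pushforward measure of $\mathcal{M}(x)$ and argue identically; since the LDP inequality is a pointwise bound on density/probability mass ratios, it is preserved under arbitrary mixtures, which is essentially what post-processing expresses.
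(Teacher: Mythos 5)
Your argument is correct: decomposing $\Pr[(\mathcal{F}\circ\mathcal{M})(x)=y']$ over the intermediate output via the law of total probability, applying the pointwise LDP bound termwise, and resumming is the standard proof of post-processing for pure $\epsilon$-LDP. Note that the paper itself states this theorem as known background (citing the differential-privacy literature) and gives no proof, so there is nothing to compare against; your proof fills that gap correctly, including the appropriate caveats about the independence of $\mathcal{F}$'s randomness and the continuous-output case.
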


The post-processing property guarantees that no privacy leakage occurs when perturbed data undergo further processing.

\subsection{Frequency Oracles}
\label{subsec:fo}
\textcolor{black}{A \textit{frequency oracle} (FO) is an LDP mechanism for frequency estimation. 
In this subsection, we briefly revisit three classic FOs, namely $k$-ary randomized response ($k$-RR)~\cite{W1965, KOV2014}, optimized unary encoding (OUE)~\cite{WBL2017}, and optimized local hashing (OLH)~\cite{WBL2017}.}

\textbf{$\boldsymbol{k}$-RR. }
Given a privacy budget $\epsilon$, $k$-RR mechanism $\mathcal{M}:\mathcal{X}\to\mathcal{X}$ satisfies $\epsilon$-LDP, if, for any input $x \in \mathcal{X}$, the output $y\in\mathcal{X}$ is sampled from the following distribution:
\begin{equation}\label{eq2}
\mathrm{Pr}[\mathcal{M}(x)=y]=
\begin{cases}
  p=\frac{e^{\epsilon}}{\mathcal{|X|}-1+e^{\epsilon}} & \text{ if } y=x, \\
  q=\frac{1}{\mathcal{|X|}-1+e^{\epsilon}} & \text{ otherwise}.
\end{cases}
\end{equation}

By Equation~\ref{eq2}, $k$-RR consistently reports the original value (i.e., $y=x$) with a higher probability $p$. To estimate the frequency of $x \in \mathcal{X}$---that is, the proportion of users whose private value is $x$ relative to the total user count---one counts the occurrences of $x$ being reported, represented as $c_x$, and then computes $f^{k\text{-RR}}_x = \frac{c_x/n - q}{p - q},$ 
where $n$ signifies the total number of users. The variance for this unbiased estimation~\cite{WBL2017} is $Var[f^{k\text{-RR}}_x] = \frac{\mathcal{|X|}-2+e^{\epsilon}}{(e^{\epsilon}-1)^2\cdot n}$.
It has been proven that $k$-RR achieves good performance for domain sizes $k$ (i.e., $\mathcal{|X|}$) smaller than $3e^\epsilon + 2$~\cite{WBL2017}. For a larger domain size, the state-of-the-art OUE~\cite{WBL2017} mechanism has been proven that can outperform $k$-RR.

\textbf{OUE. }
Given a privacy budget $\epsilon$, the OUE mechanism satisfies $\epsilon$-LDP if it first encodes the input $x \in \mathcal{X}$ as a length-$|\mathcal{X}|$ one-hot binary vector $B=Encode(x)=[0,\cdots,0,1,0,\cdots,0]\in\mathcal{B}$, where only the $x$-th position is 1. It then perturbs each position of $B$, and the output $B^\prime\in \mathcal{B}$ is sampled from the following distribution:
\begin{equation}\notag
    Pr[B^\prime[x]=1]=
    \begin{cases}
      p=\frac{1}{2} & \text{ if } B[x]=1, \\
      q=\frac{1}{e^{\epsilon}+1} & \text{ if }B[x]=0.
    \end{cases}
\end{equation}
The perturbed vector is viewed as supporting an input $x$ if $B[x]=1$. Let $c_x$ denote the counts of the supports of $x$ among all the reported vectors, then frequency $f^{\text{OUE}}_x$ of $x$ is the same with $k$-RR. The variance of this estimation~\cite{WBL2017} is $Var[f^{\text{OUE}}_x] = \frac{4e^{\epsilon}}{(e^{\epsilon}-1)^2\cdot n}$. 
However, for significantly larger domain sizes, OUE can result in excessive communication costs. Therefore, OLH~\cite{WBL2017} is recommended.

\textbf{OLH. }
Let $\mathbb{H}$ be a universal hash function family, with each $H\in \mathbb{H}$ outputting a value in $[d^{\prime}]$, where $d^{\prime}=\lceil e^\epsilon+1\rceil$. Given a privacy budget $\epsilon$, the OLH mechanism satisfies $\epsilon$-LDP if it first encodes the input $x \in \mathcal{X}$ as $v=H(x)$, with $H$ chosen uniformly at random. It then perturbs $v$, and the output $y\in [d^{\prime}]$ is sampled according to the following distribution:
\begin{equation}\label{eq5}\notag
\mathrm{Pr}[y=v^{\prime}]=
\begin{cases}
  p=\frac{e^{\epsilon}}{d^{\prime}-1+e^{\epsilon}} & \text{ if } v^{\prime}=v, \\
  q=\frac{1}{d^{\prime}-1+e^{\epsilon}} & \text{ otherwise}.
\end{cases}
\end{equation}

For each $x \in \mathcal{X}$, let $c_x = |\{u \mid H^{u}(x) = y^u\}|$ denote the number of reports supporting the input as $x$, where $y^u$ represents the output of user $u$, perturbed by the randomly chosen hash function $H^{u}$. The frequency estimation of $x$ and the variance of this estimation~\cite{WBL2017} are the same as the OUE mechanism. \textcolor{black}{However, the computation cost of OLH is much larger, especially when the domain size is overly large~\cite{WLJ2019,CMM2021}. }
All the above three mechanisms can be used as an FO, each subject to different practical constraints. In addressing the heavy hitter problem, the FO is typically treated as a black box. Since the strategy for identifying heavy hitters being the primary challenge, which is also the most significant and impactful aspect.

\subsection{Heavy Hitter Identification}
\label{subsec:heavy hitter}
Identifying heavy hitters under LDP in the local setting focuses on the most frequent items, especially when the item domain size is extremely large. Under such conditions, it is impractical to directly apply FOs (e.g., OUE), \textcolor{black}{due to prohibitive communication and computation costs~\cite{CMM2021,LLL2024,WLJ2019}.} There are some existing works that consider the set-valued LDP setting (e.g., LDPMiner~\cite{QYY2016} and SVIM~\cite{WLJ2018}), which also incur high communication costs. To reduce these costs while still achieving good utility, prefix-tree-based iterative algorithms are commonly employed~\cite{WXY2018,WLJ2019,BS2015}. For instance, PEM~\cite{WLJ2019}, the state-of-the-art heavy hitter identification mechanism under LDP, identifies heavy hitters by constructing a prefix tree iteratively. Concretely, PEM assigns the users into different groups to use an FO to report prefixes with different lengths of their encoded items. The server then sequentially processes each group's reports to determine which top $t$ popular prefixes (where $t$ refers to the extension number) should be extended in the next group for longer candidate prefixes. The heavy hitters are identified after the final group's estimations are completed. 

Although the prefix-tree based algorithms are efficiently in identifying heavy hitters in local settings, the previous studies all consider a single-party setting, leaving a gap in the practical federated setting that the users are always distributed across multiple parties. 
In this setting, each party can only obtain the noisy answers from its groups, but cannot access the results from other parties.

\section{Problem Statement and Challenges}
\label{limitations}

\subsection{Problem Statement}
\label{subsec:problem}
We consider a federated setting, where a central server wants to identify the heavy hitters in the data distributed across a set of parties $\mathcal{P}=\{P_1, \cdots, P_{|\mathcal{P}|}\}$. Each party $P_i\in \mathcal{P}$ comprises a distinct set of users $U^{i}$, without overlap with the other parties. Each user $u_j\in U^{i}$ holds an item $x_j\in \mathcal{X}$, where $\mathcal{X}$ is the item domain. 
Since the parties are not entirely trusted, users will locally perturb their private values via LDP mechanisms and only send noisy values to the parties. Upon receiving the sanitized data, each party computes and sends some partial results to the server for identifying the top-$k$ federated heavy hitters, which are formally defined as follows. 

\begin{definition}[Top-$k$ Federated Heavy Hitter]
Given a set of parties $\mathcal{P}$ and the value domain $\mathcal{X}$, an item $x\in \mathcal{X}$ is a top-$k$ federated heavy hitter if its frequency, 
\begin{equation}\label{eq8}\notag
f_{x}=\frac{\sum_{P_i\in \mathcal{P}}|\{j|x_j=x, u_j\in U^i\}|}{\sum_{P_i\in \mathcal{P}}|U^{i}|},
\end{equation}
is ranked within top $k$ over frequencies of all possible values in $\mathcal{X}$ across all parties $\mathcal{P}$.
\end{definition}

Intuitively, upon receiving perturbed data from the users, the party can directly send the local dataset to the central server for heavy hitter identification. However, this may produce the excessive communication cost. In practice, both the user population and the item domain are usually overly large, e.g., reaching the million level in recommendation scenarios~\cite{GZL2023,MZH2018}. Assume we have $5,000,000$ users and $|\mathcal{X}|=2,000,000$, the state-of-the-art FO, OUE~\cite{WBL2017}, needs to encode an input value into a $2,000,000$-bit vector, then perturbs and submits it to the server. Thus, the communication cost in the server side is $1\times 10^{13}$ bits, which is unacceptable in many applications. One can use other communication efficient FOs, e.g., OLH, but communication costs are still large and decoding a single randomized data (via such mechanisms) on the server side requires a comprehensive scan of the entire domain $\mathcal{X}$, which is computationally infeasible for large data domains~\cite{WLJ2019,CMM2021,LLL2024}. To mitigate the expensive communications and computations, the central server alternatively breaks down the task of federated heavy hitter identification into some sub-tasks and delegates them to each party instead of requiring them to upload the distributed data.

\subsection{Straw-Man Solution and Challenges}
\label{subsec:limitations}
In the literature, some existing works study the simplified scenario of federated heavy hitter identification under differential privacy. A line of work assumes that the distributed parties are semi-trusted, so that they can collect users' raw data, inject DP noise into query results and then send the noisy version to the central server~\cite{BK2021,JKK2023}. In the context of LDP, all the existing works assume each client only includes a single user, so that users can directly send the perturbed value to the central server~\cite{WXY2018,WLJ2019,BS2015}.

In practice, users are always distributed among multiple untrusted parties. For instance, Amazon may require its branches in different regions (e.g., Europe and America) to collaboratively identify the top $k$ items frequently purchased or liked during the Christmas holiday to develop more effective marketing plans. 
Under this circumstance, we alternatively turn to LDP. Specifically, users send the perturbed data to the party, which can then estimate some partial results (e.g., local heavy hitters~\footnote{For ease of understanding, we refer to heavy hitters identified by a single party $P_i$ as local heavy hitters.}). Then parties send the partial results to the server for identifying federated heavy hitters. 

This idea naturally leads to a straightforward solution, as described in Algorithm~\ref{FAPEM}. Specifically, each party can employ existing approaches for heavy hitter identification (e.g., PEM or SPM~\cite{FPE2016}) to estimate the local heavy hitters, and then submit the identified top-$k$ heavy hitters along with their estimated frequencies to the central server. The server then aggregates the local heavy hitters from different parties to determine the federated heavy hitters. 

\begin{algorithm}[t]
	\footnotesize
	\caption{PEM for Federated Heavy Hitters (FedPEM)}\label{FAPEM}
	\KwIn{All parties $\mathcal{P}$, value domain $\mathcal{X}$, maximum binary length $m$, granularity $g$, query $k$, and privacy budget $\epsilon$}
	\KwOut{Top-$k$ federated heavy hitters $R^k$}
	\tcp{Party side:}
	Each party $P_i \in \mathcal{P}$ calls PEM($U^i, m, k, \varepsilon$, g) (Algorithm 1 in~\cite{WLJ2019}), where $U^i$ denotes the set of users in the $i$-th party\;
	
	Each party $P_i$ obtains the local top-$k$ heavy hitters $R^k_i$ and reports to server\;
	\tcp{Server side:}
	Initialize the federated heavy hitter dictionary $R^k=\{\}$\;
	Server aggregates and obtains the top-$k$ federated heavy hitters $R^k$\;
		
	\Return  $R^k$\;	
\end{algorithm}

However, such a solution overlooks the non-IID problem, which poses challenges for accurate aggregation on the server side. Federated heavy hitter identification faces two significant issues related to non-IID data. First of all, data heterogeneity incurs varying frequency distributions across parties. For instance, the differing frequency distributions of values can lead to skewed frequency distributions of prefixes. However, existing mechanisms in LDP overlook this issue and consistently apply the same extension number of $t=k$ in each iteration when extending popular prefixes to a longer length.
On the other hand, \textcolor{black}{the non-IID nature of the data complicates the estimation and aggregation processes.} The local heavy hitters identified within a party may not contribute to identifying the global~\footnote{In the sequel, global and federated are interchangable.} heavy hitters. For example, certain items may be frequent in Amazon but not popular, even not exist, in Shopee. Such items do not offer valuable information for federated heavy hitter identification and, to some extent, act as noise. 
This oversight can lead to false positives and the neglect of some significant popular prefixes. In the following sections, we will delve into these issues in more detail and introduce our mechanism to mitigate them.

\section{Methodology}\label{method}
This section presents our proposed solution. We first introduce our design rationale in Section~\ref{subsec: design rationale} and then overview our target-aligning prefix tree mechanism in Section~\ref{subsec:mechanism overview}. Then the implementation details are illustrated in Sections~\ref{subsec:shared shallow trie construction}--\ref{subsec:tap mechanism}, followed by an analysis on privacy guarantee and utility in Section~\ref{subsec:privacy}.

\subsection{Design Rationale}
\label{subsec: design rationale}

The effectiveness of constructing a prefix tree (a.k.a., \textit{trie}) for identifying heavy hitters has been verified in the typical LDP setting~\cite{WXY2018,WLJ2019}. Compared with other structures (e.g., multiple channels~\cite{BS2015} and segment based structure~\cite{FPE2016}), the prefix tree encodes an item into a binary tree and thus utilizes the sequential feature for pruning. \textcolor{black}{For example, given the item domain size $2^{64}$, each item can be encoded into a 64-bit vector, based on which a 64-level binary trie can be constructed. The inherent sequential feature allows for the iterative pruning of unnecessary items by filtering prefixes (e.g., preserve only items whose first two-bit prefix is $00$ or $10$).} This not only avoids the excessive communication cost but also reduces the scale of LDP noises by iteratively narrowing down the candidate domain. This inspires us to design an effective LDP-enabled prefix tree mechanism to accurately identify federated heavy hitters.

\begin{figure}[t]
    \centering
    \includegraphics[width=0.95\columnwidth]{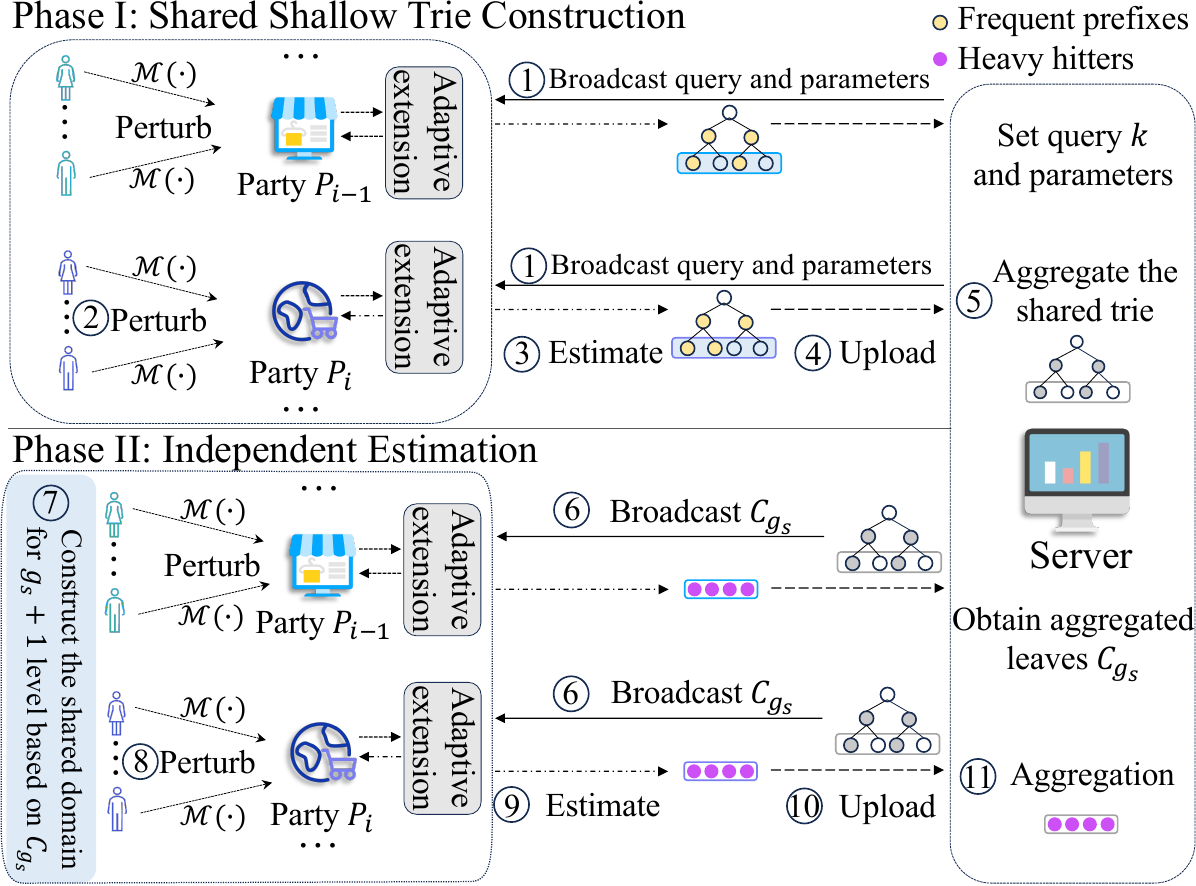}
    \vspace{-0.1in}
    \caption{An overview of the TAP mechanism.
    }
	\vspace{-0.2in}
    \label{fig_overview_tap}
\end{figure}

According to the Apriori property~\cite{AS1994}, the prefix of a popular value is likely to be popular as well. However, due to LDP noises and the non-IID nature of data in federated settings, some globally frequent values might be overwhelmed by certain prefixes and pruned in local parties. This results in inconsistencies between local and global targets, especially at shallow levels. As such, we propose a strategy of constructing a shared shallow trie that aggregates frequent prefixes across parties at a shallow level, serves as a warm start, and aligns such inconsistencies collaboratively. 

On the other hand, \textcolor{black}{in federated settings}, each party will maintain their own tries by minimizing the local trie construction errors, \textcolor{black}{which are inherently affected by both LDP noise and the underlying frequency distribution.} Therefore, we propose an adaptive trie extension strategy. As opposed to existing studies that apply fixed extensions, we adaptively determine the extension sub-processes during trie construction. This approach not only reduces LDP noises but also mitigates the inconsistency between identifying local heavy hitters and global targets (i.e., covering necessary prefixes).

\subsection{Mechanism Overview}
\label{subsec:mechanism overview}

The above design rationale leads to our \underline{t}arget \underline{a}ligning \underline{p}refix tree (TAP) mechanism. As shown in Figure~\ref{fig_overview_tap}, the TAP mechanism consists of two phases. In phase I, the server first broadcasts the query and parameters to all parties (step \textcircled{\small 1}). Upon receiving them, all users in each party are divided into several groups uniformly at random based on the increasing lengths of prefixes. Each group corresponds to a level of the trie, and each user in that group reports a perturbed prefix of her value. In each party, some frequent prefixes estimated by the current group are then extended to construct the candidate domain for the next group, to estimate longer prefixes. Each user reports only once, which avoids dividing the privacy budget and thus achieving more accurate estimations. During phase I, the first few user groups in all parties collaboratively construct a shared shallow trie (steps \textcircled{\small 2}--\textcircled{\small 5}), by aggregating the partial results from all parties, whose construction will be detailed in Section~\ref{subsec:shared shallow trie construction}. 

In phase II, the server first broadcasts the leaves of the constructed shared trie to each party (step \textcircled{\small 6}), based on which the parties extend the same leaves to construct candidate domains (step \textcircled{\small 7}). Then local heavy hitters in different parties can be identified (steps \textcircled{\small 8}--\textcircled{\footnotesize 10}). 
Note that trie extensions in both phases are always adaptive to the noisy frequency distributions, which will be further elaborated in Section~\ref{subsec:adaptive trie extension}. 
Finally, the server aggregates these findings to identify the federated heavy hitters (step \textcircled{\footnotesize 11}). 

\subsection{Shared Shallow Trie Construction}
\label{subsec:shared shallow trie construction}

As aforementioned, the non-IID problem may cause inconsistencies to the estimation among parties. 
Numerous infrequent values can share identical shorter prefixes, which might appear frequent at shallow levels. 
Due to LDP noises and the non-IID problem, values that are genuinely frequent across all parties might be overshadowed by these common prefixes and get prematurely pruned by local parties, especially at shallow levels. For instance, in Figure~\ref{fig_examples}(a), there are two parties (i.e., party A and party B) \textcolor{black}{that encode the entire domain into a trie (e.g., $2^{64}$ items represents by a 64-bit binary string) where the prefixes at the second level of the trie, namely the first two-bit prefix domain, are \{$00$, $01$, $10$, $11$\},} and the query is for the top $k=2$ heavy hitters. 
At the current level (shaded in blue), both parties select the yellow prefixes to be extended based on their own noisy observations. That is, party A extends the yellow prefixes $00$, $01$, and $11$, whereas party B chooses $00$ and $10$. We can observe that at the current level of the trie, the globally frequent prefixes (in grey) are $00$ and $10$. Unfortunately, without intervention, party A will miss $10$ (in pink) and further incur useless extensions at each deeper levels, which could introduce extra noise in identifying global heavy hitters. Therefore, we propose a strategy of constructing a shared shallow trie to mitigate this issue by aligning the local and global targets.

Algorithm~\ref{TAP_phase_I} shows the workflow. The server assigns each party $P_i\in \mathcal{P}$ a granularity $g$ and a shared trie level $g_s$ (line 1). Each $P_i$ instructs the first $g_s$ groups of users to estimate the frequent prefixes at the initial $g_s$ levels (lines 4--8). We apply adaptive extension here by considering the underlying frequency distribution and LDP noises, see Section~\ref{subsec:adaptive trie extension}. Then, each party sends the candidates with non-zero estimated counts at the $g_s$-th leves to the server (line 9). The server calculates the top $k$ frequent prefixes (denoted as $C_{g_s}$) based on the reports (line 10). For instance, as shown in Figure~\ref{fig_examples}(a), with $g_s=2$ and $k=2$, party A selects all prefixes as the candidates $C^A_{g_s}$ after completing the estimation at the second level (shown in blue), while $C^B_{g_s}$ consists of $00$ and $10$. They then report $C^A_{g_s}$, $C^B_{g_s}$, along with the counts of all prefixes to the server. The server aggregates them and broadcasts the top $k=2$ frequent prefixes, $C_{g_s}=\{00,10\}$, to all parties. Then in phase II, after receiving these globally frequent prefixes $C_{g_s}$, each party constructs the prefix domain $\Lambda_{g_s+1}$ by extending $C_{g_s}$ and continues their estimations independently until they identify their own local heavy hitters.

\begin{algorithm}[t]
  \footnotesize
  % \small 
  \caption{Shared Shallow Trie Construction (STC)}\label{TAP_phase_I}
  \KwIn{All parties $\mathcal{P}$, maximum binary length $m$, granularity $g$, shared trie level $g_s$, query $k$, and privacy budget $\epsilon$}
  \KwOut{Globally frequent prefixes $C_{g_s}$}
  The server broadcasts granularity $g$ and shared level $g_s$ to all parties;

  \For{$P_i \in \mathcal{P}$}{
  	Initialize $C^i_0 = \emptyset$, $h_0=0$\;
  	
  	Divide its users $U^i$ into $g$ groups $\{U^i_1, ..., U^i_g\}$ randomly\;
  	
  	\For{$h \in \{1, ..., g_s\}$}{
  		Compute the prefix length of $h$-th level $l_h =\lceil h \times \frac{m}{g} \rceil$\;
  		
  		Construct candidate domain $\Lambda_h$ = \texttt{Construct} ($l_h, l_{h-1}, C^i_{h-1}$)\;
  		
  		Get user reports to estimate ($C^i_h, \mathcal{I}^i_h$) = \texttt{Estimate} ($l_h, \epsilon, \Lambda_h, U^i_h$)\;
  	}
  
    Report prefix candidates $C^i_{g_s}$ and their counts $\mathcal{I}^i_h$ at $g_s$-th level to the server\;
  }

  The server aggregates the top $k$ frequent prefixes $C_{g_s}$ from $\{C^1_{g_s}, ..., C^{|\mathcal{P}|}_{g_s}\}$ based on $\{\mathcal{I}^1_{g_s}, ..., \mathcal{I}^{|\mathcal{P}|}_{g_s}\}$;

  \Return  $C_{g_s}$\;

  \SetKwFunction{FMain}{Construct}
    \SetKwProg{Fn}{Procedure}{:}{}
    \Fn{\FMain{$l_h, l_{h-1}, C_{h-1}$}}{

    \Return $\Lambda_h = C_{h-1} \times \{0, 1\}^{l_h - l_{h-1}}$\;
  }

  \SetKwFunction{FMain}{Estimate}
	\SetKwProg{Fn}{Procedure}{:}{}
	\Fn{\FMain{$l_h, \epsilon, \Lambda_h, U_h$}}{
	Each user in $U_h$ reports her noisy prefix via the given FO and $\epsilon$\;
	
	Estimate frequencies of all prefixes $\lambda_h \in \Lambda_h$\ with $l_h$ bits length\;
	
	Construct $C_h$ as $t=k^*+\eta$ highest estimated values (with their counts $\mathcal{I}_{h}$) by solving Equations~\ref{eq9} and ~\ref{eq10} (see Section~\ref{subsec:adaptive trie extension})\;
	
	\Return ($C_h$, $\mathcal{I}_h$)\;
}
\end{algorithm}

\subsection{Adaptive Trie Extension}
\label{subsec:adaptive trie extension}

In previous studies, the server extends the top $t=k$ (i.e., the extension number) frequent prefixes at each level to construct the candidate prefix domain for the next level. This extension number $t=k$ relies solely on the query $k$ and is independent of the current frequency distribution. We argue that such fixed extensions, influenced by LDP noise and the inherent prefix frequency distribution, may not yield an appropriate perturbation domain for the next level. Concretely, the extension number $t$ represents the number of child nodes to be extended, and the number of candidates grows exponentially with increasing $t$. Consequently, a large $t$ results in a large candidate domain, including many unnecessary prefixes to be extended (i.e., those that are not prefixes of heavy hitters). When applying LDP perturbation, the noise scale increases with the domain size, leading to excessive LDP noise. Thus, the goal of adaptive trie extension is to strike a balance between the amount of important prefixes and the injected LDP noise. 
Specifically, for a given party and the candidate prefix domain $\Lambda_h$ at level $h$, the party sorts the estimated frequencies $\hat{F}_h$ of all candidates after receiving the perturbed data from users, and then adaptively decides on the extension number $t$. This decision considers two factors: a less frequent prefix among the top $k$ serving as an anchor, and the amount of LDP noise. We first identify this anchor prefix as

\begin{equation}\label{eq9}
    \underset{k^{\ast}}{\arg\max} \ \frac{\sum_{1 < j \leq k^{\ast}} \hat{f_j}}{k^{\ast}} - \frac{\sum_{k^{\ast} < s \leq k+1} \hat{f_s}}{k+1 - k^{\ast}},
\end{equation}
where $1< k^{\ast}\leq k$, $\hat{f_j}$ (resp. $\hat{f_s}$) denotes the noisy frequency of the $j$-th (resp. the $s$-th) popular prefix. 
In Equation~\ref{eq9}, the first term calculates the average of the first $k^{\ast}$ prefixes' frequencies (except for the largest one since it must be preserved), and the latter represents the average frequency of the other relatively infrequent prefixes within the top $k+1$ prefixes at current level. We include the $(k+1)$-th frequent prefix, which represents the upper bound of the average frequency of the rest prefixes and avoids dividing by zero. 
The underlying rationale for this objective function is that such an anchor prefix, along with any more frequent prefixes, will likely cover the least frequent prefix among the final top $k$ heavy hitters at this level. The identification of $k^{\ast}$ distinguishes the prefix with a significantly influential frequency, which serves as the boundary between the prefixes of top $k$ at current level and others. However, the LDP noise must be took into account.

\begin{figure}[t]
    \centering
    \subfloat[An instance of the shared shallow trie construction.]{
    \includegraphics[width=0.9\columnwidth]{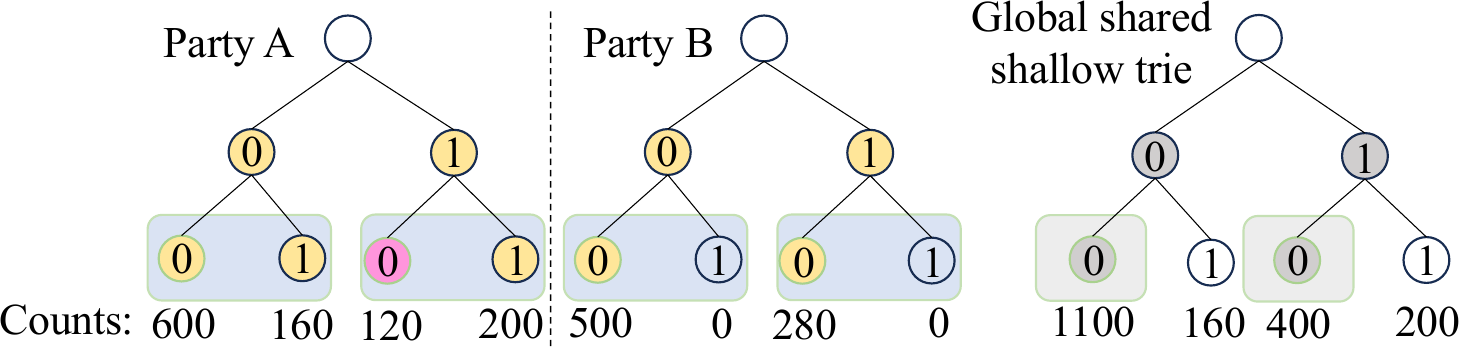}
    }\vspace{-0.1in}
    \hfill
    \subfloat[An example of the adaptive trie extension.]{
    \includegraphics[width=0.8\columnwidth]{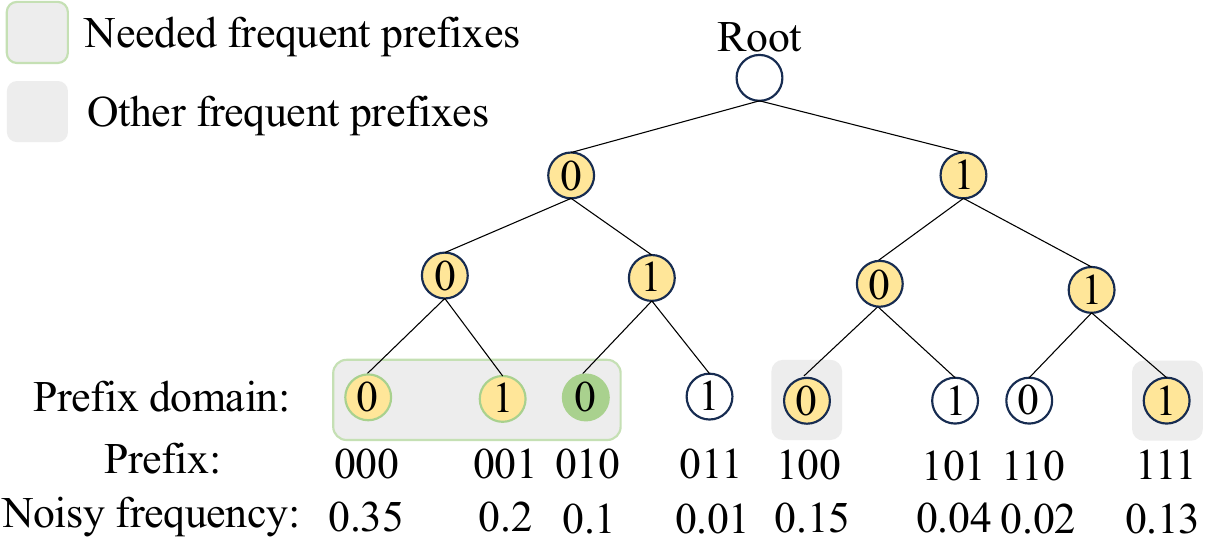}
    }
    \caption{Toy examples of the TAP mechanism.
    \vspace{-0.1in}
    }
    \label{fig_examples}
\end{figure}

As our objective is to increase the cover rate of necessary prefixes at each level, it is natural to estimate how far the anchor prefix can drift under the LDP noise. Such an estimation, along with the anchor $k^*$, is considered an approximated worst case for the extension number that covers the most infrequent prefix among the needed prefixes. Since an estimated frequency can be approximated by a Gaussian distribution~\cite{WLJ2019}, let $X_{k^*}$ and $X_{k^*+x}$ denote the frequency variables of the $k^*$-th and $(k^*+x)$-th observed popular prefixes, respectively. We can calculate the drift distance $\eta$ as follows:

\begin{equation}\label{eq10}
    \eta = \min(k, \mathbb{E}(x)), 
\end{equation}
where the $\mathbb{E}(x)$ is the expectation of $x$, derived as 
\begin{align*}
    & \mathbb{E}(x) = \sum\nolimits_{x=k^*-k+1}^{\min(k, \pi_{p}^{i} - k)} x \cdot \Pr(X_{k^*} \leq X_{k^*+x}), \\
    & \Pr(X_{k^*} \leq X_{k^*+x}) = \Pr(X_{k^*} - X_{k^*+x} \leq 0) \\ & \approx \frac{1}{\sqrt{4\pi}} \int_{-\infty}^{0} \frac{1}{\sigma}\exp\left( -\frac{(t-(\hat{f}_{k^*} - \hat{f}_{k^*+x}))^2}{4\sigma^2} \right) dt,
\end{align*}
where $\hat{f}_{k^*}$ and $\hat{f}_{k^*+x}$ are the observed noisy $k^*$-th and $(k^*+x)$-th frequencies and $\sigma$ denotes the standard deviation of the FO used for frequency estimation. 
In the above equations, we approximately calculate the expectation that the $k^*$-th prefix drifts to the $(k^*+\eta)$-th position under LDP noise. Specifically, we approximate the means of the frequency variables $X_{k^*}$ and $X_{k^*+x}$ using the noisy $k^*$-th and $(k^*+x)$-th frequencies, setting the standard deviation according to the specific FO. Then we simulate the perturbation and calculate the probability that $X_{k^*}\leq X_{k^*+x}$. Following this, we derive the expectation and set the drift distance to $\eta$ accordingly. Note that we bound $\eta$ that cannot exceed $k$ since the drift distance should not be too large. Ultimately, we determine the extension number as $t=k^*+\eta$. Thus, the current party can extend these top $t$ frequent prefixes to a longer length for further estimation. 
For example in Figure~\ref{fig_examples}(b), consider the $h$-level prefix domain as $\{000, 001, 010, 011, 100, 101, 110, 111\}$ and the query $k=4$. The necessary prefixes, namely, those of the top $k$ heavy hitters at the $h$-th level, are 000, 001, and 010.  After sorting the frequencies, we observe that the most frequent $k=4$ prefixes are $000$, $001$, $100$, and $111$ in yellow, excluding the needed prefix 010, shown in green, which is ranked below $4$ due to the LDP noise. 
In contrast to using a fixed extension number $t=k=4$, which would overlook the necessary prefix, our adaptive strategy sets $t=k^*+\eta=5$, thus encompassing the required ones.

\subsection{Putting Things Together}
\label{subsec:tap mechanism}
We now summarize the entire pipeline of the proposed TAP mechanism. As shown in Algorithm~\ref{TAP}, each party constructs the shared shallow trie to obtain the global frequent prefixes at the $g_s$-th level (line 1), as described in Algorithm~\ref{TAP_phase_I}. Then, each party continues the estimation process based on $C_{g_s}$ (line 2). They extend these prefixes and use the corresponding users from each group to estimate the longer popular prefixes, repeating this procedure until they ultimately report the local heavy hitters at the $g$-th level to the server (lines 3-5). Note that the extensions are always adaptive. The server then derives the federated heavy hitters by counting (line 6). 

The TAP mechanism utilizes the adaptive trie extension and the shared shallow trie construction to enhance the consistency of the local and global targets, enabling the identified local heavy hitters to provide much more valuable and reasonable statistical information for mining global heavy hitters across parties. The former can be applied within each party and relies on the current frequency distribution. The latter provides a warm start for each party by integrating global constraints extracted from all parties. These constraints can help participants collect more useful prefixes by disregarding certain prefixes that may appear frequent at initial levels within specific parties but are infrequent on a global view.

\begin{algorithm}[t]
  \footnotesize
  \caption{Target-Aligning Prefix Tree Mechanism (TAP)}\label{TAP}
  \KwIn{Parties $\mathcal{P}$, maximum binary length $m$, granularity $g$, shared trie level $g_s$, query $k$, and privacy budget $\epsilon$}
  \KwOut{Top-$k$ federated heavy hitters $R^k$}
  \tcp{Phase I: shared shallow trie construction}
  $C_{g_s}=STC(\mathcal{P},m,g,g_s,k,\epsilon)$\;

  \tcp{Phase II: independent estimations with a warm start}
  
  \For{$P_i \in \mathcal{P}$}{
  	Based on the remaining $g-g_s$ groups, $P_i$ constructs candidate domain in each level $h\in \{g_s+1,..., g\}$ for estimatation, i.e., 
  	$\Lambda^i_h$ = \texttt{Construct}($l_h, l_{h-1}, C^i_{h-1}$), 
  	
  	$(C^i_{g}, \mathcal{I}^i_{g})$ = \texttt{Estimate}($l_h, \epsilon, \Lambda^i_h, U^i_h$)
  	
  	$P_i$ uploads local heavy hitters $C^i_{g}$ and counts $\mathcal{I}^i_{g}$ to server\;
  }

  The server derives the top-$k$ federated heavy hitters $R^k$ from $\{C^1_{g}, ..., C^{|\mathcal{P}|}_{g}\}$ based on $\{\mathcal{I}^1_{g}, ..., \mathcal{I}^{|\mathcal{P}|}_{g}\}$\;
  
  \Return  $R^k$\;
\end{algorithm}

\subsection{Privacy and Utility Analysis}
\label{subsec:privacy}
We first prove that the proposed TAP mechanism satisfies $\epsilon$-LDP.

\begin{theorem}\label{theorem4}
    The TAP mechanism satisfies $\epsilon$-LDP if it uses an FO that satisfies $\epsilon$-LDP.
\end{theorem}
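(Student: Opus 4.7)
The plan is to reduce the whole protocol to a single invocation of the $\epsilon$-LDP FO per user and then invoke the post-processing theorem stated in Section~\ref{subsec:ldp}. Two structural features of TAP make this reduction work, and I would establish them in order. \emph{First}, the parties in $\mathcal{P}$ have pairwise disjoint user sets by the assumption in Section~\ref{subsec:problem}, so an arbitrary user $u_j\in U^i$ participates only through her party $P_i$. \emph{Second}, inside $P_i$ the partition of $U^i$ into groups $\{U^i_1,\dots,U^i_g\}$ is uniformly random and thus independent of $u_j$'s item $x_j$, and the \texttt{Estimate} procedure in Algorithm~\ref{TAP_phase_I} has each user in a group apply the given FO exactly once, to a deterministic function (the $l_h$-bit prefix) of her raw item. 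Hence throughout the joint execution of Algorithms~\ref{TAP_phase_I} and~\ref{TAP}, user $u_j$ transmits only one message, and that message is the output of an $\epsilon$-LDP mechanism evaluated at $x_j$.

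Next I would argue that every other quantity produced, broadcast, or stored by the system is a randomized function of noisy reports alone, never of any raw $x_j$. Concretely: the candidate domain $\Lambda^i_h$ used by group $h$ is built by \texttt{Construct} from $C^i_{h-1}$, which depends only on the noisy counts from groups $1,\dots,h-1$; the globally aggregated set $C_{g_s}$ broadcast at step~\textcircled{\small 6} is a function of $\{C^1_{g_s},\dots,C^{|\mathcal{P}|}_{g_s}\}$ together with their reported counts; the adaptive rule in Equations~\ref{eq9}--\ref{eq10} consumes only the observed noisy frequencies $\hat f_j$; and the final server aggregation sees only $\{C^i_g,\mathcal{I}^i_g\}_{i}$. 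Therefore the complete transcript visible to the server and every party can be written as $\mathcal{F}(y_j,z_{-j})$, where $y_j$ is $u_j$'s single FO output on her prefix and $z_{-j}$ is independent of $x_j$. Applying post-processing yields the $\epsilon$-LDP bound for $u_j$, and since $u_j$ was arbitrary it holds for every user.

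The subtle point I expect to be the main obstacle, and would therefore write as an explicit lemma, concerns the feedback loop through which $C_{g_s}$ reshapes the candidate domains $\Lambda^i_h$ for later groups: one might worry that privacy budgets should compose across the $g$ levels. The resolution is that for any single user this dependence is one-sided. If $u_j$ belongs to one of the first $g_s$ groups then her entire privacy expenditure has already been the one FO call, and every reuse of the resulting aggregate is strict post-processing on her view. If instead $u_j$ belongs to a later group $U^i_h$, then $\Lambda^i_h$ is fully determined by reports of other users \emph{before} $u_j$ randomizes, so conditional on the protocol history up to that point her randomizer is still a single $\epsilon$-LDP FO call, now over a data-independent (relative to $x_j$) domain. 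Formalizing this conditional-independence argument collapses the apparent multi-round composition to one invocation per user, and Theorem~\ref{theorem4} follows immediately from the post-processing property.
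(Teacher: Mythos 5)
Your proposal is correct and follows essentially the same route as the paper's proof: each user makes exactly one call to the $\epsilon$-LDP FO on a deterministic function (her prefix) of her item, the group index is drawn independently of her data so the ratio of output probabilities is bounded by the FO's guarantee alone, and all candidate-domain construction and aggregation are post-processing of already-sanitized reports. Your explicit treatment of the feedback loop (conditioning on the protocol history so that the adaptive domain is data-independent relative to the reporting user) is a slightly more careful articulation of what the paper dispatches by directly invoking the post-processing theorem, but it is the same argument.
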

\begin{proof} 
    In the TAP mechanism, each party randomly divides its users into $g$ groups, which does not raise privacy concerns. Each user in a group receives a candidate domain for perturbation and, given the privacy budget $\epsilon$ , perturbs their value via an FO. We first prove that the perturbation within a party satisfies $\epsilon$-LDP, and then show that candidate domain construction does not consume any additional privacy budget.
    
    For each user in the $h$-th group of a party, the report includes a level index $h$, which is decided independent of her private value, and the noisy prefix $\lambda_h$ perturbed by an FO, denoted by $\mathcal{M}(\cdot)$, with privacy budget $\epsilon$. For any two inputs $x_1, x_2 \in \mathcal{X}$, and for any specific tuple $\langle h, y\rangle \in \text{Range}(\text{TAP})$, we have: $ \max_{x_1,x_2,\langle h, y\rangle} \frac{\Pr[\text{TAP}(x_1) = \langle h, y\rangle]}{\Pr[\text{TAP}(x_2) = \langle h, y\rangle]} = \max_{x_1,x_2,\langle h, y\rangle} \frac{\Pr[\mathcal{M}(\lambda^1_h) = y]\cdot \Pr[h]}{\Pr[\mathcal{M}(\lambda^2_h) = y]\cdot \Pr[h]} = \max_{x_1,x_2,\langle h, y\rangle} \frac{\Pr[\mathcal{M}(\lambda^1_h) = y]}{\Pr[\mathcal{M}(\lambda^2_h) = y]} \leq e^\epsilon, $
    where $\lambda^i_h$ denotes the prefix of $x_i$ at the $h$-th level. 
    $Pr[h]=1/g$ is the probability that a user falls into the $h$-th group, which is chosen randomly and independent of the private value. 
    The last inequality holds because the FO $\mathcal{M}(\cdot)$ used for perturbation satisfies $\epsilon$-LDP.

    In Phase I, the candidate domain is constructed independently at the first level to ensure privacy. For levels $1 \leq h \leq g_s$, domains are derived by extending the top $t$ frequent prefixes from the $(h-1)$-th level, where $g_s$ is determined independently. Since frequencies are perturbed with an FO satisfying $\epsilon$-LDP, the resulting frequency distribution benefits from the post-processing theorem. Other parameters in Equations~\ref{eq9} and ~\ref{eq10} (e.g., user count and FO variance) are non-private, so determining $t$ also preserves LDP. Consequently, candidate domain construction at subsequent levels in Phase I also benefits from the post-processing theorem. 
    In Phase II, the $(g_s+1)$-th level domain is extended based on sanitized global results at the $g_s$ level. The privacy guarantee for levels $g_s+1 \leq h \leq g$ is consistent with that of Phase I. Thus, we complete the proof.
\end{proof}

We also provide a utility analysis on the adaptive extension. The potential failure of our adaptive extension would occur if $t$ is consistently set to the same constant at all levels. Therefore, we analyze the probability under the worst case of when the adaptive extension consistently selects a constant across all levels.

\begin{theorem}
Given the query $k$, the upper bound of the probability of event $A$ that the adaptive extension strategy always sets the extension number $t$ at $g$ iterations to the same constant $c$ is as follows:
\begin{equation}\notag
    Pr[A]\leq{(P_x)}^{g},
\end{equation}
where $P_x=Pr[\phi(\frac{-\delta_f}{2\sigma}) > \frac{2\sqrt{\pi}}{3k+1}]$, $\phi(\cdot)$ denotes the CDF of the Gaussian distribution, $\delta_f$ denotes the maximum difference between two neighboring frequencies of the last $2k-k^*$ prefixes/items out of the frequent $2k$, and $\sigma$ is the variance of the FO.
\end{theorem}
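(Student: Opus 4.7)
The plan is to exploit the conditional independence of the noisy estimation across the $g$ iterations (levels) and reduce the theorem to a single-level tail bound, which can then be controlled via the Gaussian approximation already used in deriving Equation~\ref{eq10}.

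First, I would note that within each party the users are partitioned into $g$ disjoint groups uniformly at random, and the noisy prefix reports at level $h$ depend only on the items of group $h$ together with the candidate domain $\Lambda_h$ handed down from level $h-1$. Conditioned on the chain of candidate domains, the noisy frequency vectors $\hat F_1,\ldots,\hat F_g$ used to compute $t_h=k^{\ast}_h+\eta_h$ are mutually independent, because they are functions of disjoint user groups. Consequently, letting $A_h$ denote the single-level event $\{t_h=c\}$, event $A$ factorises as $\Pr[A]=\prod_{h=1}^{g}\Pr[A_h]$, and it suffices to prove $\Pr[A_h]\le P_x$ uniformly in $h$.

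Next, I would translate the equality $t_h=c$ into a tail event on the comparison probabilities appearing in the definition of $\eta$. Conditional on a value of $k^{\ast}$, requiring $\eta=c-k^{\ast}$ forces the expectation $\mathbb{E}(x)=\sum_x x\cdot\Pr(X_{k^{\ast}}\le X_{k^{\ast}+x})$ to attain a specific integer value. The sum ranges over at most $2k-k^{\ast}$ positions (the prefixes among the top $2k$ that follow the anchor), so a Markov-style pigeonhole argument shows that at least one of the comparison probabilities must be at least $2\sqrt{\pi}/(3k+1)$. Plugging in the Gaussian CDF expression from the derivation of Equation~\ref{eq10}, each such probability is upper-bounded by its value at the smallest neighbouring gap, i.e.\ by $\phi(-\delta_f/2\sigma)$, where $\delta_f$ is the maximum gap between consecutive estimated frequencies among the last $2k-k^{\ast}$ positions. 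Combining these two observations yields $\Pr[A_h]\le\Pr[\phi(-\delta_f/2\sigma)>2\sqrt{\pi}/(3k+1)]=P_x$, and multiplying across the independent levels gives $\Pr[A]\le(P_x)^g$.

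The main obstacle will be the middle step: cleanly turning the equality $t=c$ into a single-comparison tail event. The sum defining $\mathbb{E}(x)$ mixes several \emph{correlated} Gaussian comparisons and also depends on the data-driven choice of $k^{\ast}$ through Equation~\ref{eq9}, so the pigeonhole reduction has to be done after conditioning on $k^{\ast}$, and one must argue that the binding comparison is precisely the one realising the smallest frequency gap $\delta_f$. I expect most of the technical effort to lie in making this worst-case identification rigorous and verifying that the per-level bound $P_x$ does not depend on $h$; the surrounding steps (independence factorisation and Gaussian CDF substitution) are then essentially bookkeeping.
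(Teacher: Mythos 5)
Your proposal follows essentially the same route as the paper's proof: bound the per-iteration probability and raise it to the power $g$ (the paper leaves the independence across the $g$ disjoint user groups implicit, which you make explicit), condition on $k^{\ast}$, and convert the requirement $\mathbb{E}(x)=c-k^{\ast}$ into the tail event $\phi(-\delta_f/(2\sigma))>2\sqrt{\pi}/(3k+1)$ --- your ``Markov-style pigeonhole'' step is precisely what the paper does by substituting the single value $\phi(-\delta_f/(2\sigma))$ for every comparison probability and solving $\frac{(c-k^{\ast})(1+k^{\ast}+c)}{2\sqrt{\pi}}\phi(-\delta_f/(2\sigma))=c-k^{\ast}$, then using $k^{\ast}\leq k$ and $c\leq 2k$. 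The only piece you should fold in explicitly is the paper's second case, where $\mathbb{E}(x)>k$ and the truncation $\eta=\min(k,\mathbb{E}(x))=k$ binds, but that case reduces to the same threshold, so the two arguments coincide.
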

\begin{proof}
For the extension number $t$ at each iteration, it is set to the same constant $c$ when the selected $k^*+\eta= c$ (we assume $k^*\leq c$ has been determined since it highly depends on the distribution), that is, $Pr(k^*)\cdot Pr[min(k, \mathbb{E}(x)) = c - k^*]$. 
When $\mathbb{E}(x)\leq k$:
\begin{footnotesize}
\begin{flalign*}
    &\Pr(k^*=c_1)\cdot \Pr[min(k, \mathbb{E}(x)) = c - c_1]\\
    &=\Pr(k^*=c_1)\cdot \Pr[\mathbb{E}(x)\leq k]\cdot \Pr[\mathbb{E}(x) = c - c_1]\\
    &=\Pr(k^*=c_1)\cdot \Pr[\mathbb{E}(x)\leq k]\cdot \Pr[\frac{(c-c_1)\cdot(1+c_1+c)}{2\sqrt{\pi}}\cdot\phi(\frac{-\delta_f}{2\sigma}) = c - c_1]\\
    &=\Pr(k^*=c_1)\cdot \Pr[\mathbb{E}(x)\leq k]\cdot \Pr[\frac{(1+c_1+c)}{2\sqrt{\pi}}\cdot\phi(\frac{-\delta_f}{2\sigma}) = 1]\\
    &=\Pr(k^*=c_1)\cdot \Pr[\mathbb{E}(x)\leq k]\cdot \Pr[\phi(\frac{-\delta_f}{2\sigma}) = \frac{2\sqrt{\pi}}{1+k^*+c}]\\
    &\leq \Pr(k^*=c_1)\cdot \Pr[\mathbb{E}(x)\leq k]\cdot Pr[\phi(\frac{-\delta_f}{2\sigma}) \geq \frac{2\sqrt{\pi}}{1+k+c}]\\
    &\leq Pr[\phi(\frac{-\delta_f}{2\sigma}) \geq \frac{2\sqrt{\pi}}{1+k+c}]\leq\Pr[\phi(\frac{-\delta_f}{2\sigma}) > \frac{2\sqrt{\pi}}{1+3k}]; 
\end{flalign*}
\end{footnotesize}
when $\mathbb{E}(x) > k$: 
\begin{footnotesize}
\begin{align*}
    &\Pr(k^*=c_1)\cdot \Pr[min(k, \mathbb{E}(x)) = k]=\Pr(k^*=c_1)\cdot \Pr[\mathbb{E}(x) > k]\\
    &=\Pr(k^*=c_1)\cdot \Pr[\frac{(c-c_1)\cdot(1+c_1+c)}{2\sqrt{\pi}}\cdot\phi(\frac{-\delta_f}{2\sigma}) > k]\\
    &=\Pr(k^*=c_1)\cdot \Pr[\phi(\frac{-\delta_f}{2\sigma}) > \frac{2k\sqrt{\pi}}{(c-c_1)\cdot(1+c_1+c)}]\\
    &\leq \Pr(k^*=c_1)\cdot \Pr[\phi(\frac{-\delta_f}{2\sigma}) > \frac{2k\sqrt{\pi}}{k\cdot(1+k+c)}]\\
    &=\Pr(k^*=c_1)\cdot \Pr[\phi(\frac{-\delta_f}{2\sigma}) > \frac{2\sqrt{\pi}}{1+k+c}]\\
    &\leq\Pr[\phi(\frac{-\delta_f}{2\sigma}) > \frac{2\sqrt{\pi}}{1+k+c}]\leq\Pr[\phi(\frac{-\delta_f}{2\sigma}) > \frac{2\sqrt{\pi}}{1+3k}].
\end{align*}
\end{footnotesize}
Since we have $g$ iterations in total, we can obtain $Pr[t\geq c]\leq{(P_x)}^{g}$, where $P_x=Pr[\phi(\frac{-\delta_f}{2\sigma}) > \frac{2\sqrt{\pi}}{3k+1}]$. Thus, we complete the proof.
\end{proof}

We observe that when the query $k$, the maximum difference between two neighboring frequencies among the least frequent $2k-k^*$ items in the top $2k$, and the variance of the used FO are small, the probability of always setting $t$ to a same constant at different iterations is very small.

\section{ Consensus-Pruned Target-Aligning Prefix Tree Mechanism}
\label{Optimization}
Although TAP mitigates the adverse effects of data heterogeneity and inconsistency issues via shared shallow trie construction and adaptive trie extension strategies, the non-IID nature of the data still complicates the estimation and aggregation processes. In phase II, each party independently continues to identify heavy hitters. Unfortunately, the local heavy hitters identified within a party may not contribute to identifying the global ones, which also incurs inconsistencies between local and global objectives.  
To this end, in this section, we propose the \underline{t}arget-\underline{a}ligning \underline{p}refix tree mechanism with the consensu\underline{s}-based pruning strategy (TAPS).

\subsection{Sequential Estimation}
\label{subsec:different_workflow}
The TAPS mechanism integrates TAP and a novel consensus-based pruning strategy that further enhances the target alignment across parties. This mechanism also consists of two phases. Similar to TAP, after phase I, the server first broadcasts the global top $k$ frequent prefixes via the shared shallow trie (steps 1--6 in Figure~\ref{fig_overview_tap}). Then, in phase II, instead of identifying local heavy hitters independently, we propose our consensus-based pruning strategy. 
The motivation for this pruning strategy is that although TAP utilizes both global aggregated constraints and local frequency information in phase I to mitigate the inconsistencies, estimations in phase II do not consider the impact of data distributions from other parties.

Intuitively, such prior knowledge can assist the current party in filtering out some infrequent prefixes before its estimation, which reduces the domain size to reduce the LDP noises, thus leading to more accurate estimations. 
However, it is infeasible for a party's pruning procedure to receive assistance from all other parties simultaneously, as such information would necessitate completing the entire estimation or incur prohibitive communication costs. Therefore, we optimize the workflow in phase II of TAP to a sequential order, which circumvents this issue and mitigates the accumulation of errors. 
Since federated heavy hitter analytics represents a variant of the frequency estimation problem, we believe that the party with a larger user base is likely to contribute more significantly to pruning. Therefore, we sort the parties in descending order based on user population sizes. We use $\mathcal{P}^R$ to denote the sorted one, and each party sequentially performs the estimation in this order.

Assume the entire candidate domain of current level consists of all leaves in Figure~\ref{fig_overview}. Each of the $i$-th party $P_i$ initially receives some candidates from the server (sent by $P_{i-1}$, which has a larger user population) that may not need to be pruned entirely in the current party (step 6). Then, $P_i$ employs a consensus-based test to validate these candidates and prunes the necessary ones (steps 7--8). Subsequently, it proceeds with the heavy hitter identification process within the pruned domain and collects pruning candidates of this level for the next party $P_{i+1}$. For the first party with the most users but without previous input, it independently performs the estimation. 
Upon completion of the identification process by $P_i$, it forwards the local heavy hitters and candidates for pruning to the server. The server then sends these candidates to $P_{i+1}$, initiating a repeat of this process. Note that adaptive trie extension is always employed during the frequency estimation by each party and level. Once the server obtains the local heavy hitters from the last party, it aggregates these findings to determine the federated heavy hitters. 

\subsection{Consensus-based Pruning Strategy}
\label{subsec:pruning_strategy}

\begin{figure}[t]
    \centering
    \includegraphics[width=\columnwidth]{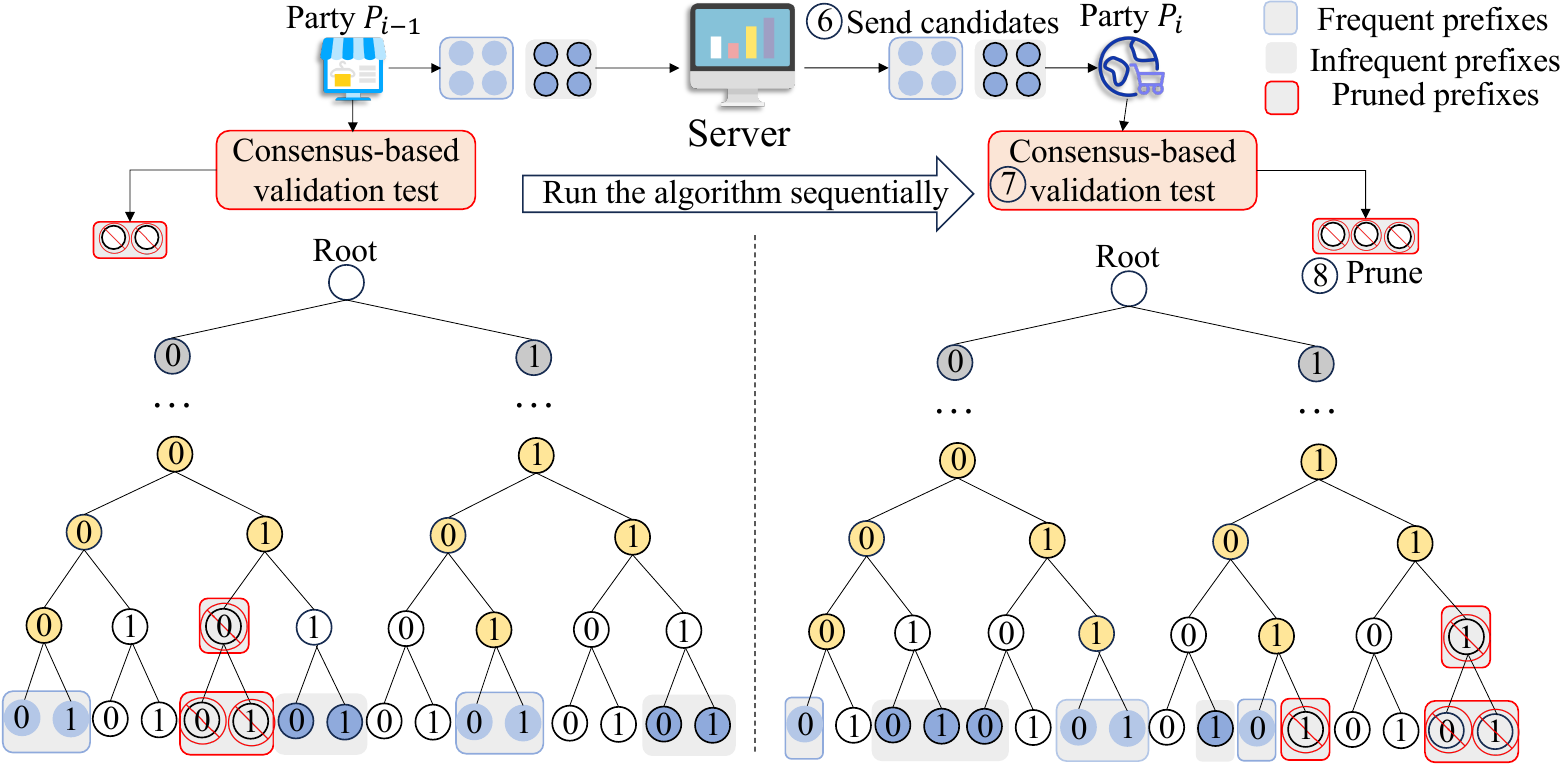}
    \vspace{-0.2in}
    \caption{Consensus-based Pruning Strategy.}
    \vspace{-0.1in}
    \label{fig_overview}
\end{figure}

Due to the non-IID issue, the infrequent prefixes which can be pruned include two types. The first type refers to the globally infrequent items, which cannot be the targets. The second type includes items that are extremely infrequent in the current party (although it may be frequent in other parties), which contributes little to frequency estimation but increases LDP noise scale.

\begin{algorithm}[t]
    \footnotesize
    \caption{Target-Aligning Prefix Tree Mechanism with the Consensus-based Pruning Strategy (TAPS)}\label{TAPS}
    \KwIn{Parties $\mathcal{P}$, maximum binary length $m$, granularity $g$, shared level $g_s$, dividing ratio $\beta$, query $k$, and privacy budget $\epsilon$}
    \KwOut{Top-$k$ federated heavy hitters $R^k$}
    \tcp{Phase I: shared shallow trie construction}
    $C_{g_s}=STC(\mathcal{P},m,g,g_s,k,\epsilon)$\;

    \tcp{Phase II: continue the construction with pruning}
    Sort the parties in a descending order as $\mathcal{P}^R$\;

    \For{$P_i \in \mathcal{P}^R$}{
      $P_i$ initiates the pruning dictionary $D_i=\{\}$ and $C^i_{g_s}=C_{g_s}$\;

      \For{$g_s+1\leq h\leq g$}{
      $P_i$ initiates $\tilde{U}^i_h=U^i_h$ and $\hat{\Lambda}^{i-1,i}_{h}=\emptyset$\;
      \uIf{$g-g_s\leq h\leq g \text{ or } g_s+1\leq h\leq 2g_s$}{
          \uIf{$i>1$}{
        $P_{i}$ divides the users at $h$-th level into $\hat{U}^{i}_{h,0}$= $\hat{U}^{i}_{h,1}=\beta\cdot U^i_h$, and $\tilde{U}^i_h=U^i_h-\hat{U}^{i}_{h,0}-\hat{U}^{i}_{h,1}$\;
        $P_{i}$ asks users in $\hat{U}^{i}_{h,0}$ and $\hat{U}^{i}_{h,1}$ to estimate the frequencies of pruning candidates in $D_{i-1}[h]$\;
        $P_{i}$ obtains the pruning set $\hat{\Lambda}^{i-1,i}_{h}=\hat{\Lambda}^{i-1,i}_{h,0}\cup\hat{\Lambda}^{i-1,i}_{h,1}$ by Equations~\ref{eq_k^prime}--\ref{eq_k_prime_intersection_1}\;
        
      }
      $\Lambda^i_h$=\texttt{Construct}($l_h, l_{h-1}, C^i_{h-1}$)\;
      $(C^i_h,\mathcal{I}^i_h)=$\texttt{Estimate}($l_h, \epsilon, h, \Lambda^i_h-\hat{\Lambda}^{i-1,i}_{h}, \tilde{U}^i_h$)\;
      \uIf{$i<|\mathcal{P}^R|$}{
        Select $D_i[h]=\Delta^{i}_{h}$, where $\Delta^{i}_{h}$ is from Equation~\ref{eq_delta}\;
      }
          }\Else{ 
          $\Lambda^i_h$=\texttt{Construct}($l_h, l_{h-1}, C^i_{h-1}$)\; $(C^i_h,\mathcal{I}^i_h)=$\texttt{Estimate}($l_h, \epsilon, h, \Lambda^i_h, \tilde{U}^i_h$)\;
          }
          }
          \uIf{$D_{i}.keys\neq\emptyset$}{
          $P_i$ uploads $D_i$ to the server\;
          \uIf{$i<|\mathcal{P}^R|$}{The server sends $D_i$ to the next party\;}
          
          }
        $P_i$ uploads local heavy hitters $C^i_{g}$ and counts $\mathcal{I}^i_g$ to server\;
      }
  The server derives the top-$k$ federated heavy hitters $R^k$ from $\{C^1_{g}, ..., C^{|\mathcal{P}|}_{g}\}$ based on $\{\mathcal{I}^1_{g}, ..., \mathcal{I}^{|\mathcal{P}|}_{g}\}$\;
  \Return  $R^k$\;
\end{algorithm}

For the $(i-1)$-th party $P_{i-1}$, it needs to submit candidates of two types at the $h$-th level~\footnote{Here we take the simplest example that only applies this pruning at the $h$-th level for ease of understanding.}. It first completes the estimation (based on the pruned domain) and obtains the noisy prefixes $C^{i-1}_h$ (sorted by the descending frequencies) along with their estimated frequencies. Then it selects the pruning candidates as follows:
	\begin{flalign}\label{eq_delta}
		\Delta^{i-1}_h&=\{\Delta^{i-1}_{h,0},\Delta^{i-1}_{h,1}\}, \\
		\Delta^{i-1}_{h,0}&=\{\lambda^j| Rank(f_j)=|C^{i-1}_h|-j, \lambda^j\in C^{i-1}_h, 1\leq j\leq 2k\}, \nonumber\\
		\Delta^{i-1}_{h,1}&=\{(\lambda^j, f_j)| Rank(f_j)=j, \lambda^j\in C^{i-1}_h, 1\leq j\leq 2k\}, \nonumber
	\end{flalign}
where $\Delta^{i-1}_{h,0}$ and $\Delta^{i-1}_{h,1}$ represent the infrequent and frequent candidate sets of $P_{i-1}$ at $h$-th level, $f_j$ denotes the frequency of the prefix $\lambda^j$, and $Rank(f_j)=j$ means $f_j$ is $j$-th largest in $C^{i-1}_h$.  
Inspired by~\cite{WLJ2018}, we set $|\Delta^{i-1}_{h,0}|$ and $|\Delta^{i-1}_{h,1}|$ to $2k$. Because the goal here is to provide the next party with both exceptionally frequent and infrequent pruning candidates (specifically, the first $k$ out of $2k$), thereby necessitating the inclusion of weaker candidate prefixes (the remaining ones in each $2k$) to introduce some randomness for the consensus-based validation test in $P_i$. Note that $|C^{i-1}_h|\succ2k$ if $t$ is not too small, since $|C^{i-1}_h|=t\cdot2^{l_h-l_{h-1}}$, where $l_h =\lceil \frac{h \times m}{g} \rceil$. Then $\Delta^{i-1}_h$ is sent to the server, who forwards it to $P_i$. 

For $P_i$, the consensus-based validation test is invoked to filter the candidates for pruning. The received $\Delta^{i-1}_h$ is regarded as the pruning predictions from $P_{i-1}$, which were estimated using most of its users $U^{i-1}_h$ at level $h$ that can be treated as the ``training set''. Thus, $P_i$ validates $\Delta^{i-1}_h$ using a small fraction of its users $U^i_h$ at level $h$, acting as the validation set for the consensus-based validation test. It then prunes the prefix domain for the `test set' (i.e., the remaining majority of users in $U^{i}_h$). 
Concretely, as shown in Algorithm~\ref{TAPS}, $P_i$ requires each user in $\hat{U}^{i}_{h,0}$ (resp. $\hat{U}^{i}_{h,1}$) to report her noisy prefix via an FO, then estimates the frequency of all prefixes in $\Delta^{i-1}_{h,0}$ (resp. $\Delta^{i-1}_{h,1}$) with the $l_h$ bits length (lines 9--10). 
Here, $\hat{U}^{i}_{h,0}=\hat{U}^{i}_{h,1}=\beta\cdot U^i_h$ represents the validation users of $P_i$ for $\Delta^{i-1}_{h,0}$ and $\Delta^{i-1}_{h,1}$, respectively, where $\beta$ is the dividing ratio. 
$P_i$ then sorts the results by frequencies in ascending order to obtain $\hat{\Delta}^{i-1,i}_{h,0}$ and $\hat{\Delta}^{i-1,i}_{h,1}$.

Although $P_i$ possesses the validated results, it is non-trivial to independently determine which prefixes should be pruned, since the entire frequency distribution at the $h$-th level is unknown. This uncertainty makes it difficult to discern the boundary between infrequent and frequent prefixes. While one could establish a fine-tuned threshold for pruning, the frequencies may be subject to instability due to the sampling errors from grouping users and the LDP noise. Moreover, the threshold might vary based on the underlying distributions, which differ across parties. Consequently, we rely on the rankings in both $P_{i-1}$ and $P_i$, rather than using the frequencies directly, to adaptively decide the consensus pruning set, which mitigates the aforementioned issues. For the first type of infrequent prefixes (those are globally infrequent), we can filter the pruning prefixes as follows:

\begin{flalign}
    &\underset{k^\prime}{\text{arg max}}\quad \frac{|\hat{\Lambda}^{i-1,i}_{h,0}|}{k^\prime \cdot (1 + \epsilon)^{k^\prime}} - \gamma \cdot \alpha^2,\label{eq_k^prime}\\
    &\hat{\Lambda}^{i-1,i}_{h,0} = \Delta^{i-1}_{h,0}[:k^\prime] \cap \hat{\Delta}^{i-1,i}_{h,0}[:k^\prime],\label{eq_k_prime_intersection}
\end{flalign}
where $1 \leq k^\prime \leq k$, $\gamma = \left(1 - \frac{|U^{i-1}|}{\sum\nolimits_{j=1}^{|\mathcal{P}|}|U^{j}|}\right)^2$, and $\alpha = \frac{k^\prime - |\hat{\Lambda}^{i-1,i}_{h,0}| + 1}{k^\prime + 1}$. In Equation~\ref{eq_k^prime}, the formula comprises two components: the intersection score and the penalty term. The intersection score, represented by $|\hat{\Lambda}^{i-1,i}_{h,0}|/k^\prime$, indicates the proportion of the first common $k^\prime$ prefixes in $\Delta^{i-1}_{h,0}$ and $\hat{\Delta}^{i-1,i}_{h,0}$. To minimize the inclusion of too many false positives, this score is adjusted by dividing by $ (1 + \epsilon)^{k^\prime}$, thus to refrain from always obtaining a large $k^\prime$ and provide the non-linear property. Because when $\epsilon$ is relatively small, which might result in inaccurate $\Delta^{i-1}_{h,0}$ and $\hat{\Delta}^{i-1,i}_{h,0}$ estimations, $k^\prime$ should not be too large. Regarding the penalty term, it consists of two factors. $\gamma$ denotes the population confidence from the previous party, calculated according to the users' population. The more users in the $P_{i-1}$, the smaller $\gamma$ value. The second factor, $\alpha$, accounts for the non-intersection ratio within the first $k^\prime$ prefixes of $\Delta^{i-1}_{h,0}$ and $\hat{\Delta}^{i-1,i}_{h,0}$. Through the $\arg \max$ operation, the chosen $k^\prime$ serves as an optimal boundary, thereby maximizing the consensus confidence of the infrequent prefixes ranked in first $k^\prime$ prefixes. 
Once the $k^\prime$ is determined, the intersection $\hat{\Lambda}^{i-1,i}_{h,0}$ becomes the consensus pruning prefix set for the first type. This set includes the infrequent prefixes selected by both $P_{i-1}$ and $P_i$, achieving an optimal trade-off between consensuses and differences in $\hat{\Lambda}^{i-1,i}_{h,0}$, as defined by Equation~\ref{eq_k^prime}.  

For the second type of infrequent prefixes, i.e., extremely infrequent or even absent in the current party but may be frequent in other parties, we introduce the frequency contrast scores as the measurement:

\begin{small}
\begin{equation}\label{eq_frequency_contrast_set}
    \hat{\Delta}^{i-1,i}_{h,c}\!=\!\{(\lambda^j, f^c_j)| Rank(f^c_j)\!=\!j, f^c_j\!=\!\varphi(\lambda_j), \lambda^j \! \in \! \Delta^{i-1}_{h,1}, 1\leq j\leq 2k\}, 
\end{equation}
\end{small}
where $\varphi(\lambda_j)=\frac{\Delta^{i-1}_{h,1}(\lambda_j)}{\hat{\Delta}^{i-1,i}_{h,1}(\lambda_j)+\tau}$, and $\tau=1\times 10^{-11}$ is to avoid dividing by zero, $\Delta^{i\!-\!1}_{h,1}(\lambda_j)$ (resp. $\hat{\Delta}^{i\!-\!1,i}_{h,1}(\lambda_j)$) denotes the frequency of the prefix $\lambda_j$ in $\Delta^{i\!-\!1}_{h,1}$ (resp. $\hat{\Delta}^{i\!-\!1,i}_{h,1}$). 
In Equation~\ref{eq_frequency_contrast_set}, we initially calculate the frequency contrast score for each $\lambda_j\in \Delta^{i-1}_{h,1}$. This score quantifies the frequency discrepancy of $\lambda_j$ between $P_i$ and $P_{i-1}$. A higher $f^c_j$ suggests that the prefix $\lambda_j$ is more prevalent in $P_{i-1}$ but significantly less so in $P_i$. For instance, if the best-selling item is only sold by the smallest party with the fewest users, e.g., the frequency of item $X$ is $0.001$ in the previous party and $0.7$ in the current party (i.e., most sold), this score would be $0.001$, which is relatively low. Conversely, if item $Y$’s frequency is $0.7$ in the previous party and $0.001$ in the current party (i.e., almost nonexistent), the score would be $700$, making it much more easier to identify. 
Then we arrange $\hat{\Delta}^{i-1,i}_{h,c}$ in descending order and feed it with $\hat{\Delta}^{i-1,i}_{h,1}$ into Equation~\ref{eq_k_prime_intersection}:

\begin{equation}\label{eq_k_prime_intersection_1}
    \hat{\Lambda}^{i-1,i}_{h,1} = \hat{\Delta}^{i-1,i}_{h,c}[:k^\prime] \cap \hat{\Delta}^{i-1,i}_{h,1}[:k^\prime].
\end{equation}

By substituting $\hat{\Lambda}^{i-1,i}_{h,0}$ with $\hat{\Lambda}^{i-1,i}_{h,1}$, we solve Equation~\ref{eq_k^prime} to derive the pruning set of infrequent prefixes $\hat{\Lambda}^{i-1,i}_{h}=\hat{\Lambda}^{i-1,i}_{h,0}\cup\hat{\Lambda}^{i-1,i}_{h,1}$. Note that because of descending order, the items with lower scores should not be selected. 
The current party then prunes the candidate domain, performs the estimation, and uploads its pruning candidates to the server for the pruning in the next party at this level as well (lines 11--15 in Algorithm~\ref{TAPS}).

We finally analyze the communication and computation costs of TAPS mechanism in comparison with two baselines (e.g., GTF and FedPEM) and the (infeasible) method of directly uploading all noisy answers perturbed by an FO (i.e., OUE or OLH), to the central server. We assume that each pair of a prefix/item and its count consumes $b$ bits, $g^*$ represents the number of iterations where the pruning strategy is applied, $|\mathcal{U}|$ is the user population size, and $|\mathcal{X}|$ is the overall domain size. As shown in Table~\ref{table_communication_cost}, the baselines incur a communication cost of $O(b\cdot k\cdot |\mathcal{P}|)$ and a computation cost of $O(k\cdot |\mathcal{P}|)$, as each party uploads only the top $k$ local heavy hitters and the server only needs to count and sort them. The method of uploading user data (i.e., via OUE or OLH) for estimating heavy hitters is impractical due to the large user population and item domain, resulting in prohibitive communication costs increase linearly with $|\mathcal{U}|$ (for OUE), and excessive computation costs $O(|\mathcal{U}|\cdot|\mathcal{X}|)$ (for both OUE and OLH)~\cite{WLJ2019,CMM2021}. As a comparison, our mechanism TAPS incurs significantly less communication and computation costs. On the other hand, although TAPS still consumes a bit more communication costs~\footnote{The number of iterations $g^*$ usually selects a small value, \textcolor{black}{e.g., $g/2$}.}
than two baselines GTF and FedPEM, its utility performance is significantly better than the baselines, as will be shown in Section~\ref{experiment}. This can be attributed to our design of mitigating non-IID issues in federated settings. 

\subsection{Privacy Analysis}
\label{subsec:privacy_TAPS}
Finally, we establish the privacy guarantee of TAPS mechanism.

\begin{theorem}\label{theorem5}
    The TAPS mechanism satisfies $\epsilon$-LDP if it uses an FO that satisfies $\epsilon$-LDP.
\end{theorem}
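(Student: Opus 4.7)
The plan is to extend the argument used for Theorem~\ref{theorem4} (TAP) to account for the additional sequential party-to-party interaction and the consensus-based pruning step introduced in TAPS. The key observation is that, despite the extra machinery, each individual user still perturbs her private prefix exactly once with the underlying FO at privacy budget $\epsilon$, and every other computation consumes only already-sanitized quantities or public parameters, so the post-processing property (Theorem 1) closes the argument.

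First, I would decompose the analysis on a per-user basis. Each user $u \in U^i$ is assigned uniformly at random to one level-group $h$ and, in Phase~II, further to one of $\hat{U}^{i}_{h,0}$, $\hat{U}^{i}_{h,1}$, or $\tilde{U}^{i}_{h}$. Since these assignments are made independently of $u$'s value, the sub-index identifying her role is a public label. I would then fix $u$ and, paralleling the computation in Theorem~\ref{theorem4}, bound the ratio
\[
\max_{x_1,x_2,\langle h, r, y\rangle} \frac{\Pr[\text{TAPS}(x_1) = \langle h, r, y\rangle]}{\Pr[\text{TAPS}(x_2) = \langle h, r, y\rangle]} = \max \frac{\Pr[\mathcal{M}(\lambda^1_h) = y]}{\Pr[\mathcal{M}(\lambda^2_h) = y]} \leq e^{\epsilon},
\]
where $r \in \{0,1,\star\}$ is the validation/test role, and the group/role probabilities cancel because they are data-independent.

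Next, I would verify that every candidate domain offered to $u$ for perturbation is itself obtainable by post-processing of noisy outputs from other users. For Phase~I this reduces to the argument in Theorem~\ref{theorem4}. For Phase~II, the messages $D_{i-1}$ passed from $P_{i-1}$ to $P_{i}$ through the server are deterministic functions (Equation~\ref{eq_delta}) of $P_{i-1}$'s already-perturbed estimates; the pruning set $\hat{\Lambda}^{i-1,i}_{h}$ constructed via Equations~\ref{eq_k^prime}--\ref{eq_k_prime_intersection_1} uses only $D_{i-1}$ together with validation-group estimates that are themselves outputs of $\epsilon$-LDP FO invocations; and the shrunken domain $\Lambda^i_h \setminus \hat{\Lambda}^{i-1,i}_h$ inherits this status. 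The extra parameters (granularity, shared level, dividing ratio $\beta$, ordering $\mathcal{P}^R$, party sizes $|U^i|$) are treated as non-private meta-information.

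The main obstacle I expect is being precise about two subtleties. First, one must carefully trace the recursion for $D_i$ across the sequential order $\mathcal{P}^R$ to rule out any path along which a user's raw value is reused after her single FO report; the crucial fact is that within each party a user is assigned to exactly one of $\hat{U}^{i}_{h,0}$, $\hat{U}^{i}_{h,1}$, or $\tilde{U}^{i}_{h}$ at one level $h$, so no budget is split or double-spent. Second, the consensus tests in $P_i$ consume data of $P_i$'s \emph{own} validation users, but those users' reports are themselves $\epsilon$-LDP; the pruning decision is then a deterministic function of those reports together with $D_{i-1}$, so sending $D_i$ onward does not re-expose any fresh raw data. Once both points are pinned down, combining them with the independence of group assignments and the post-processing property yields the per-user $\epsilon$-LDP bound, and since distinct users hold disjoint data across $\mathcal{P}$, this guarantee lifts to the whole mechanism, completing the proof.
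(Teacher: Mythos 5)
Your proposal is correct and follows essentially the same route as the paper's proof: each user reports exactly once through the $\epsilon$-LDP FO with data-independent group/role assignment, and all candidate domains and pruning sets (including the sequentially passed $D_{i-1}$) are post-processing of already-sanitized reports. Your write-up is more explicit than the paper's (spelling out the per-user ratio with the role label and tracing the $D_i$ recursion), but the decomposition and the reliance on post-processing are identical.
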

\begin{proof}
    In the TAPS mechanism, the privacy guarantee during phase I remains consistent with that of TAP. In phase II, each party further splits each group of users at the first and last $g_s$ levels into two set, namely, $\beta\cdot U_h$ for the consensus pruning test and $U_h-\beta\cdot U_h$ for the main estimation, where $\beta$ is the independently dividing ratio. Each user within the respective set perturbs her value/prefix using an FO that satisfies $\epsilon$-LDP, based on a predefined candidate domain. Similar to TAP, all candidate domains are established based on the sanitized results, thereby leveraging the post-processing property. Therefore, the TAPS mechanism satisfies $\epsilon$-LDP.
\end{proof}

\begin{table}[t]
\centering
\caption{Communication and computation costs.}
\vspace{-0.1in}
\resizebox{\columnwidth}{!}{\begin{tabular}{c|c|c|c|c|c}
\hline
     & GTF & FedPEM & OUE &OLH & TAPS \\ \hline
 Communication & $O(b\cdot k\cdot |\mathcal{P}|)$   & $O(b\cdot k\cdot |\mathcal{P}|)$ &$O(|\mathcal{U}|\cdot |\mathcal{X}|)$        & $O(b\cdot |\mathcal{U}|)$  & $O(b\cdot k\cdot |\mathcal{P}|\cdot g^*)$  \\ \hline
 
 Computation & $O(k\cdot |\mathcal{P}|)$   & $O(k\cdot |\mathcal{P}|)$    &$O(|\mathcal{U}|\cdot |\mathcal{X}|)$        & $O(|\mathcal{U}|\cdot |\mathcal{X}|)$  & $O(k\cdot |\mathcal{P}|)$  \\ \hline
\end{tabular}}
\vspace{-0.1in}
\label{table_communication_cost}
\end{table}
\section{Performance Evaluations}\label{experiment}
\subsection{Experimental Setting}

\begin{table}[t]
\centering
% \vspace{-0.22in}
\caption{Datasets.}
\vspace{-0.15in}
\resizebox{\columnwidth}{!}{\begin{tabular}{c|c|c|c|c|c}
\hline
& Dataset description & \# users & \# total users & \# unique items & \# common items \\ \hline
\multirow{2}{*}{RDB} & Reddit~\cite{KSV2018reddit} (Comments) & 252,830 & \multirow{2}{*}{352,830} & 30,550 & \multirow{2}{*}{8,047} \\ \cline{2-3}\cline{5-5}
                           & IMDB~\cite{MDP2011imdb} (Movie reviews) & 100,000 &    & 15,470 &   \\ \hline
\multirow{4}{*}{YCM} & Yahoo~\cite{ZZL2015yelp_agnews_yahoo} (Q\&A) & 812,300 & \multirow{4}{*}{1,336,048} & 79,971 & \multirow{4}{*}{3,879} \\ \cline{2-3} \cline{5-5}
                           & CNN Dailymail~\cite{KTE2015cnndailymail,SLM2017cnndailymail} (News articles) & 287,113 &   & 32,162  &  \\ \cline{2-3} \cline{5-5}
                           & Mind~\cite{WQC2020mind} (Microsoft news) & 123,082  &  & 17,309  &  \\ \cline{2-3} \cline{5-5}
                           & SWAG~\cite{ZBS2018swag} (Video captions) & 113,553 &   & 7,656 &  \\ \hline
\multirow{6}{*}{TYS} & Twitter~\cite{GBH2009twitter} (Tweets)  & 658,549 & \multirow{6}{*}{2,119,776} & 80,126 & \multirow{6}{*}{2,175} \\ \cline{2-3} \cline{5-5}
                           & Yelp~\cite{ZZL2015yelp_agnews_yahoo} (Reviews) & 649,917 &  & 34,866  & \\ \cline{2-3} \cline{5-5}
                           & Scientific Papers~\cite{CDK2018scientificpaper} (Papers) & 349,119 &  & 27,372  &  \\ \cline{2-3} \cline{5-5}
                           & Amazon Arts~\cite{NLM2019amazon} (Reviews) & 200,000 &  & 8,914  &  \\ \cline{2-3} \cline{5-5}
                           & SQuAD~\cite{RJL2018squad,RZL2016squad} (Q\&A) & 142,192 &  & 19,895  &  \\ \cline{2-3} \cline{5-5}
                           & AG News~\cite{ZZL2015yelp_agnews_yahoo} (News articles) & 119,999 &  & 15,879 &  \\ \hline
\multirow{6}{*}{UBA} & UBA 0 (shopping interactions) & 1,476,546 & \multirow{6}{*}{6,483,370} & 162,833 & \multirow{6}{*}{975} \\ \cline{2-3} \cline{5-5}
                           & UBA 1 (shopping interactions) & 1,263,768 &  & 167,196  & \\ \cline{2-3} \cline{5-5}
                           & UBA 2 (shopping interactions) & 1,246,972 &  & 167,309  &  \\ \cline{2-3} \cline{5-5}
                           & UBA 3 (shopping interactions) & 1,117,376 &  & 58,087  &  \\ \cline{2-3} \cline{5-5}
                           & UBA 4 (shopping interactions) & 774,626 &  & 9,203  &  \\ \cline{2-3} \cline{5-5}
                           & UBA 5 (shopping interactions) & 604,082 &  & 4,979 &  \\ \hline
\multirow{8}{*}{SYN} & SYN 0, Poisson ($\lambda=10$) & 220,000 & \multirow{8}{*}{780,000} & 5,484 & \multirow{8}{*}{276} \\ \cline{2-3} \cline{5-5}
                           & SYN 1, Poisson ($\lambda=8$) & 170,000 &  & 6,260  &  \\ \cline{2-3} \cline{5-5}
                           & SYN 2, Zipf ($\alpha=1.1$) & 120,000 &  & 6,688  &  \\ \cline{2-3} \cline{5-5}
                           & SYN 3, Zipf ($\alpha=1.3$) & 80,000 &  & 6,796  &  \\ \cline{2-3} \cline{5-5}
                           & SYN 4, Poisson ($\lambda=6$) & 70,000 &  & 4,590  &  \\ \cline{2-3} \cline{5-5}
                           & SYN 5, Poisson ($\lambda=4$) & 60,000 &  & 4,429  &  \\ \cline{2-3} \cline{5-5}
                           & SYN 6, Zipf ($\alpha=1.5$) & 30,000 &  & 4,813  &  \\ \cline{2-3} \cline{5-5}
                           & SYN 7, Zipf ($\alpha=1.7$) & 30,000 & & 5,024 &  \\ \hline
\end{tabular}}
\vspace{-0.1in}
\label{table_dataset}
\end{table}

{\bf Datasets}. 
In the experiments, we use four real-world datasets (RDB, YCM, TYS, UBA) and one synthetic dataset (SYN). Each real-world dataset simulates heterogeneous parties across different application scenarios, e.g., RDB is designed to facilitate comparisons with baselines and simulate a typical heavy-hitter identification scenario that identifying the most frequent ``out-of-vocabulary'' words typed on keyboards~\cite{CCD2023}. RDB involves two parties --- comments from Reddit~\cite{KSV2018reddit} and movie reviews from IMDB~\cite{MDP2011imdb}. YCM comprises four distinct datasets including Yahoo~\cite{ZZL2015yelp_agnews_yahoo}, CNN Dailymail~\cite{KTE2015cnndailymail,SLM2017cnndailymail}, Mind~\cite{WQC2020mind}, and SWAG~\cite{ZBS2018swag}, which feature a variety of Q\&A pairs, news articles, and video captions. TYS incorporates six parties, namely Twitter~\cite{GBH2009twitter}, Yelp~\cite{ZZL2015yelp_agnews_yahoo}, Scientific Papers~\cite{CDK2018scientificpaper}, Amazon Arts~\cite{NLM2019amazon}, SQuAD~\cite{RJL2018squad,RZL2016squad}, and AG News~\cite{ZZL2015yelp_agnews_yahoo}, representing diverse text descriptions such as tweets, reviews, academic papers, Q\&A, and news articles. 
UBA is the user behaviour dataset from Alibaba~\cite{ZMF2019,PBZ2019,PZZ2020}, which is one of the largest recommender system datasets from real industrial scenes. Each record in UBA represents a specific and private user-item interaction. We randomly select and allocate approximately $6.5$ million users across six different parties. 
For synthetic dataset SYN, we follow the existing studies~\cite{LDC2022,YAG2019} and use a Dirichlet distribution to allocate different item domains for eight parties from the Tmall~\cite{tmall} dataset, which contains a vast array of anonymous shopping logs, into each party. We divide and sample all items into $N=6$ groups. For each party $P_i$, we sample $q\sim Dir_N(\beta=0.5)$, and allocate a $q_j$ proportion of the total items in group $j$ to construct item domain in $P_i$, where $\beta$ controls the imbalance level of the domain skew. Due to the small concentration parameter (0.5) of the Dirichlet distribution, some sampled item domains may not have any items of certain groups of item. We employ Zipf and Poisson distributions with different parameters to build the frequency distribution. Each user in a party holds only a single word or item, and multiple occurrences are sampled as one. We summarize the details in Table~\ref{table_dataset}.

\begin{figure}[t]
    \centering
    % \vspace{-0.2in}
    \includegraphics[width=0.5\columnwidth]{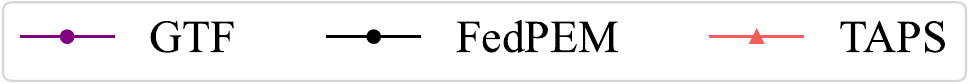} \\ 
    \vspace{-0.1in}
    \subfloat[RDB, $k$=10]{
    	\includegraphics[width=0.32\columnwidth]{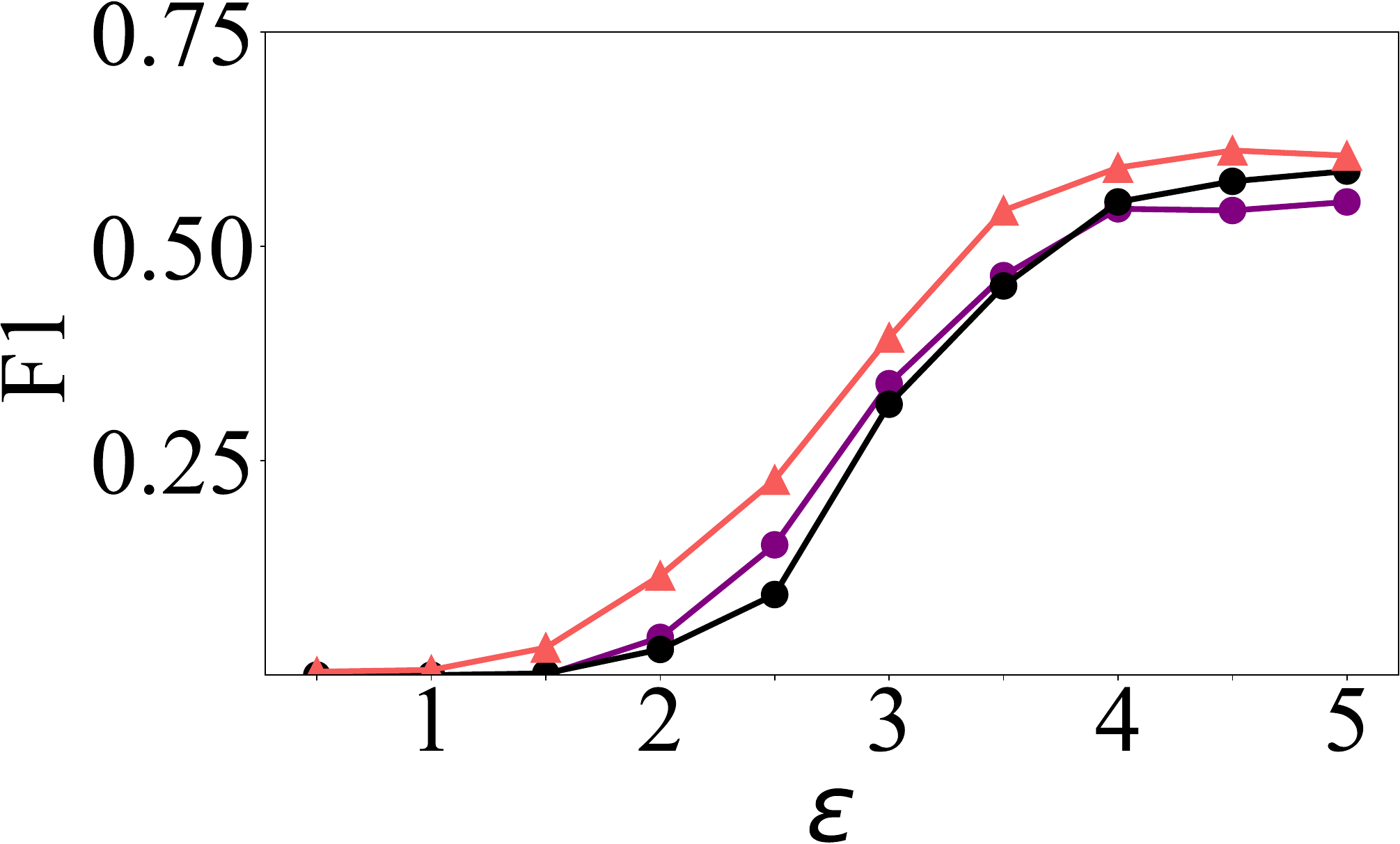}
    }
    \subfloat[RDB, $k$=20]{
    		\includegraphics[width=0.31\columnwidth]{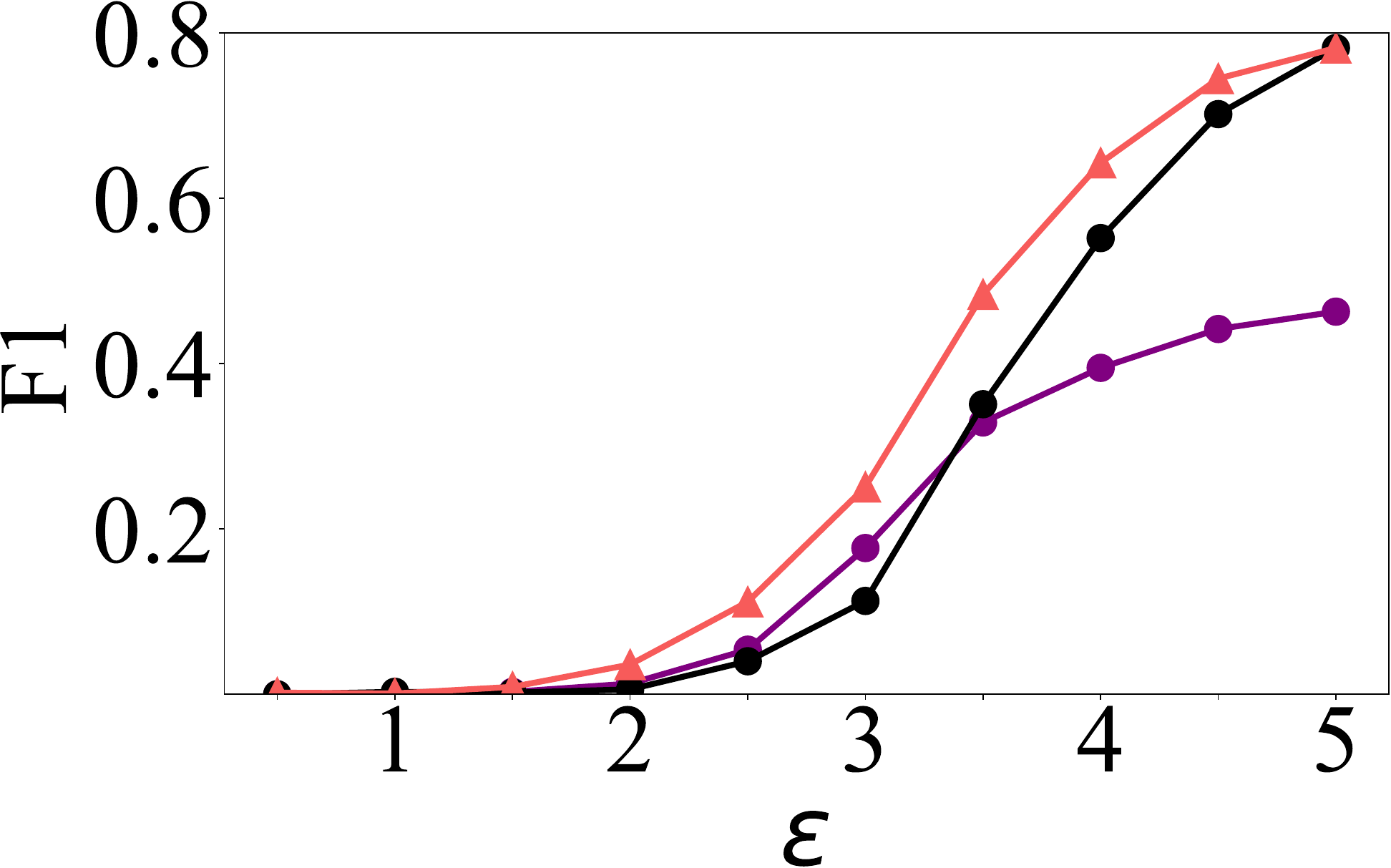}
    	}
     \subfloat[RDB, $k$=40]{
    		\includegraphics[width=0.31\columnwidth]{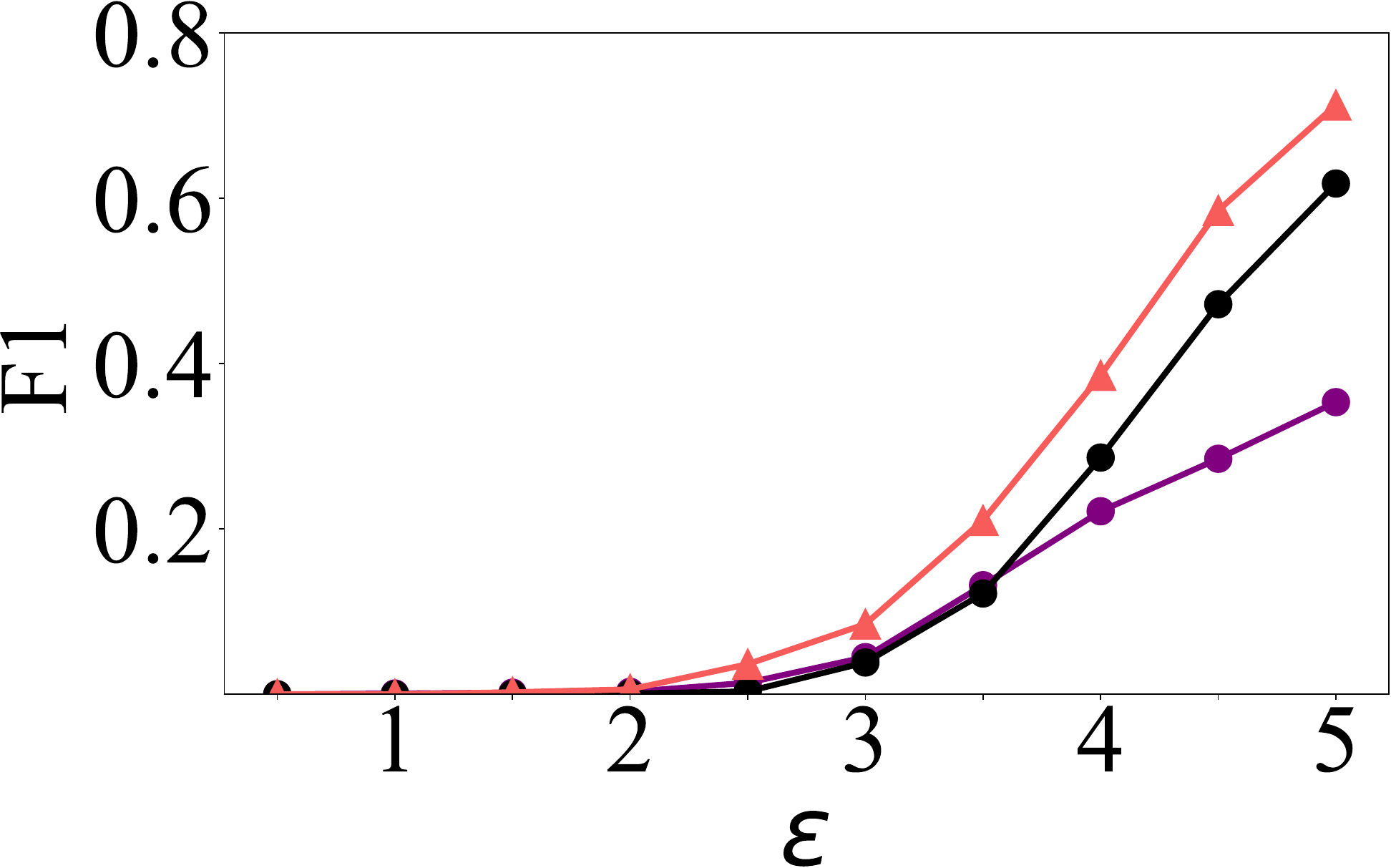}
    	}\hfill
     \vspace{-0.1in}
	\subfloat[YCM, $k$=10]{
		\includegraphics[width=0.3\columnwidth]{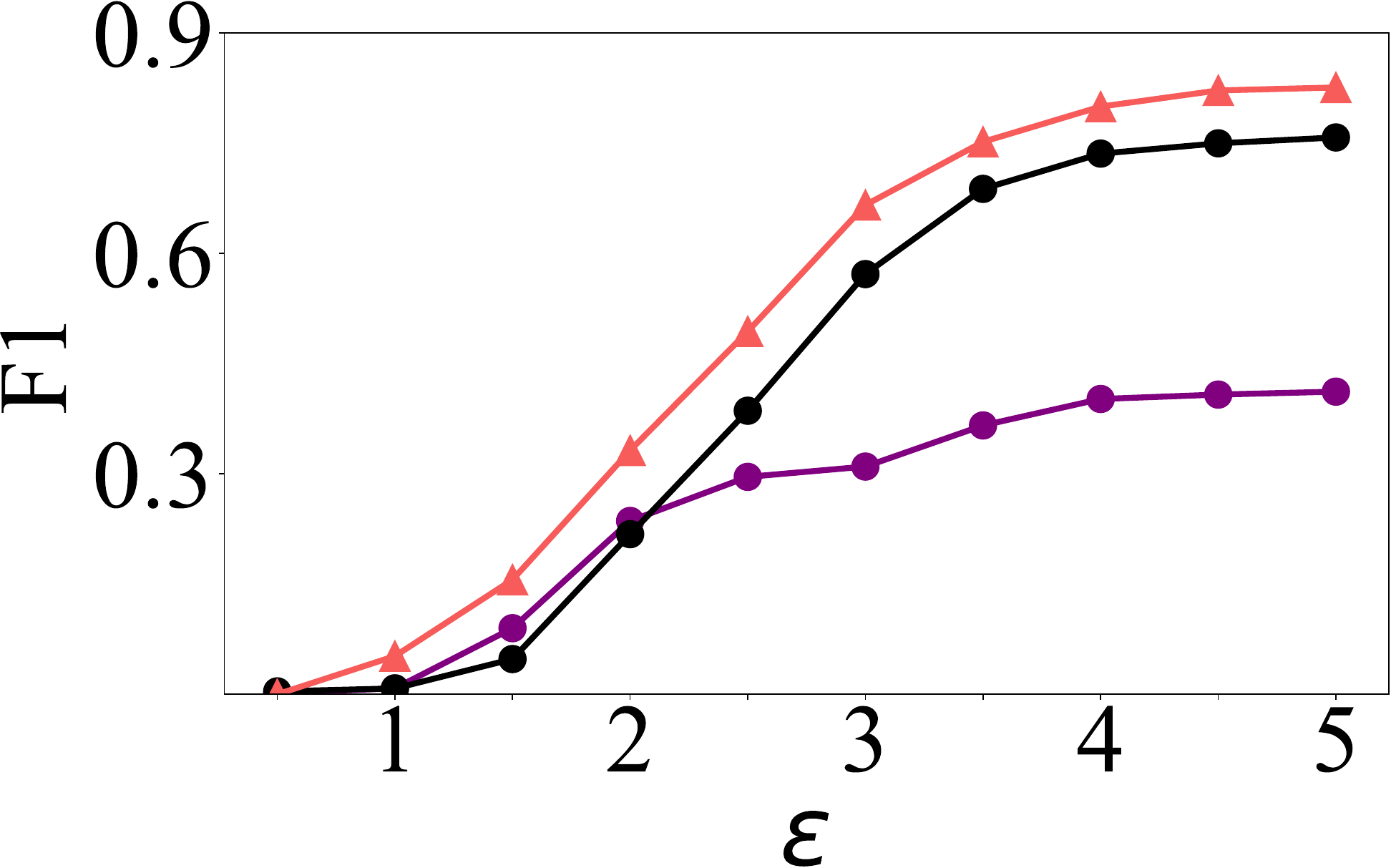}
	}
        \subfloat[YCM, $k$=20]{
    		\includegraphics[width=0.31\columnwidth]{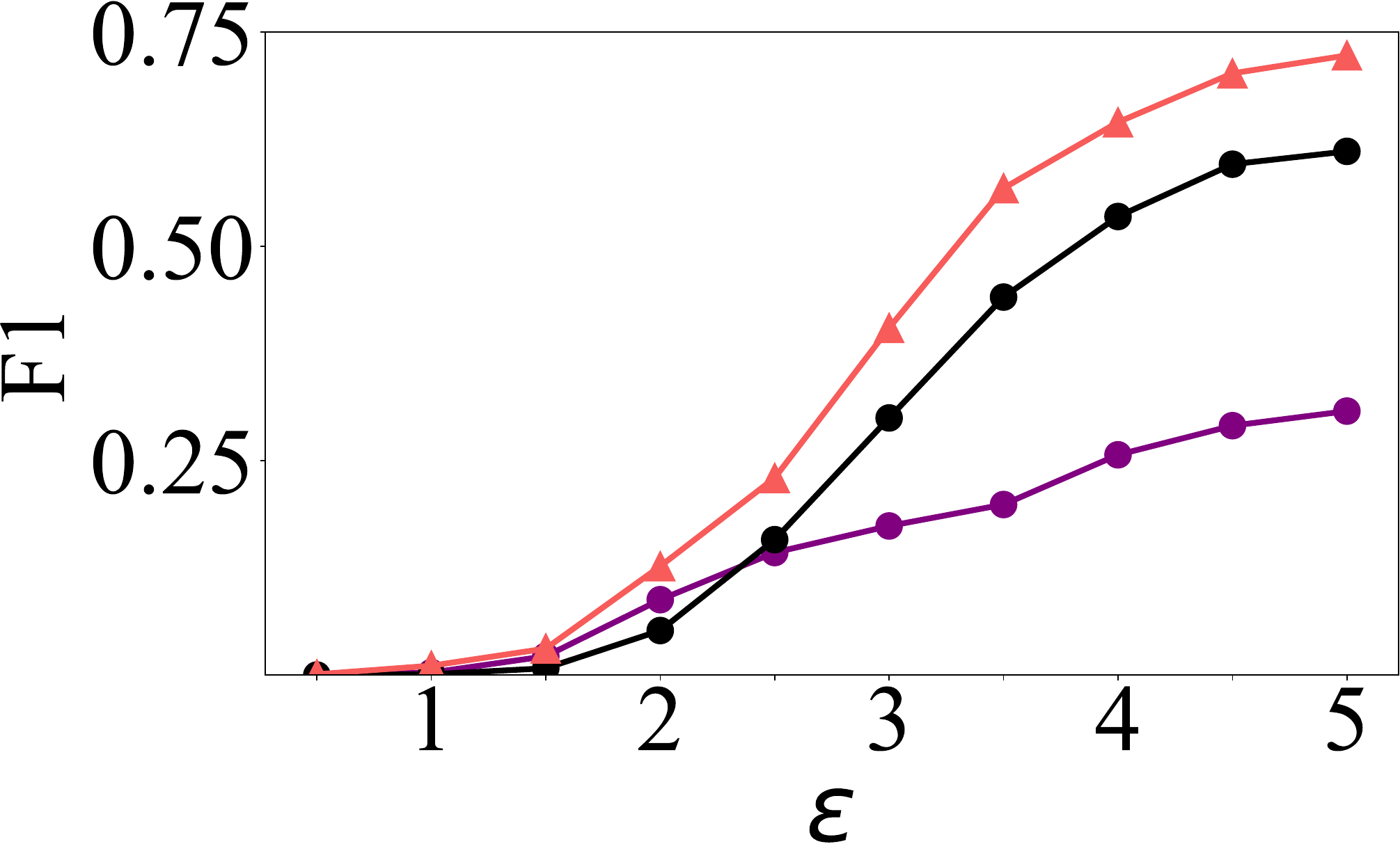}
    	}
     \subfloat[YCM, $k$=40]{
        	\includegraphics[width=0.31\columnwidth]{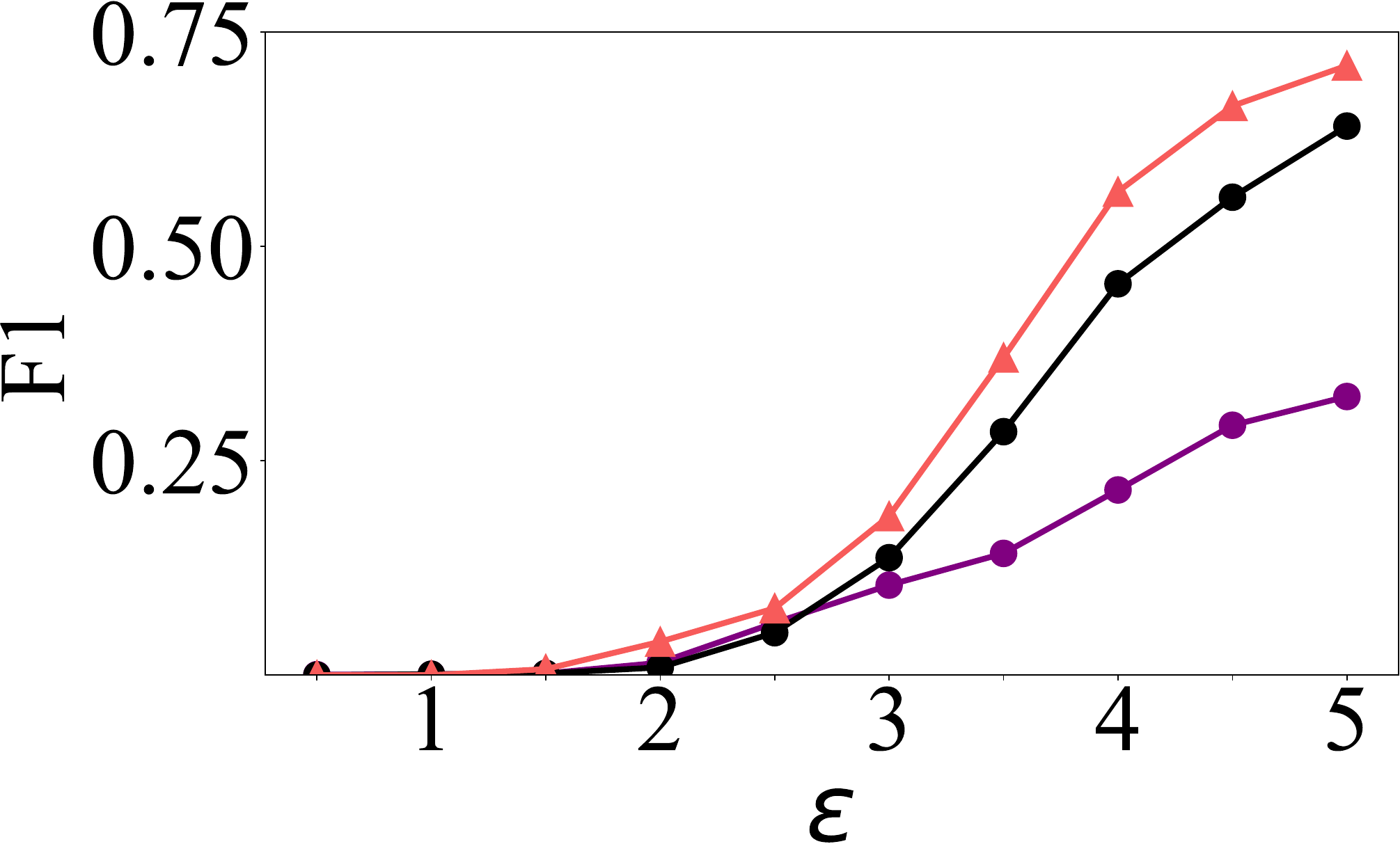}
        }\hfill
     \vspace{-0.1in}
	\subfloat[TYS, $k$=10]{
		\includegraphics[width=0.31\columnwidth]{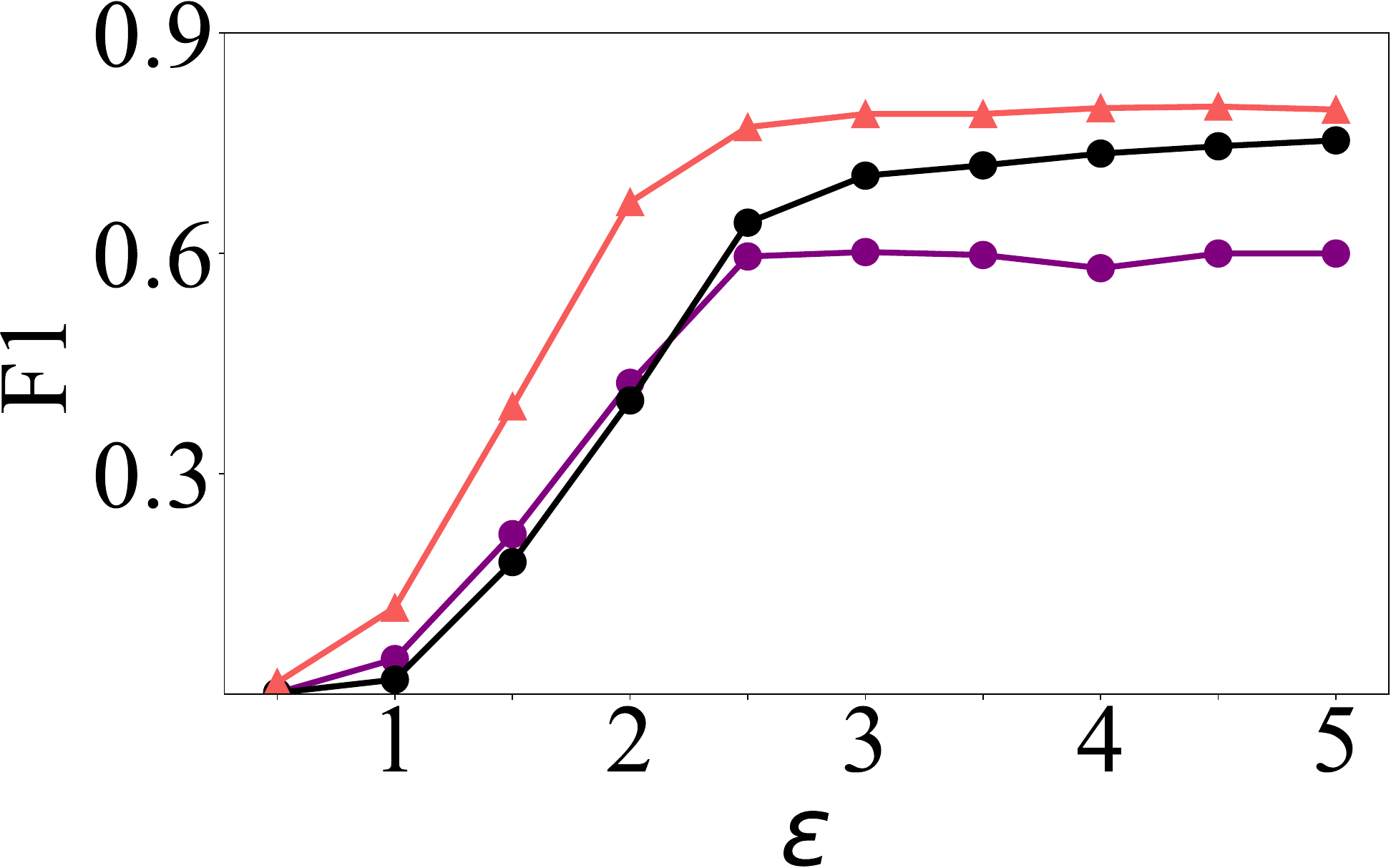}
	}
        \subfloat[TYS, $k$=20]{
        	\includegraphics[width=0.31\columnwidth]{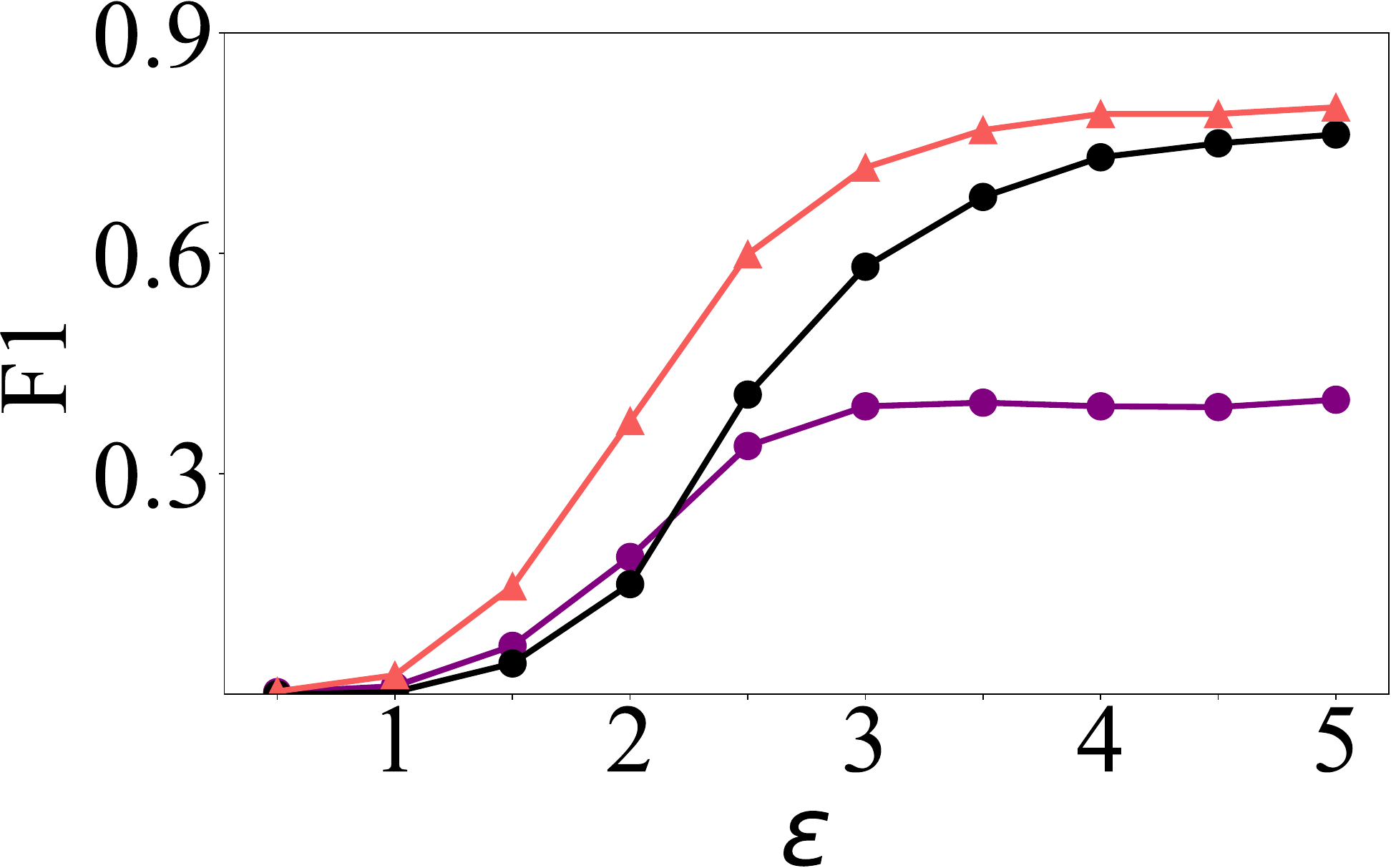}
        }
        \subfloat[TYS, $k$=40]{
    		\includegraphics[width=0.32\columnwidth]{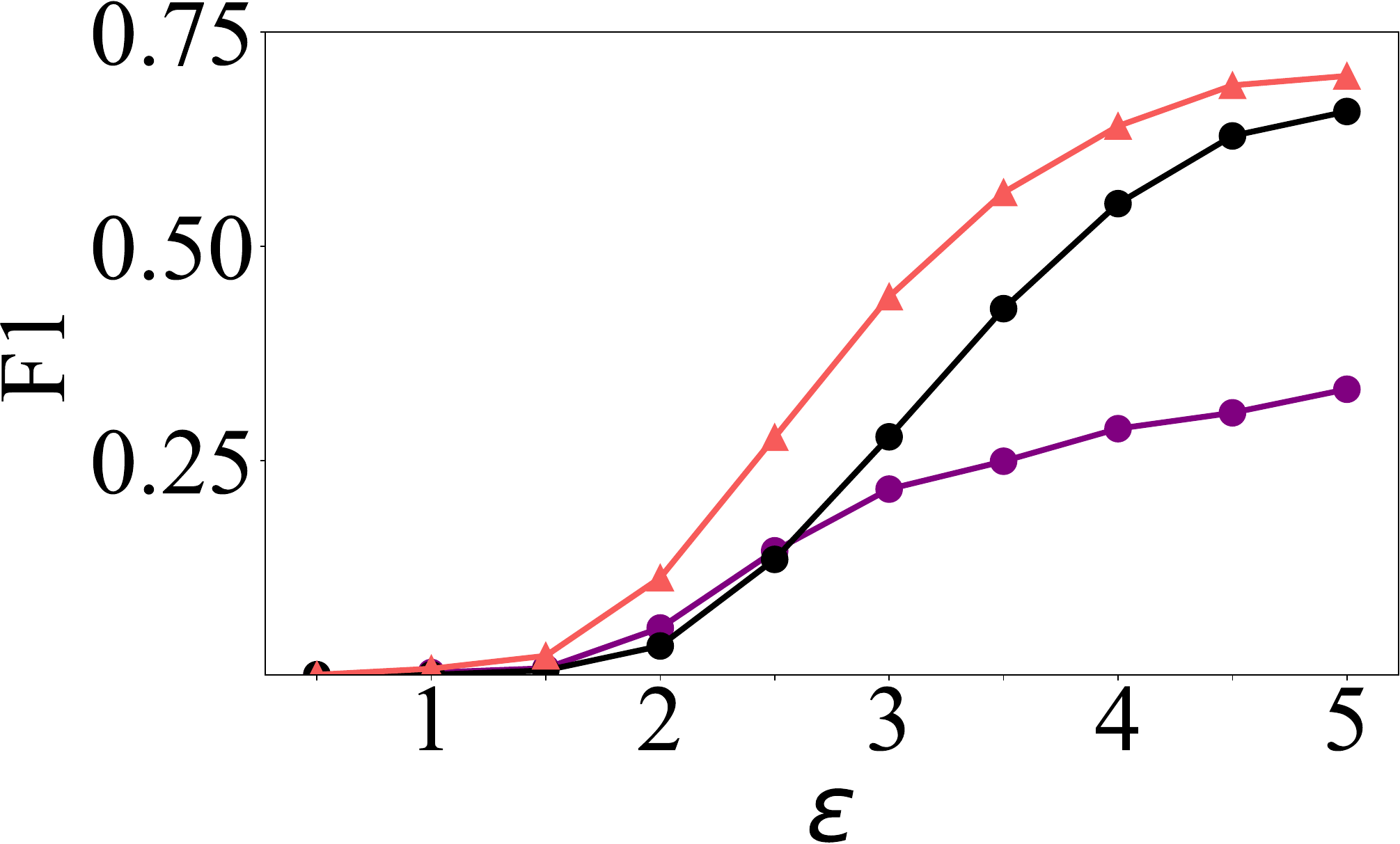}
    	}\hfill
     \vspace{-0.1in}
        \subfloat[UBA, $k$=10]{
        	\includegraphics[width=0.32\columnwidth]{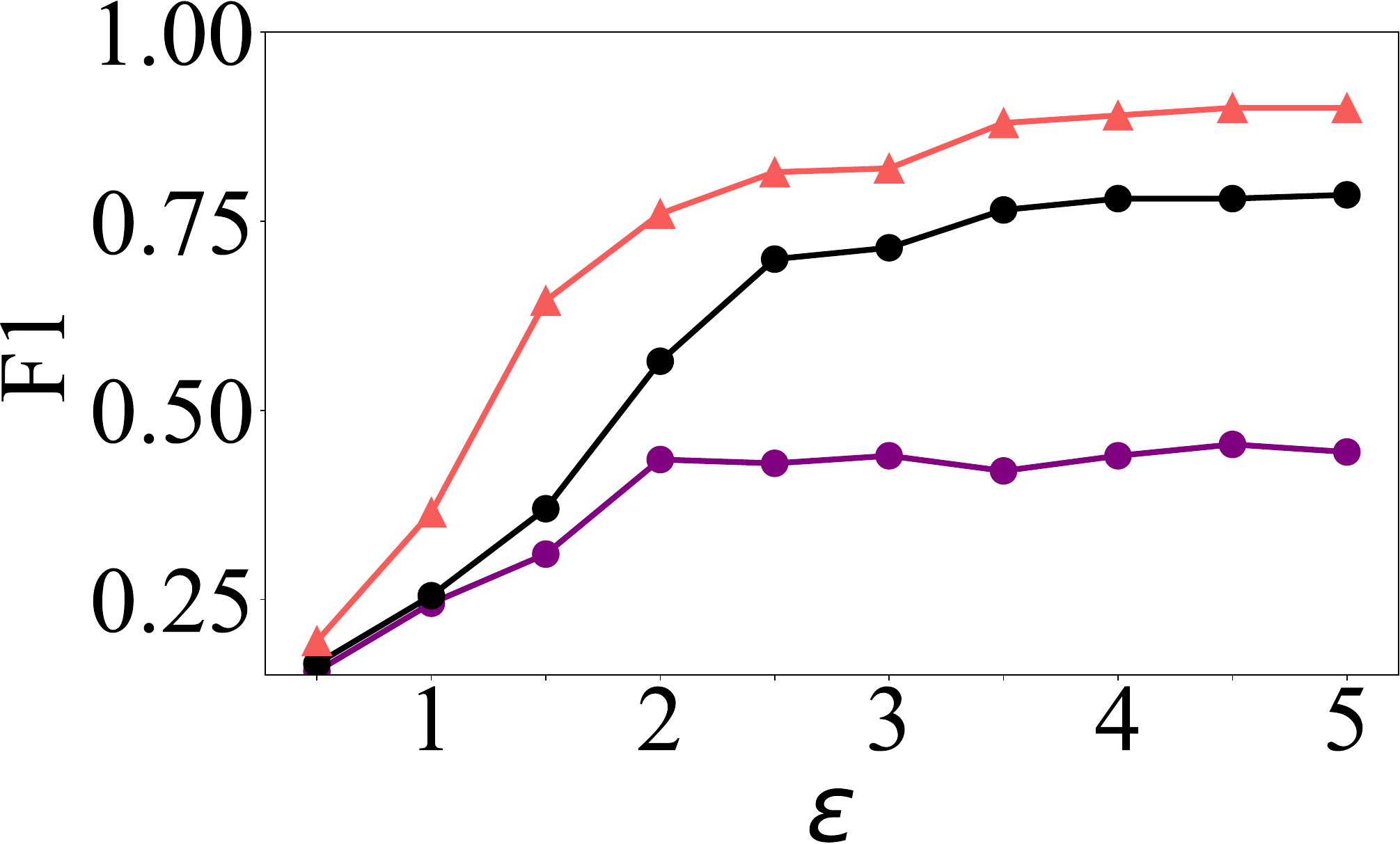}
        }
    	\subfloat[UBA, $k$=20]{
    		\includegraphics[width=0.31\columnwidth]{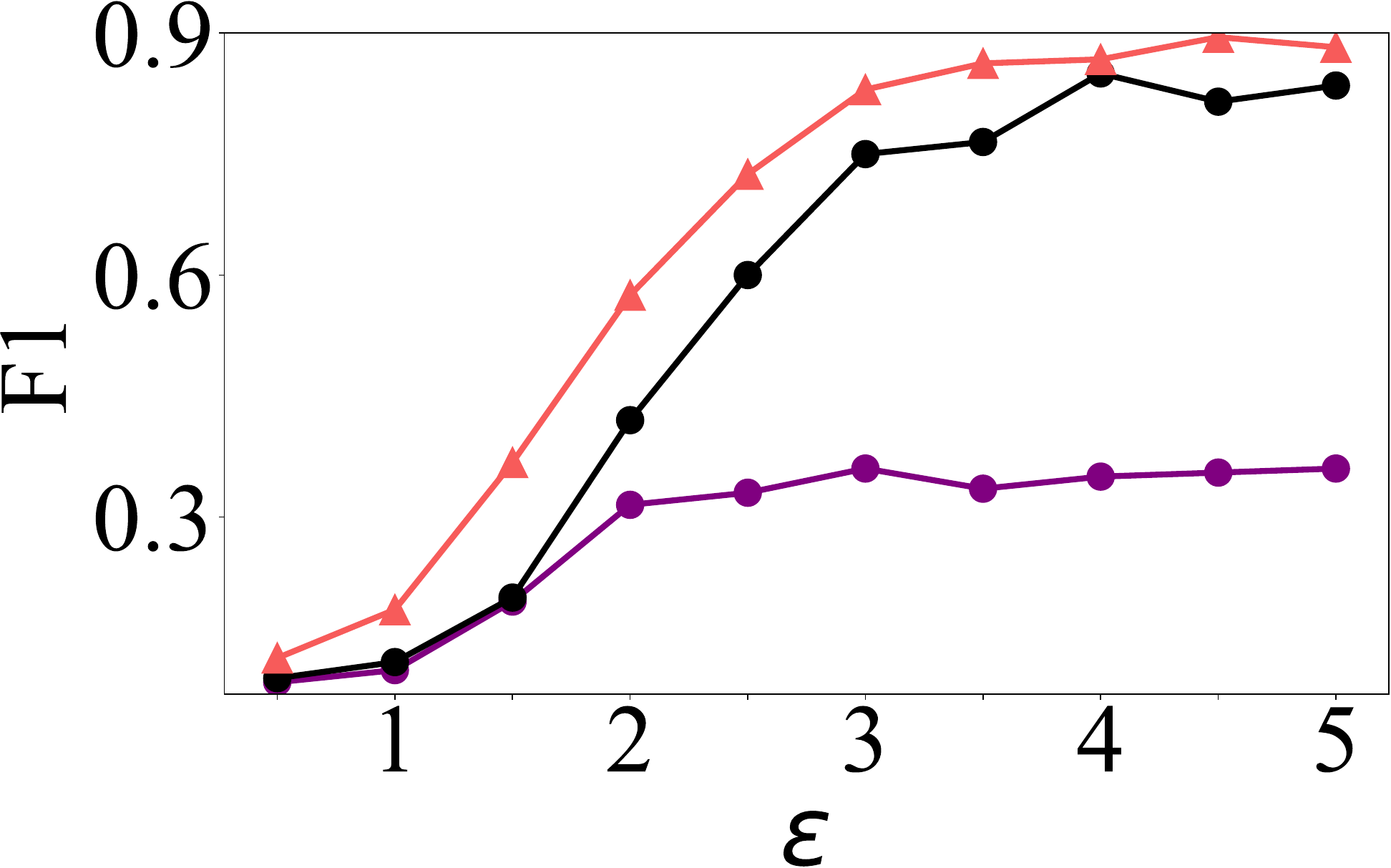}
    	}
    	\subfloat[UBA, $k$=40]{
    		\includegraphics[width=0.31\columnwidth]{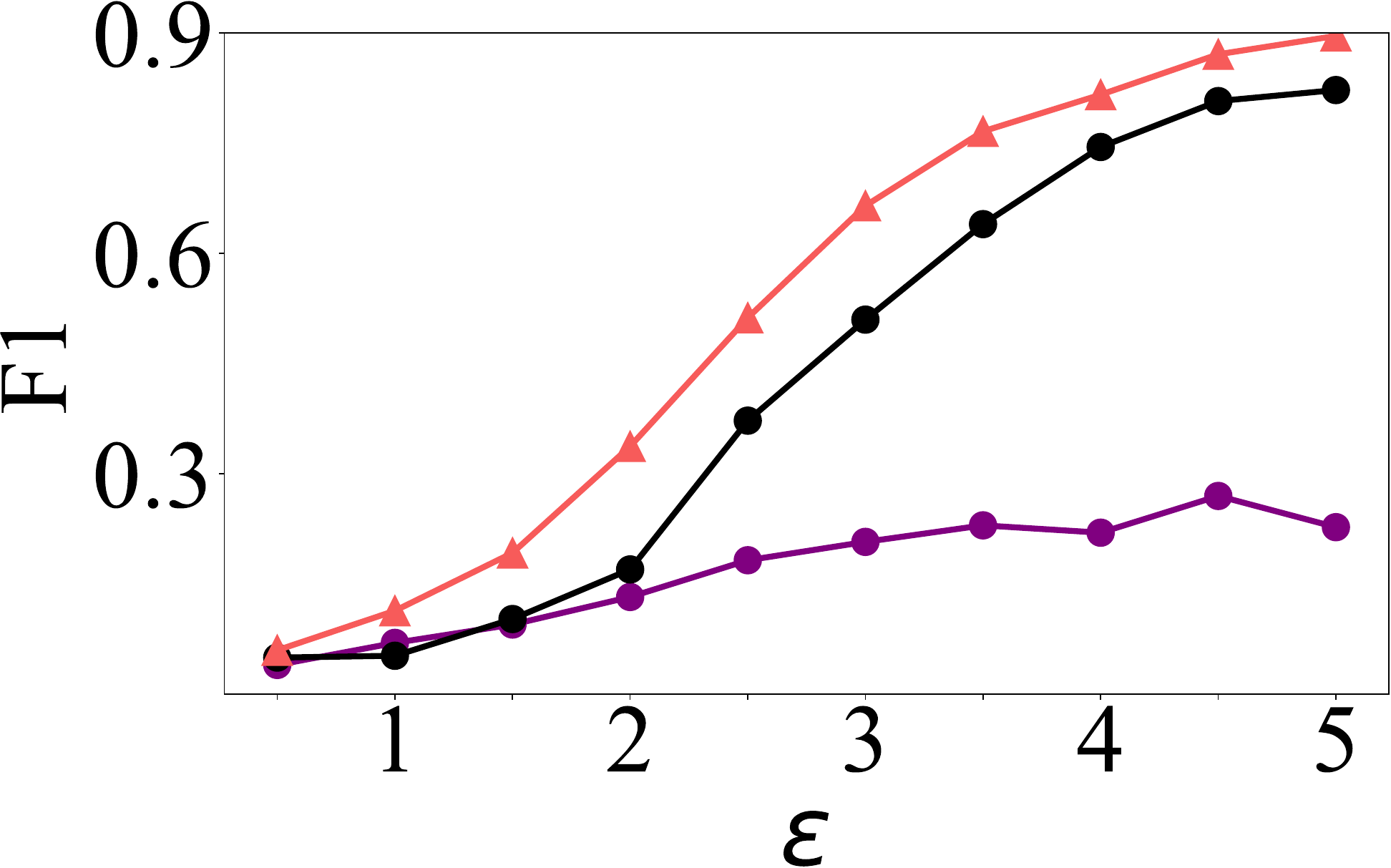}
    	}\hfill
	    \vspace{-0.1in}
        \subfloat[SYN, $k$=10]{
        	\includegraphics[width=0.31\columnwidth]{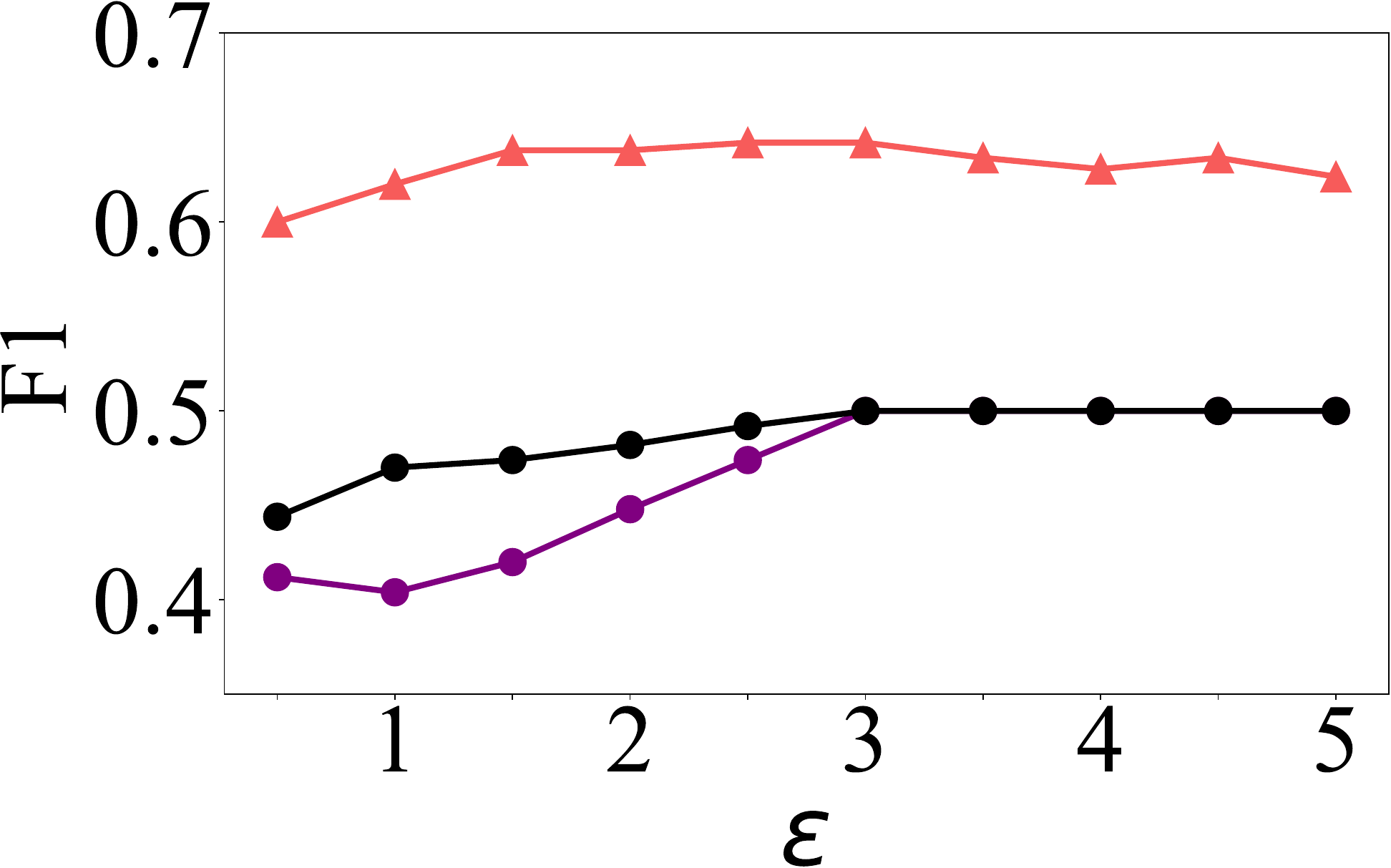}
        }
            \subfloat[SYN, $k$=20]{
    		\includegraphics[width=0.31\columnwidth]{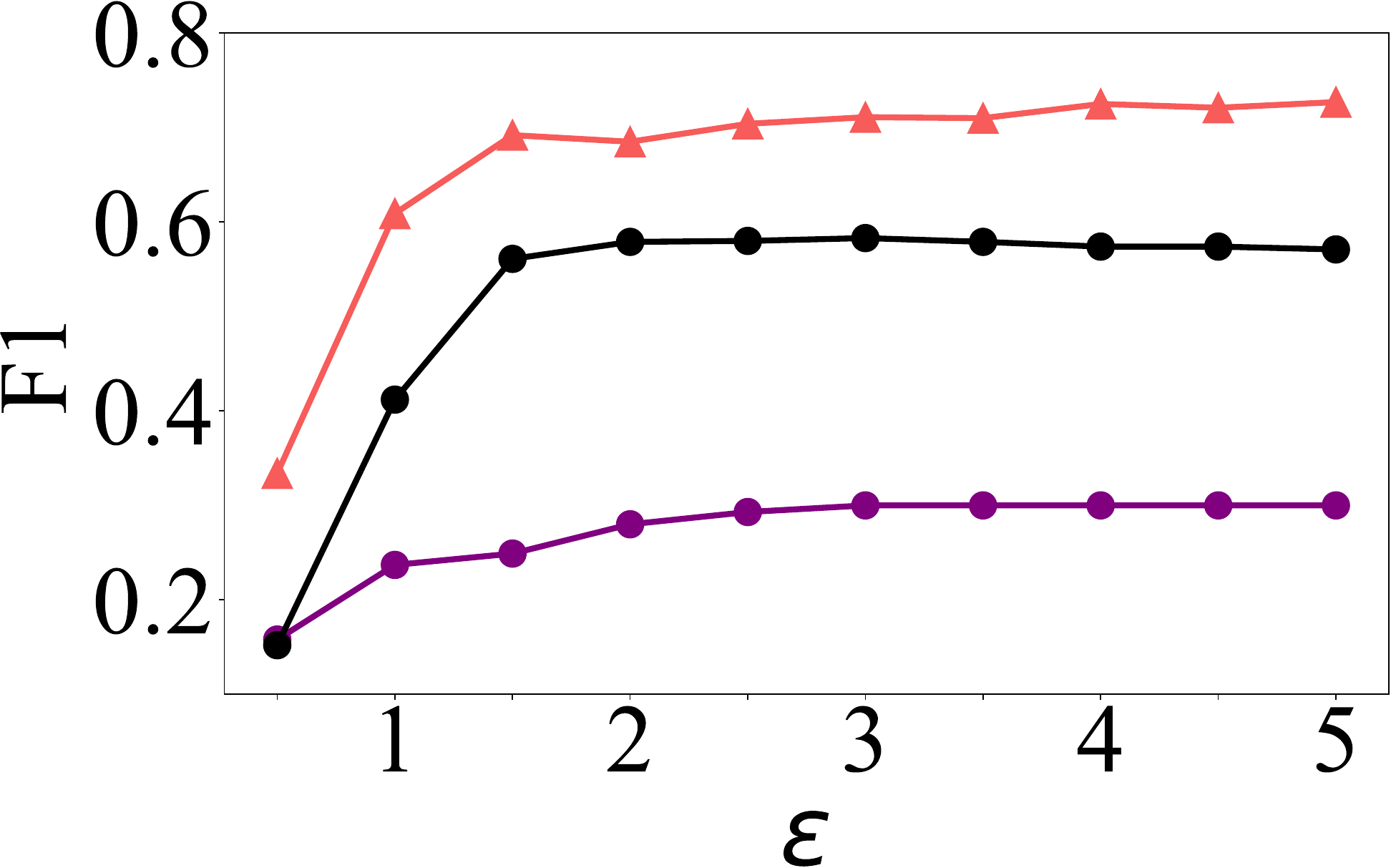}
    	}
    	\subfloat[SYN, $k$=40]{
    		\includegraphics[width=0.31\columnwidth]{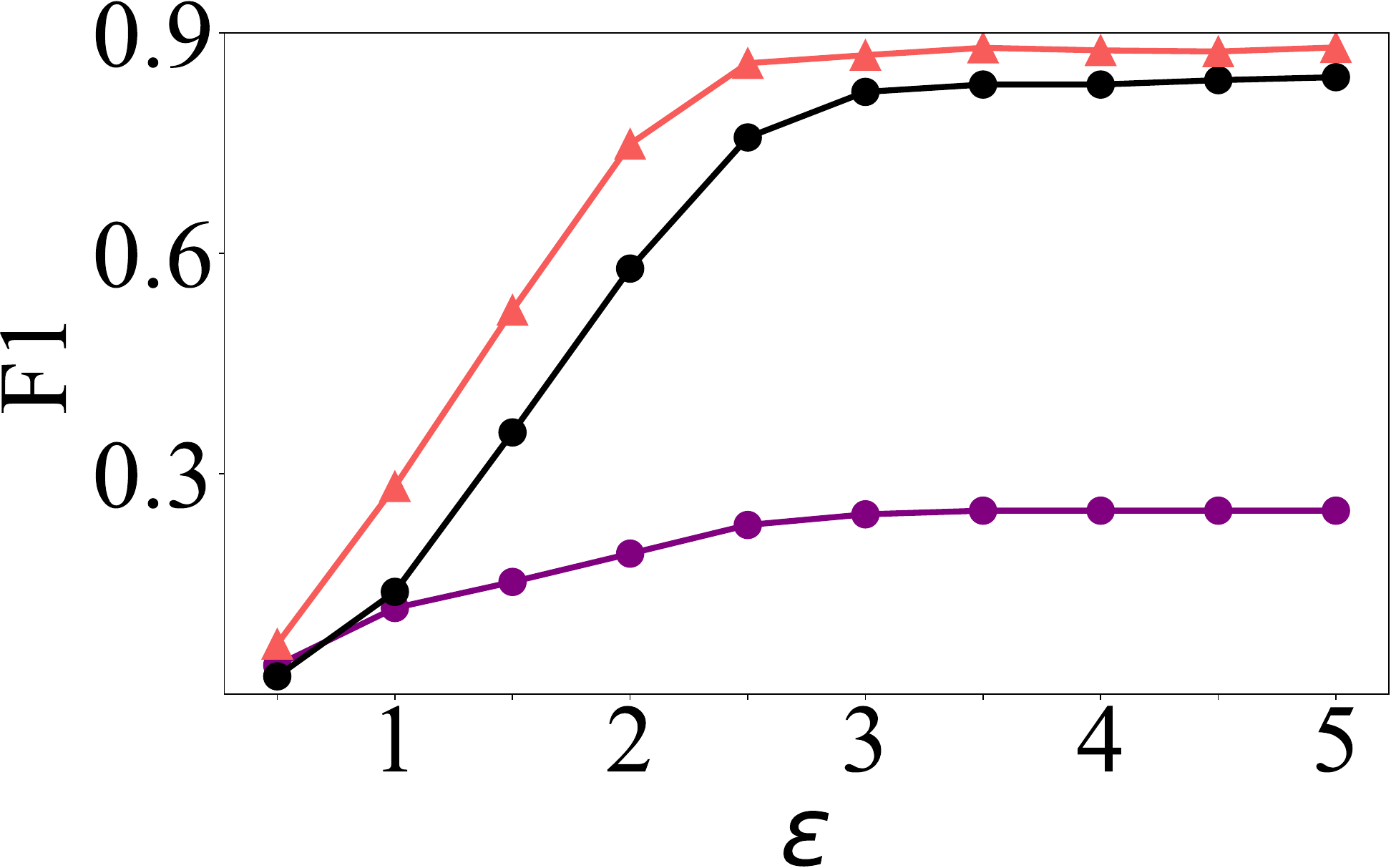}
    	}\hfill
     \vspace{-0.1in}
    \caption{F1 scores vs. privacy budget $\epsilon$ under different $k$.}
    \vspace{-0.15in}
    \label{fig_F1}
\end{figure}

\begin{figure}[t]
    \centering
    \includegraphics[width=0.5\columnwidth]{experimental_figures/fig_legend.pdf} \\ 
    \vspace{-0.15in}
        \subfloat[RDB, $k$=10]{
    	\includegraphics[width=0.31\columnwidth]{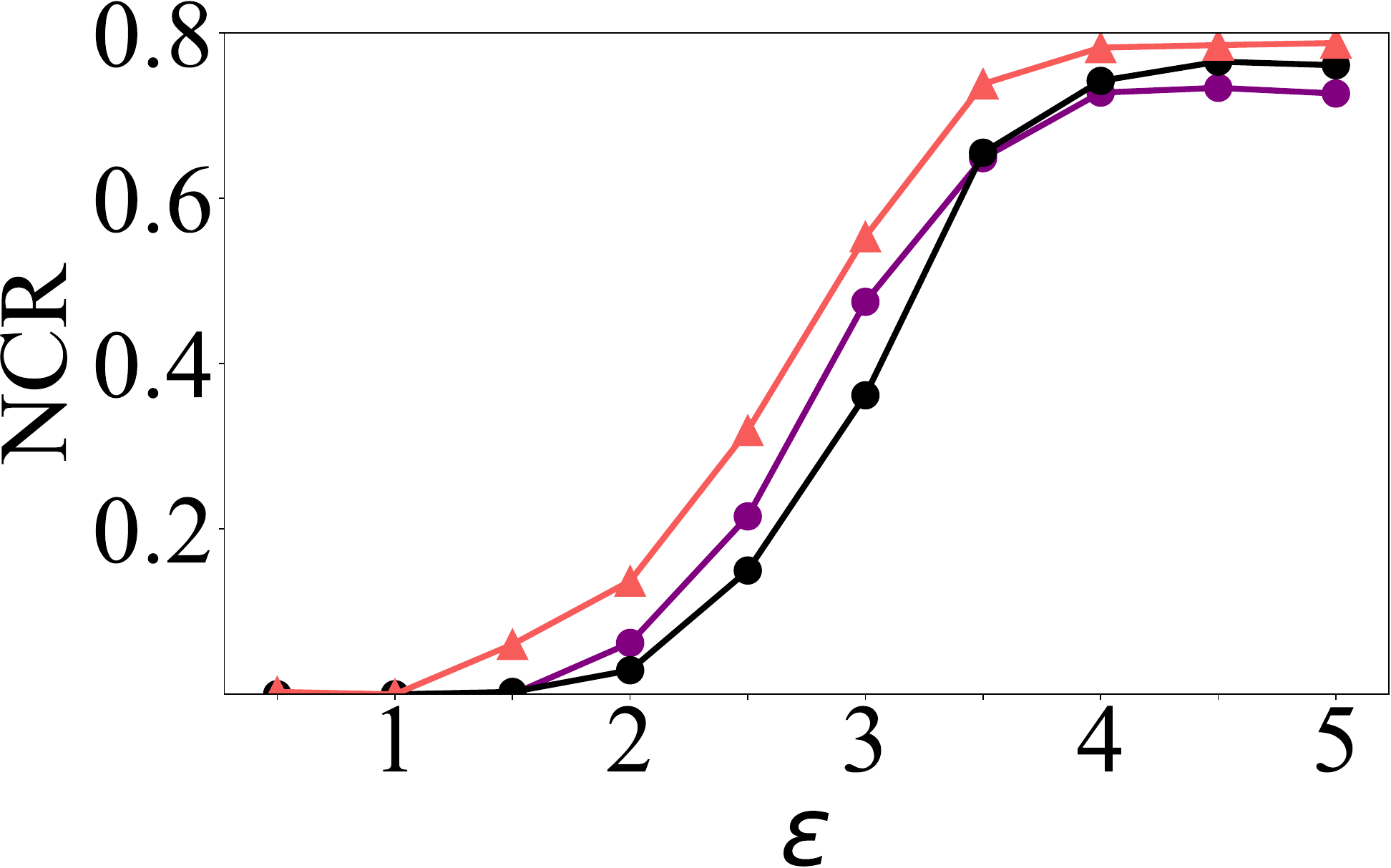}
    }
    \subfloat[RDB, $k$=20]{
    		\includegraphics[width=0.31\columnwidth]{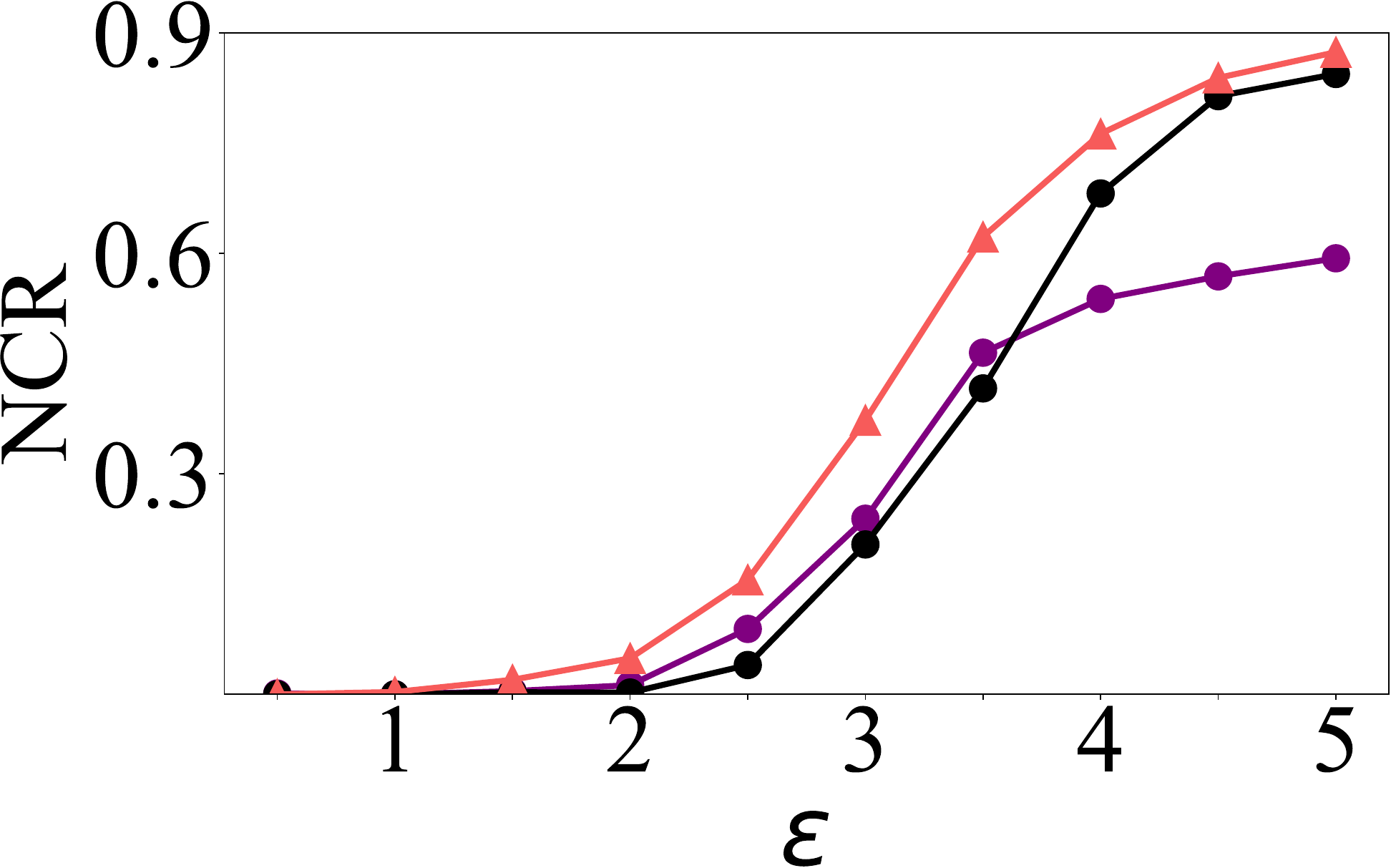}
    	}
     \subfloat[RDB, $k$=40]{
    		\includegraphics[width=0.31\columnwidth]{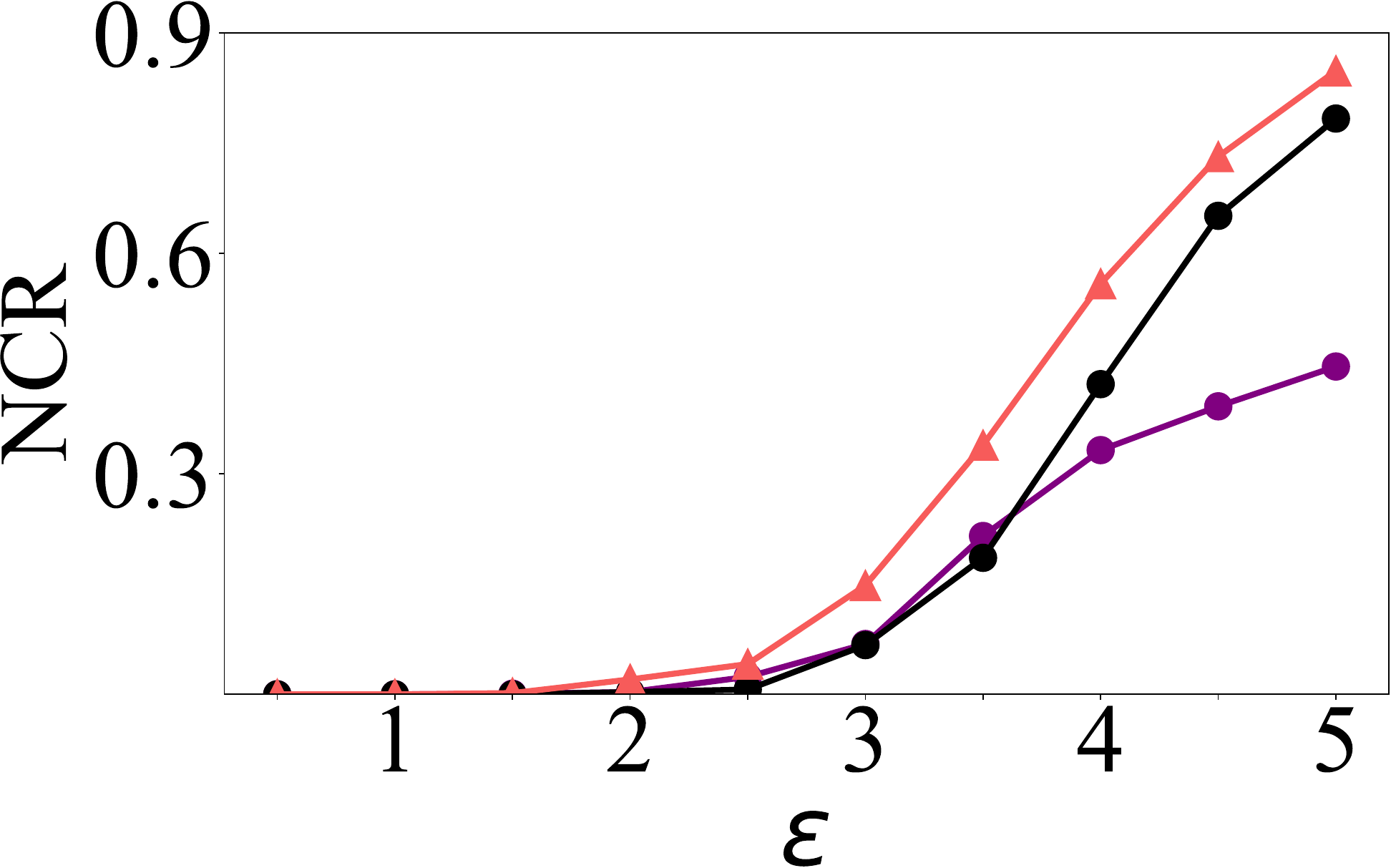}
    	}\hfill
     \vspace{-0.1in}
	\subfloat[YCM, $k$=10]{
		\includegraphics[width=0.31\columnwidth]{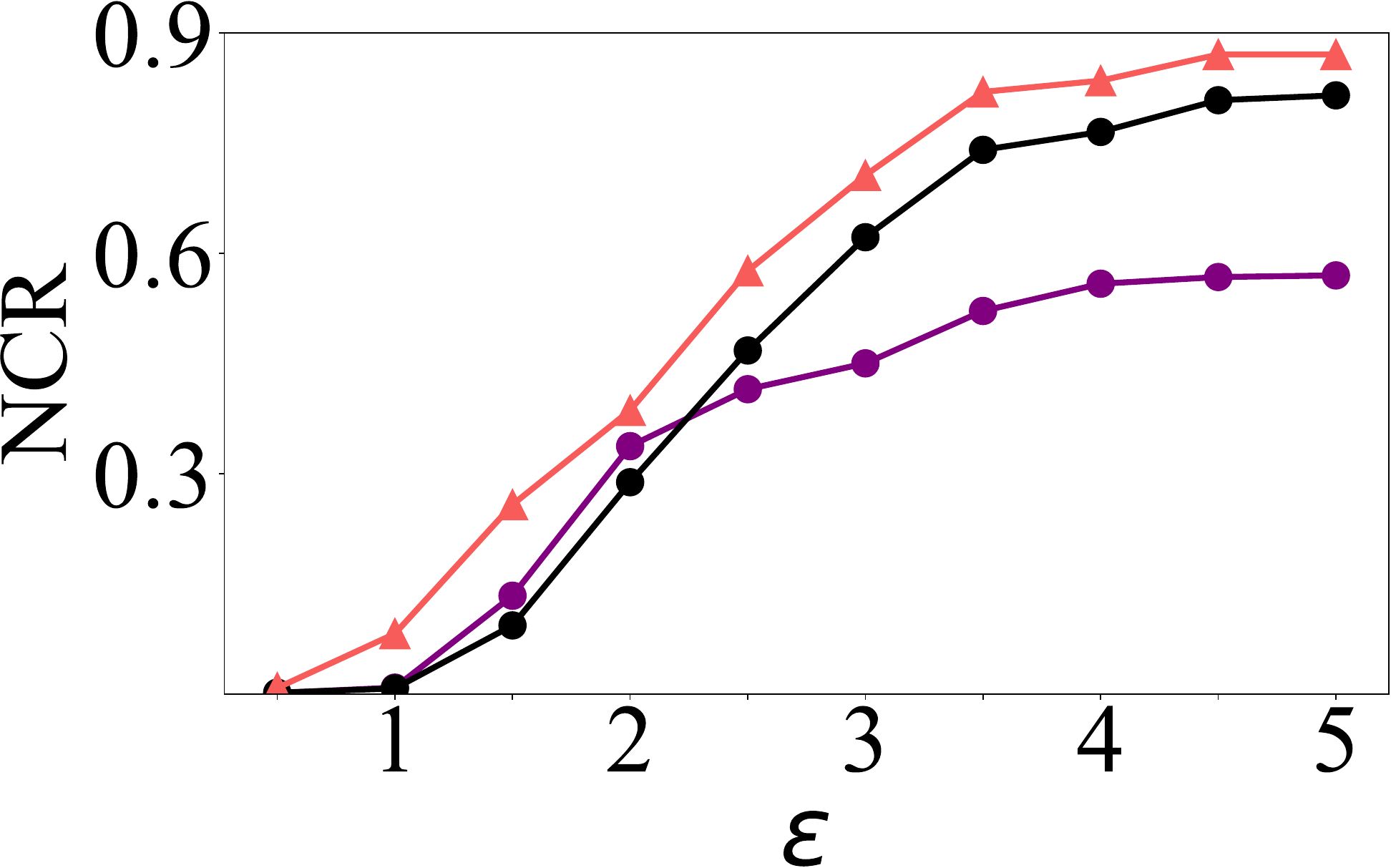}
	}
        \subfloat[YCM, $k$=20]{
    		\includegraphics[width=0.31\columnwidth]{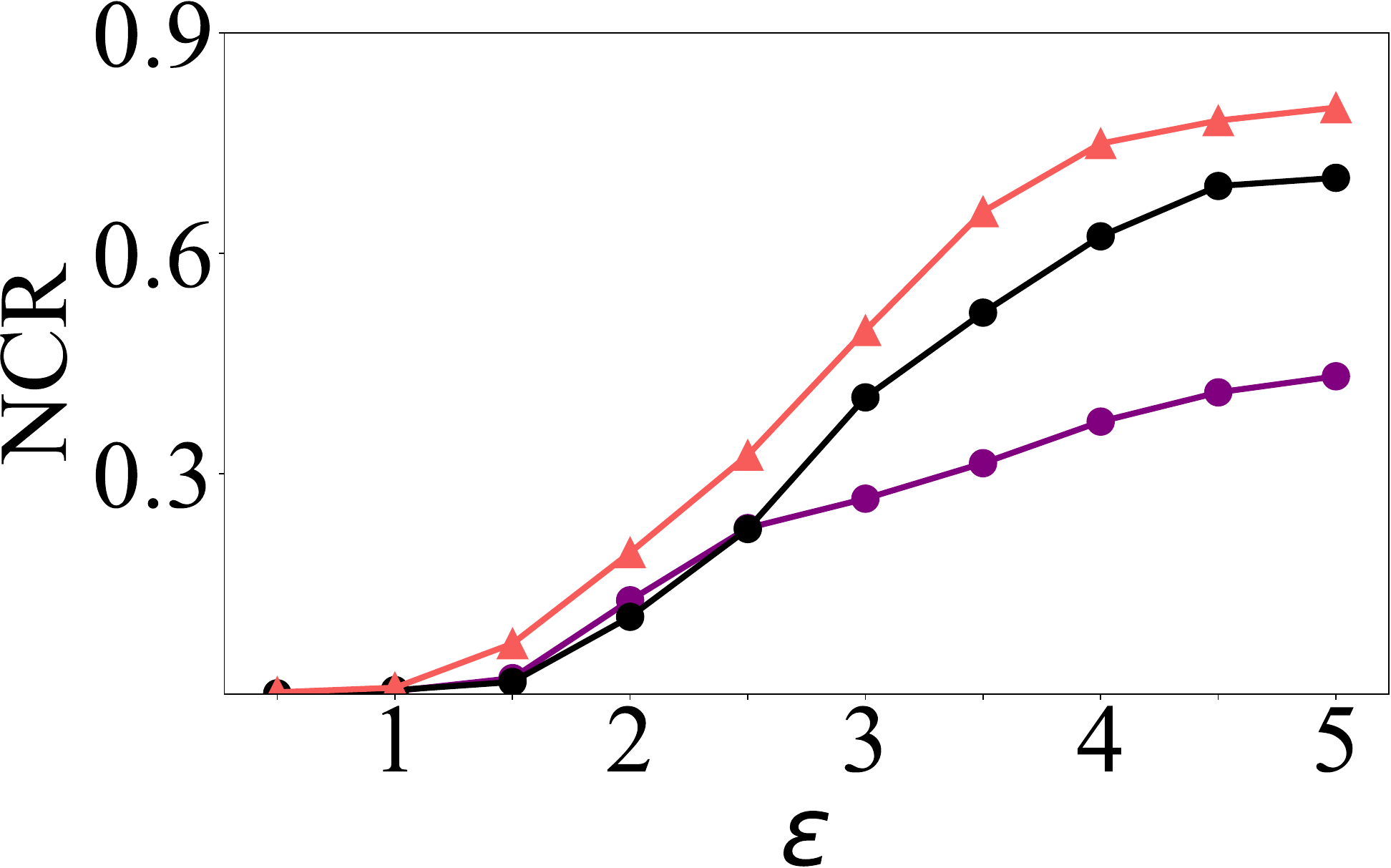}
    	}
     \subfloat[YCM, $k$=40]{
        	\includegraphics[width=0.31\columnwidth]{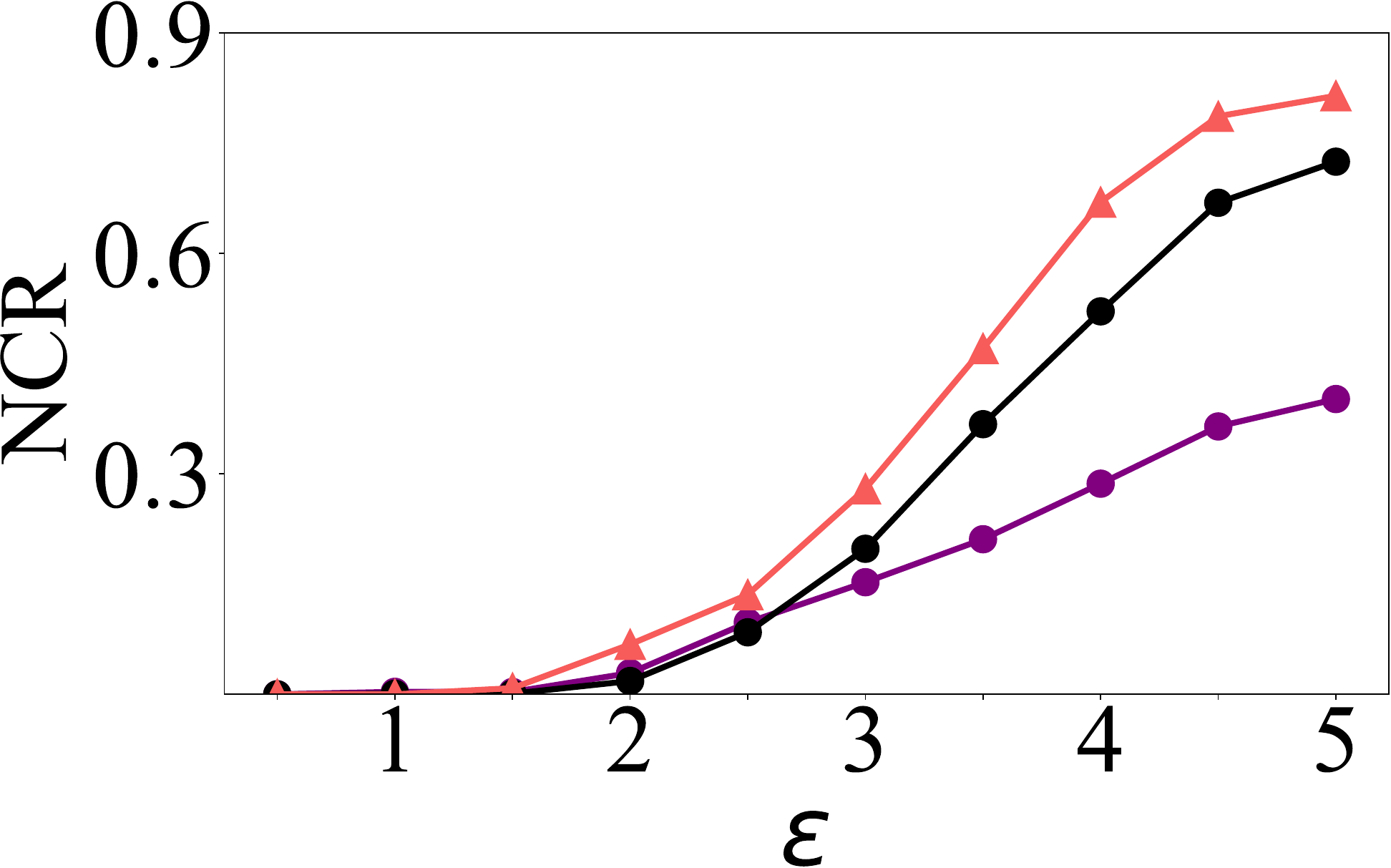}
        }\hfill
     \vspace{-0.1in}
	\subfloat[TYS, $k$=10]{
		\includegraphics[width=0.31\columnwidth]{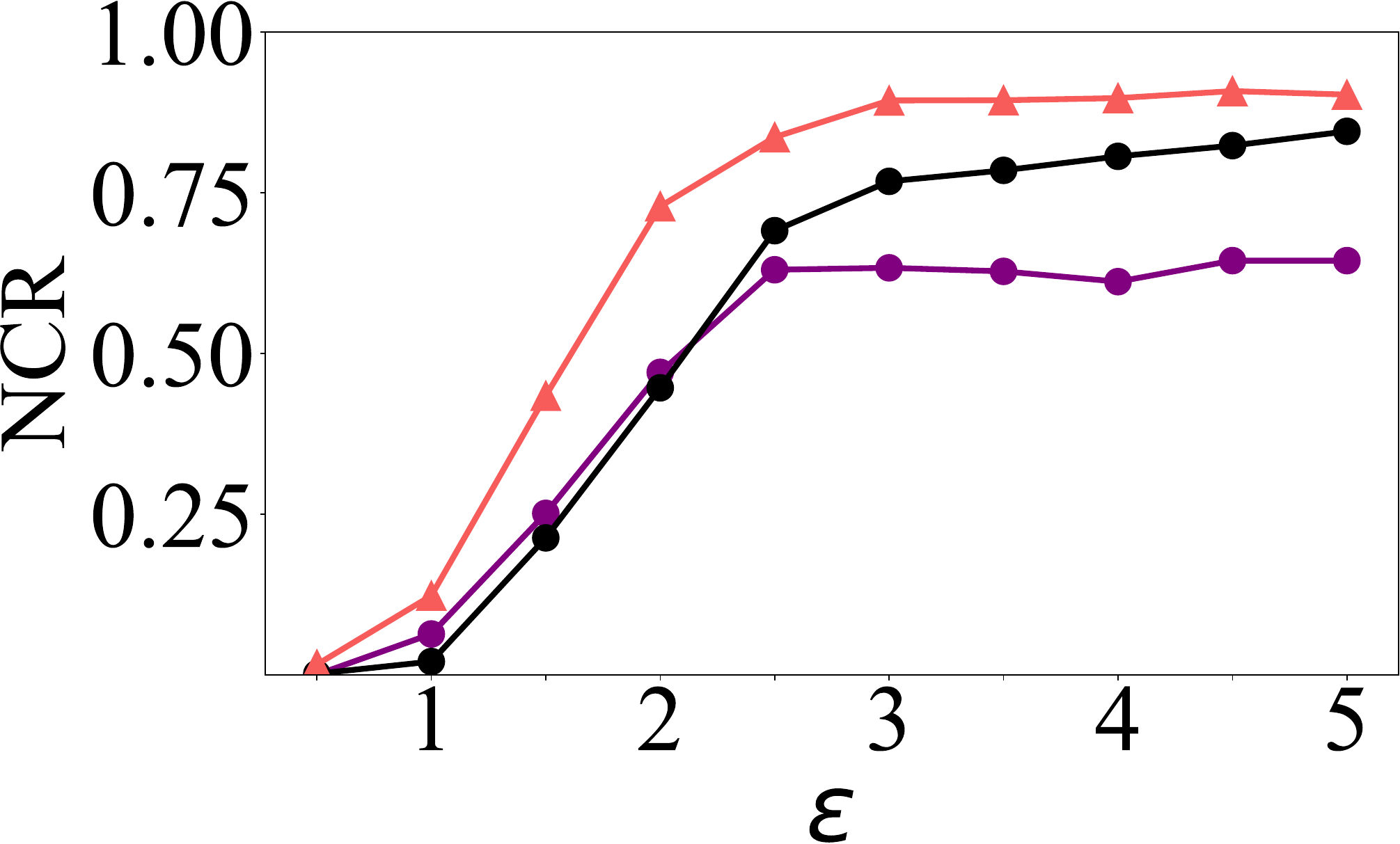}
	}
        \subfloat[TYS, $k$=20]{
        	\includegraphics[width=0.31\columnwidth]{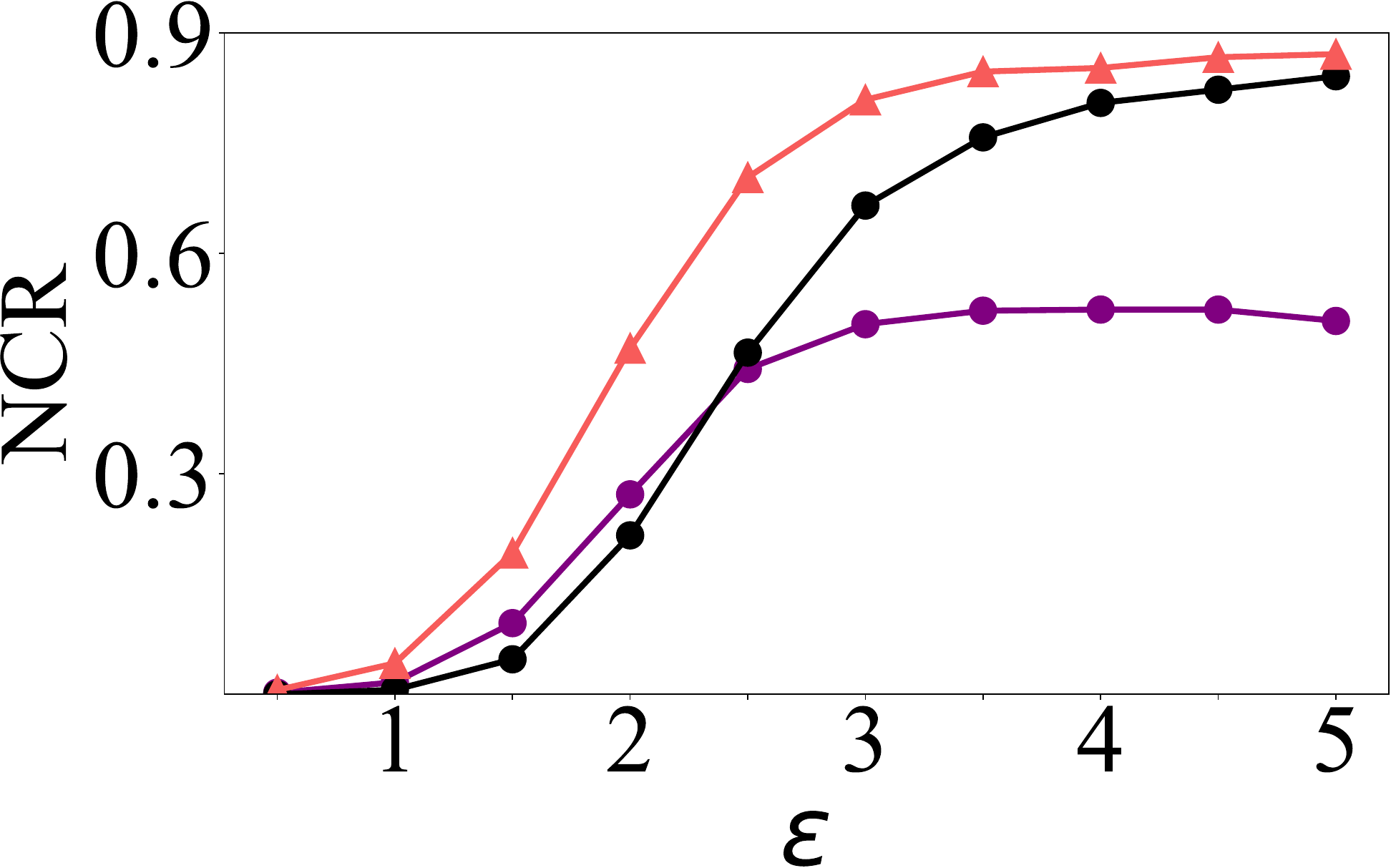}
        }
        \subfloat[TYS, $k$=40]{
    		\includegraphics[width=0.31\columnwidth]{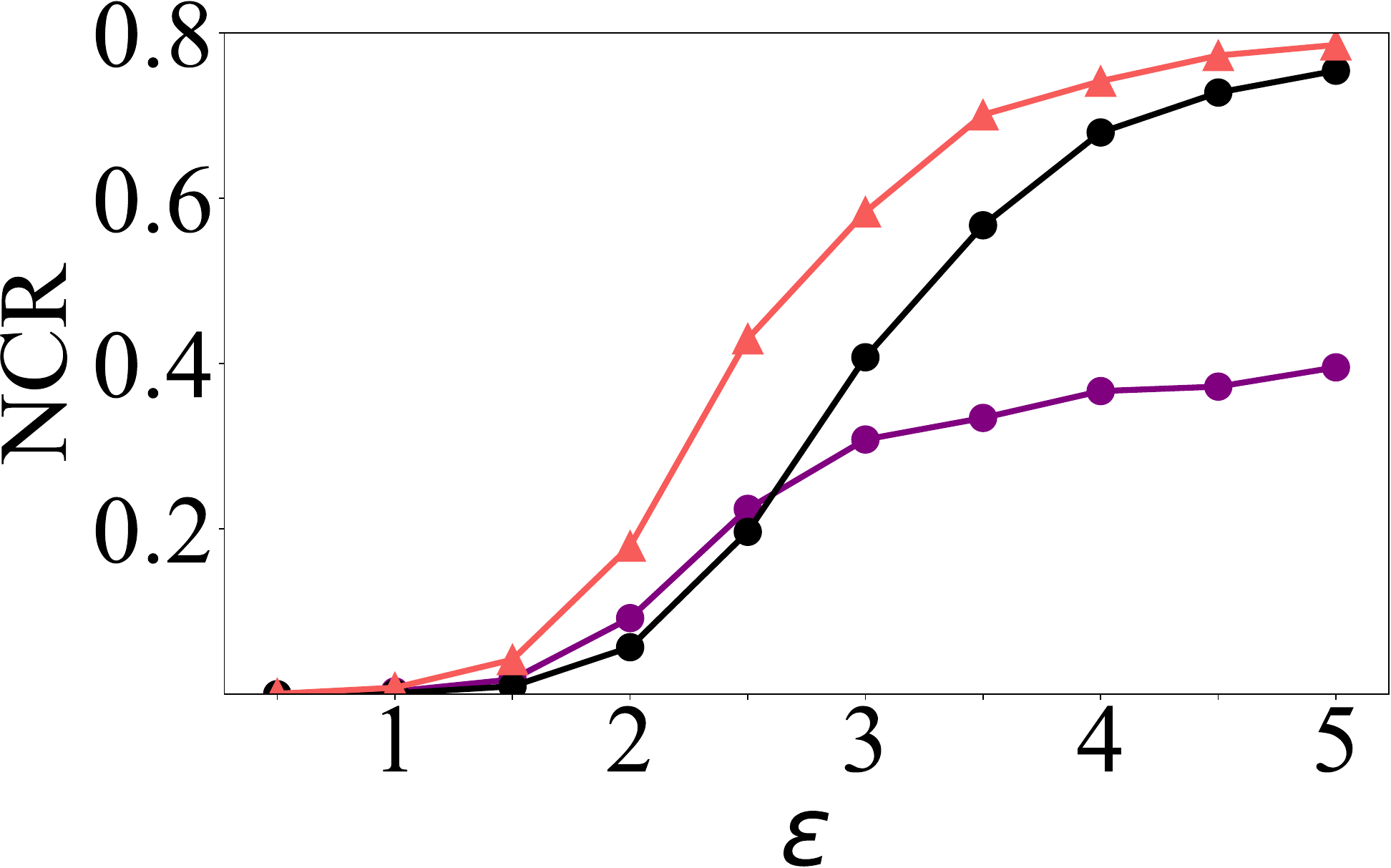}
    	}\hfill
     \vspace{-0.1in}
        \subfloat[UBA, $k$=10]{
            	\includegraphics[width=0.31\columnwidth]{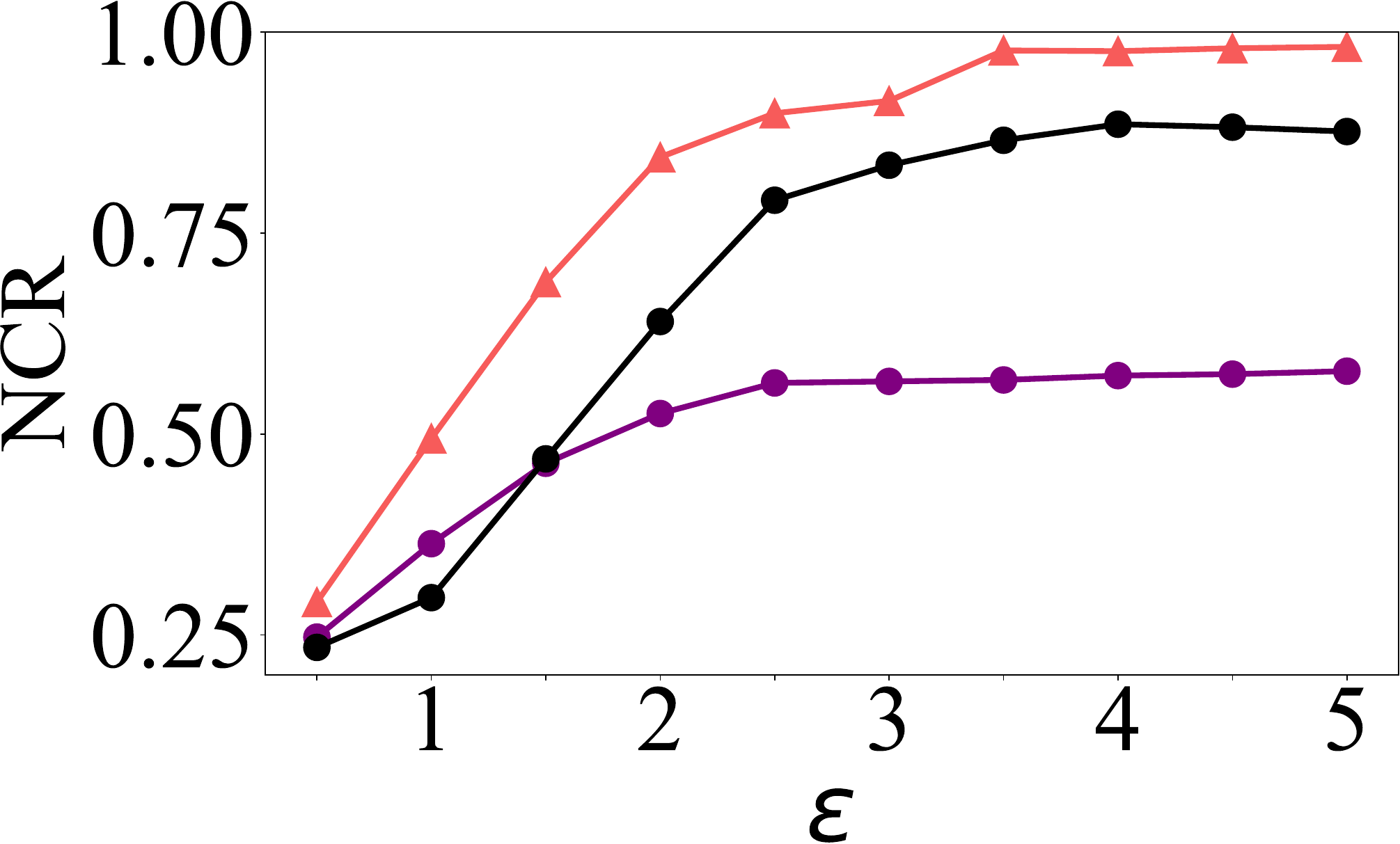}
            }
    	\subfloat[UBA, $k$=20]{
    		\includegraphics[width=0.315\columnwidth]{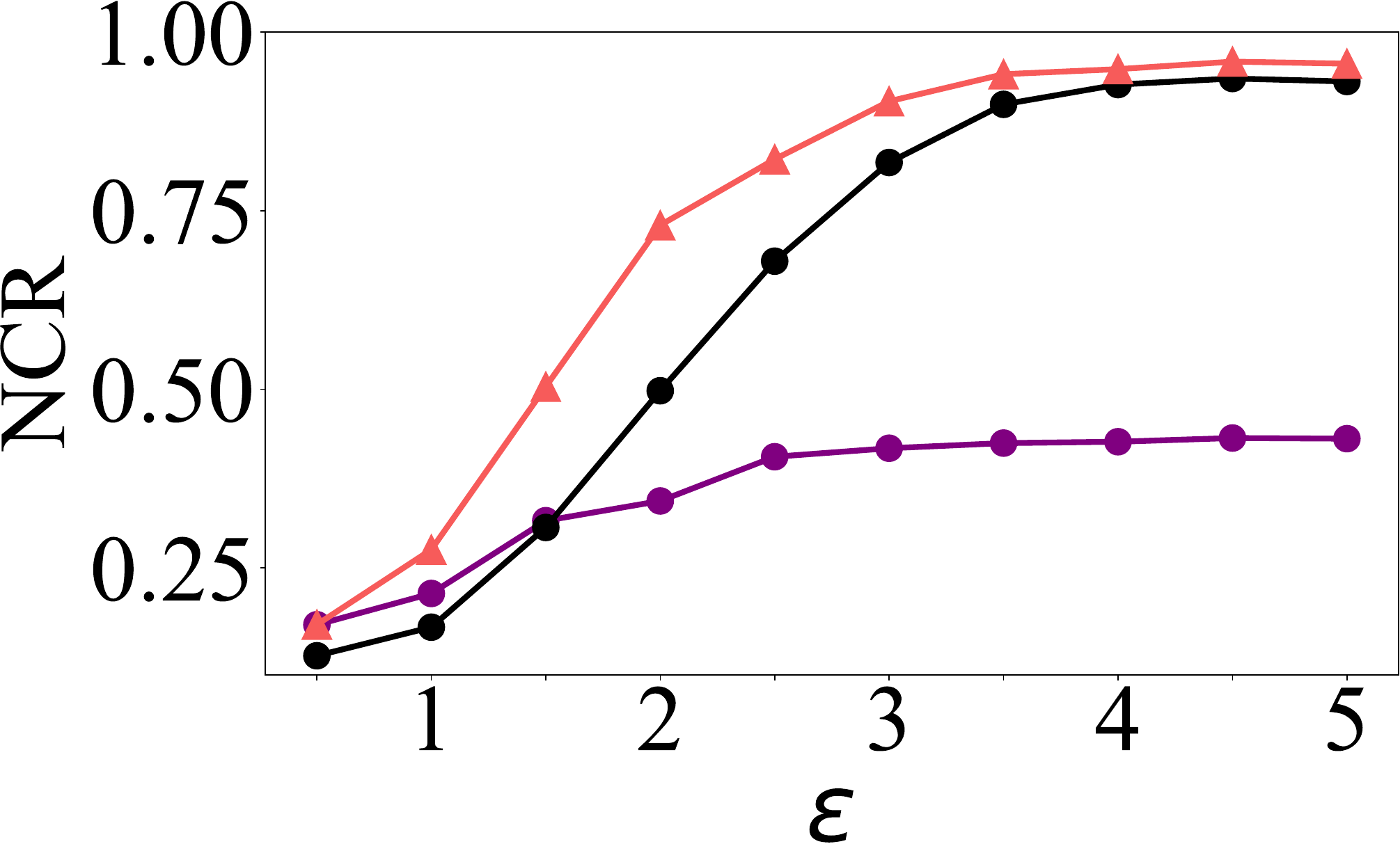}
    	}
    	\subfloat[UBA, $k$=40]{
    		\includegraphics[width=0.32\columnwidth]{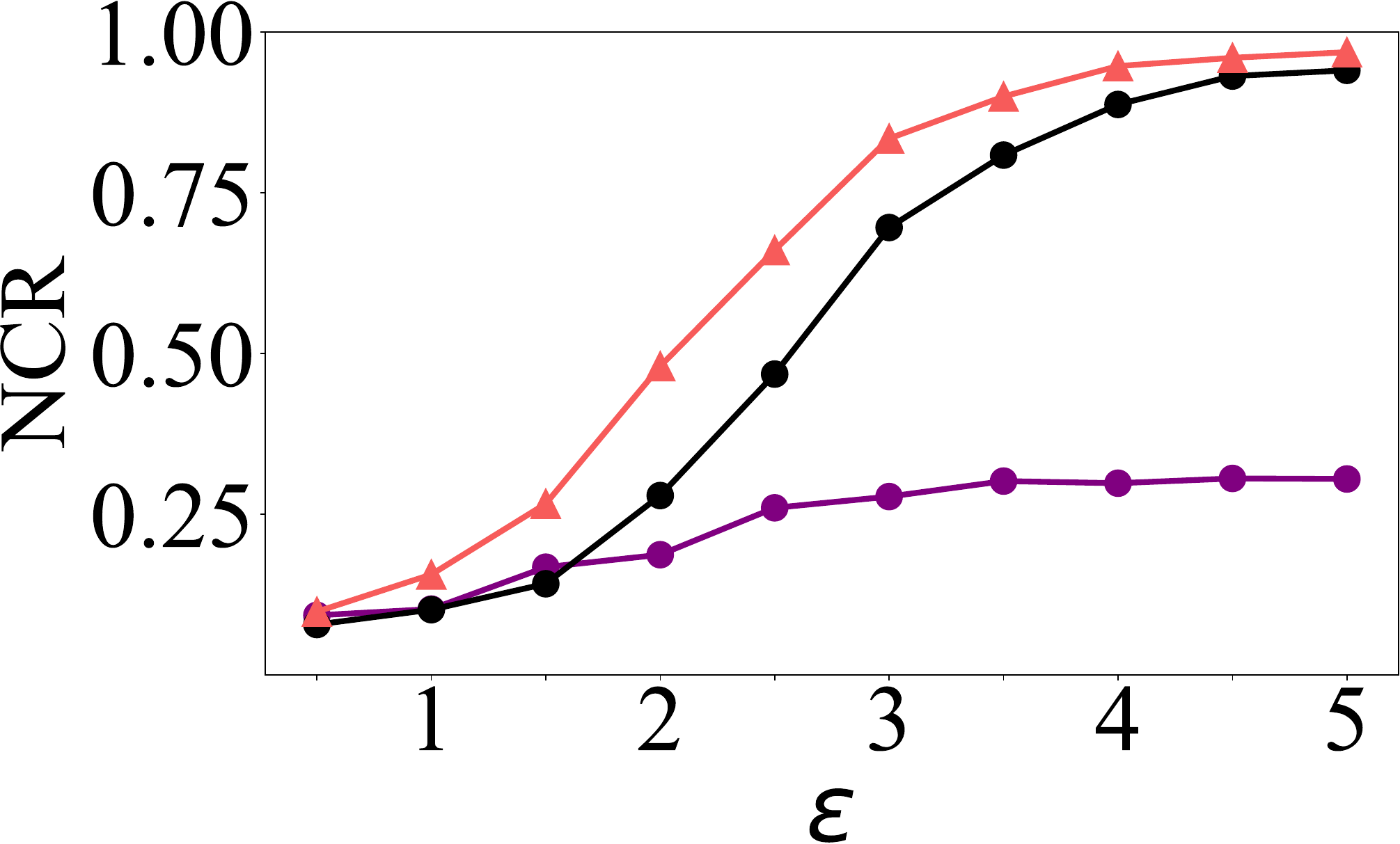}
    	}\hfill
    \vspace{-0.1in}
        \subfloat[SYN, $k$=10]{
        	\includegraphics[width=0.31\columnwidth]{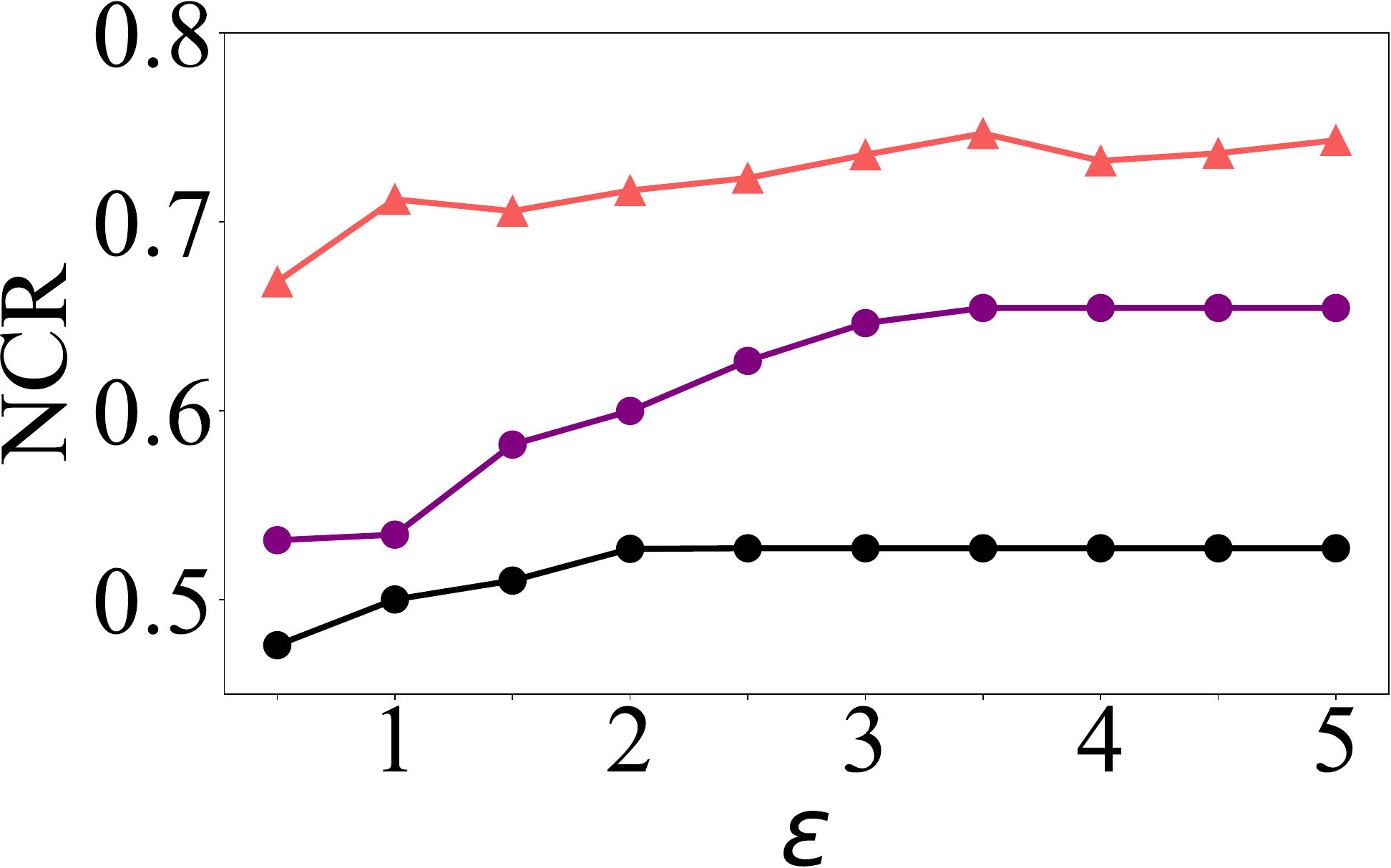}
        }
    	\subfloat[SYN, $k$=20]{
    		\includegraphics[width=0.31\columnwidth]{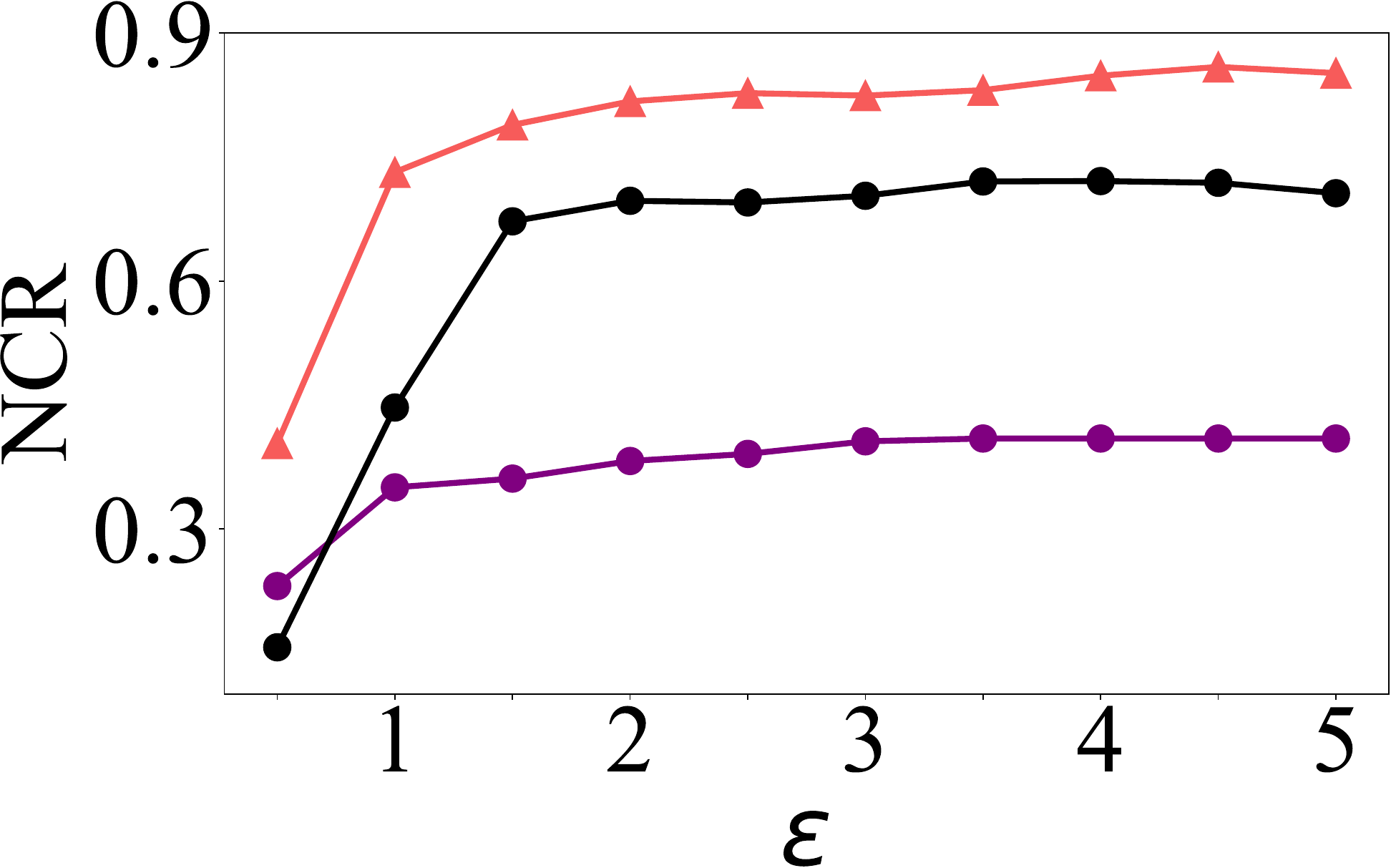}
    	}
    	\subfloat[SYN, $k$=40]{
    		\includegraphics[width=0.32\columnwidth]{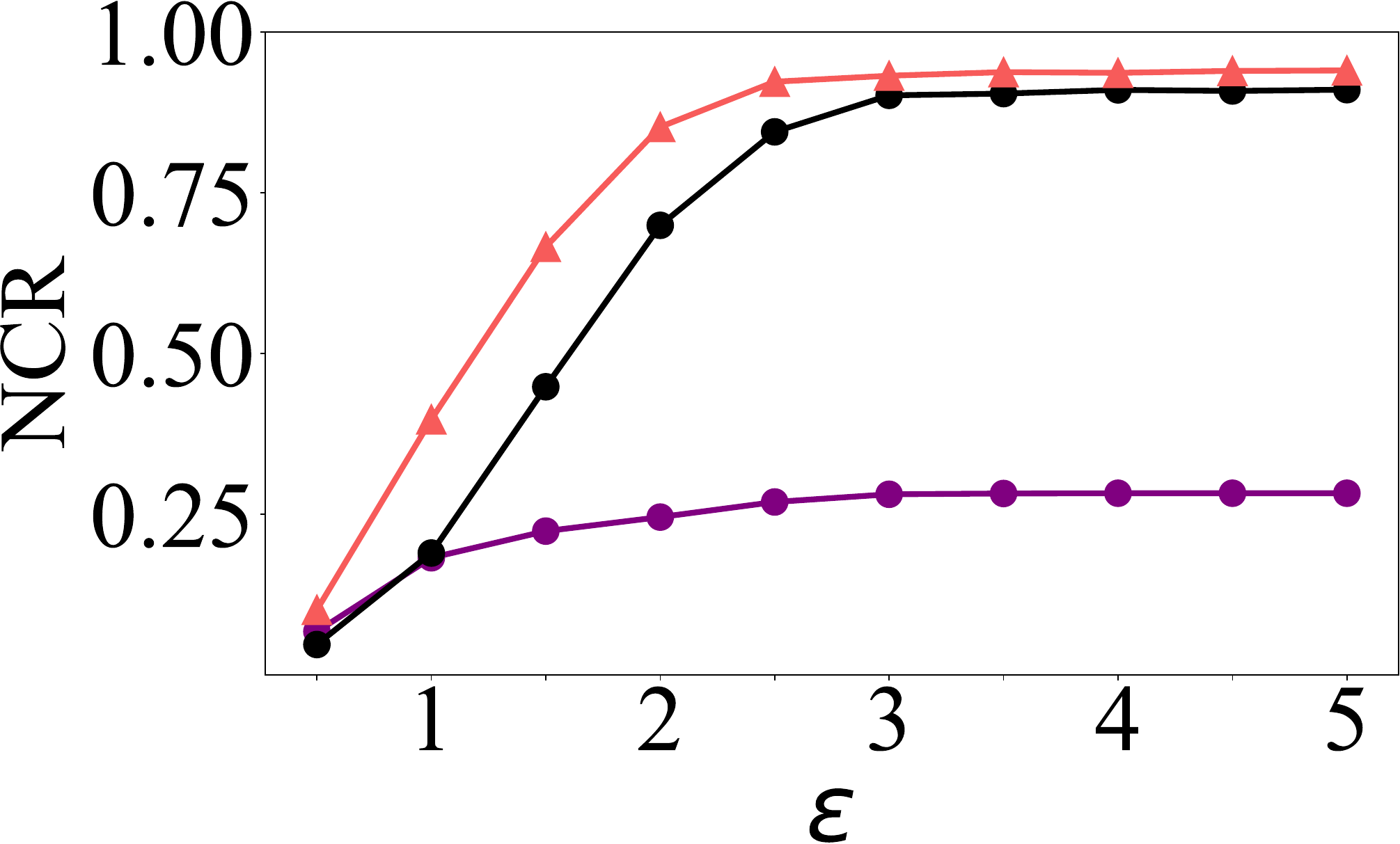}
    	}\hfill
    \vspace{-0.1in}
    \caption{NCR scores vs. privacy budget $\epsilon$ under different $k$.}
    \vspace{-0.15in}
    \label{fig_NCR}
\end{figure}

{\bf Baselines and Parameter Setting}. 
The most related study focuses on the cross-party setting is \cite{SHH2023}. Since it does not fully satisfy LDP, we substitute the GRRX mechanism proposed by the authors with the $k$-RR mechanism to ensure the LDP guarantee, denoted as GTF. Another baseline, FedPEM, is described in Algorithm~\ref{FAPEM}. It directly employs PEM, the state-of-the-art mechanism in single party settings, in each party, and then uploads all the identified local heavy hitters and their counts to the server, which then aggregates and reports the overall results. 
\textcolor{black}{For a fair comparison, we use $k$-RR as the FO (we also investigate the impacts of different FOs, see Section~\ref{subsec:different FOs}) and set maximum length $m=48$. For $k$-RR, we assign a dummy item to out-of-domain items. We set the granularity $g=24$ (i.e., step size $m/g=2$) since $g$ should not be too large (see Section~\ref{subsec:ablation study}). We recommend $g=24$ or $12$ depending on the population size. A large population can support a fine granularity while reducing the accumulated sampling errors. $\beta$ determines the proportion of users for pruning candidates at each level, functioning as a test set. Therefore, we set $\beta=0.1$, following the common practice in machine learning. We heuristically set the level of the shared shallow trie $g=\lfloor0.25g\rfloor$ and assign 10\% users for the estimations in this phase as a warm start to align all parties.} All the mechanisms are executed 50 times and the average is plotted. 

{\bf Metrics}. 
We evaluate the utility of the mechanisms using two metrics following the previous study~\cite{WLJ2019}. The first one is the F1 score~\cite{MRS2008}, which is the harmonic mean of precision and recall: 
$F1 = 2pr / (p + r), $
where $p = |R^k_T \cap R^k| / |R^k|$, $ \quad r = |R^k_T \cap R^k| / |R^k_T|$, $R^k_T$ is the top $k$ ground truths, and $R^k$ is the estimated heavy hitters. 

The other metric is the Normalized Cumulative Rank (NCR)~\cite{WLJ2019}, 
which assigns a higher penalty to missing the most frequent value than to others, by employing a quality function: 
$NCR = \sum_{v \in R^k} q(v)$ $/$$\sum_{v \in R^k_T} q(v), $ 
where $q(v)=k-rank(v)$ if $v$ is ranked within top $k$, otherwise $q(v)=0$. Both F1 and NCR scores fall within the range of $[0,1]$, where larger scores indicate better performance.

\begin{figure*}[h]
    \centering
    \includegraphics[width=0.25\textwidth]{experimental_figures/fig_legend.pdf} \\
    \vspace{-0.15in}
    \begin{minipage}{0.9\textwidth}
        \subfloat[RDB, OUE, $k$=10]{
        	\includegraphics[width=0.188\textwidth]{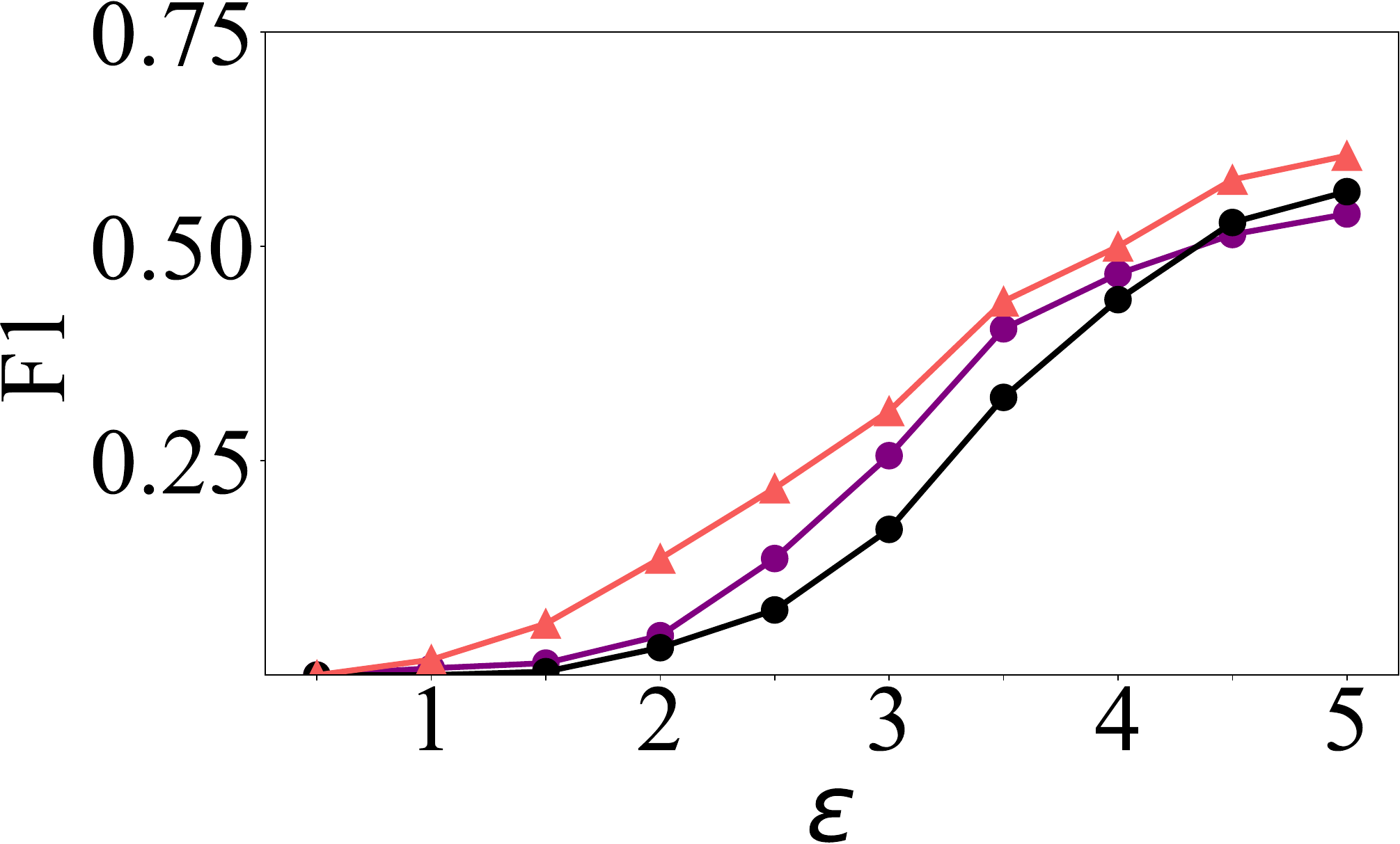}
        }\hfill
        \subfloat[YCM, OUE, $k$=10]{
    		\includegraphics[width=0.185\textwidth]{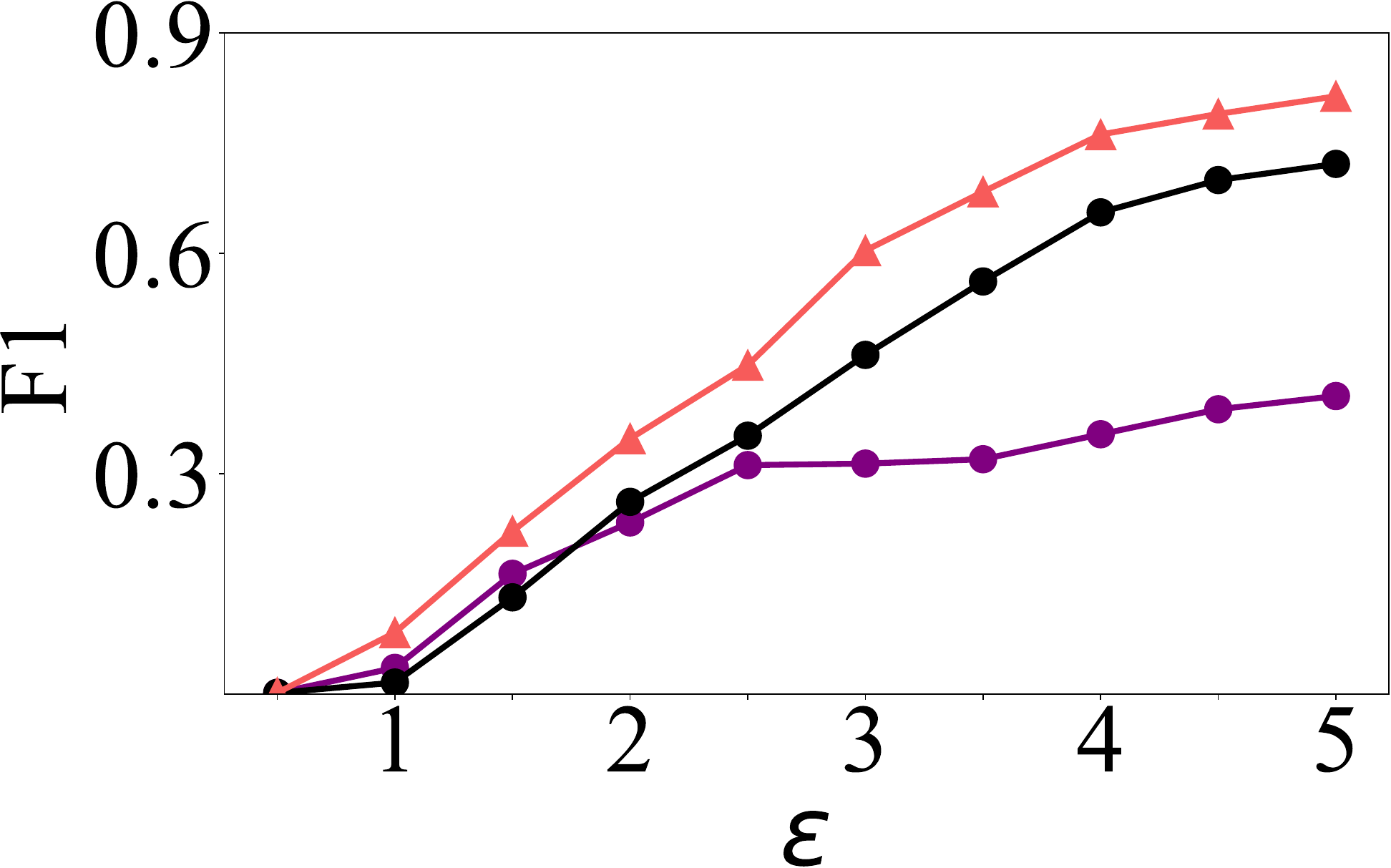}
    	}\hfill
        \subfloat[TYS, OUE, $k$=10]{
    		\includegraphics[width=0.185\textwidth]{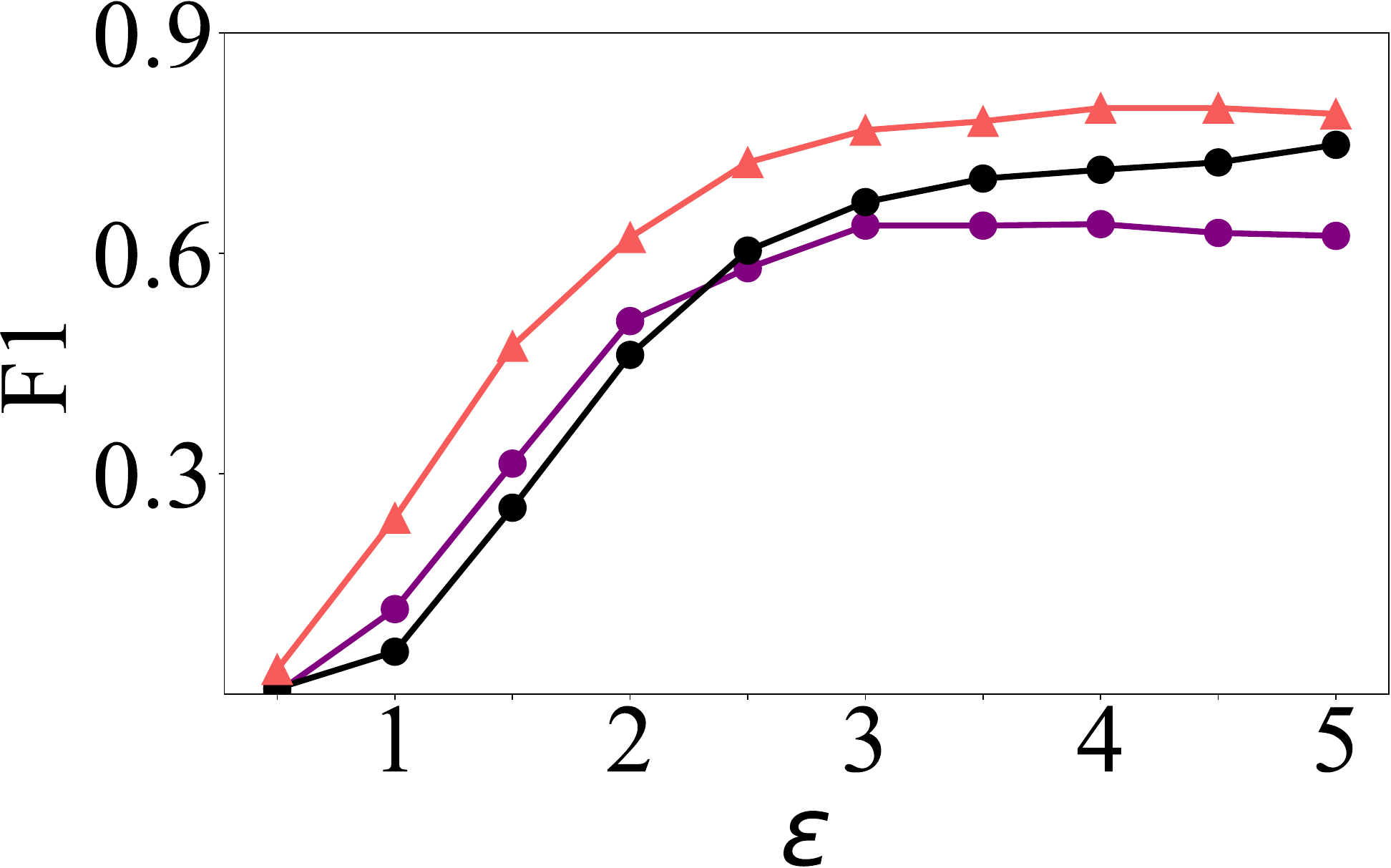}
    	}\hfill
        \subfloat[UBA, OUE, $k$=10]{
    		\includegraphics[width=0.185\textwidth]{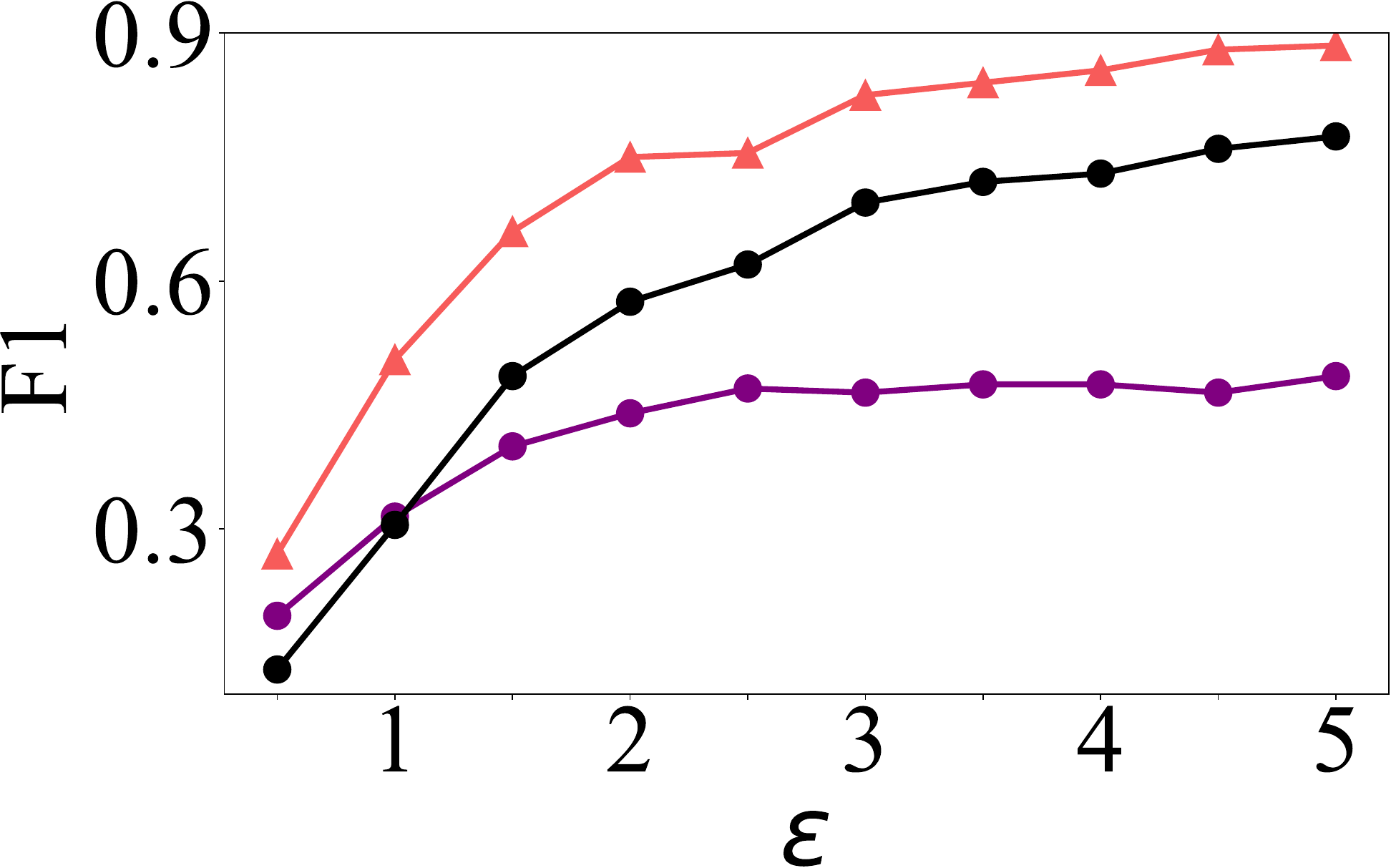}
    	}\hfill
        \subfloat[SYN, OUE, $k$=10]{
    		\includegraphics[width=0.185\textwidth]{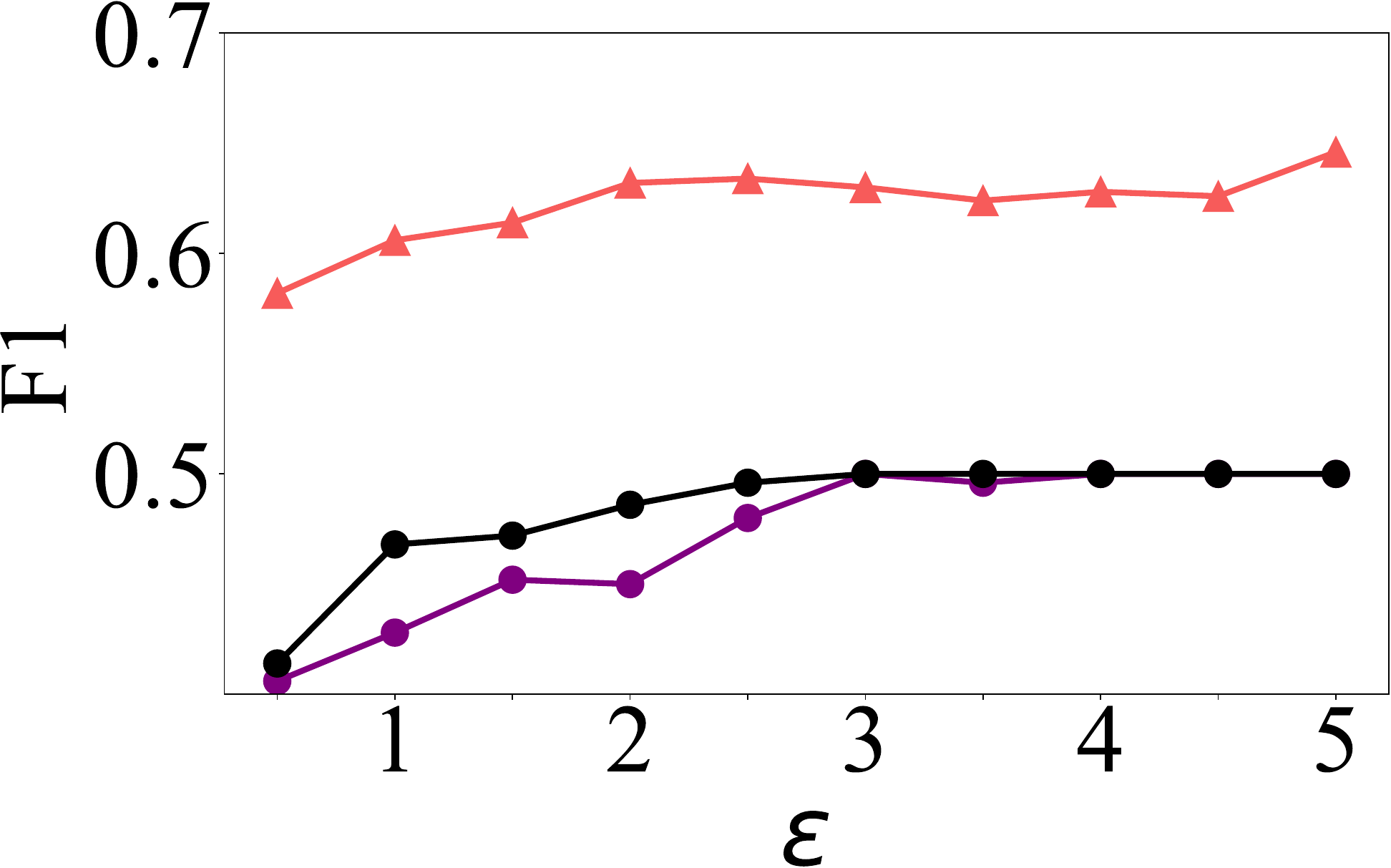}
    	}\hfill
    
    \vspace{-0.1in}
     \subfloat[RDB, OLH, $k$=10]{
        	\includegraphics[width=0.185\textwidth]{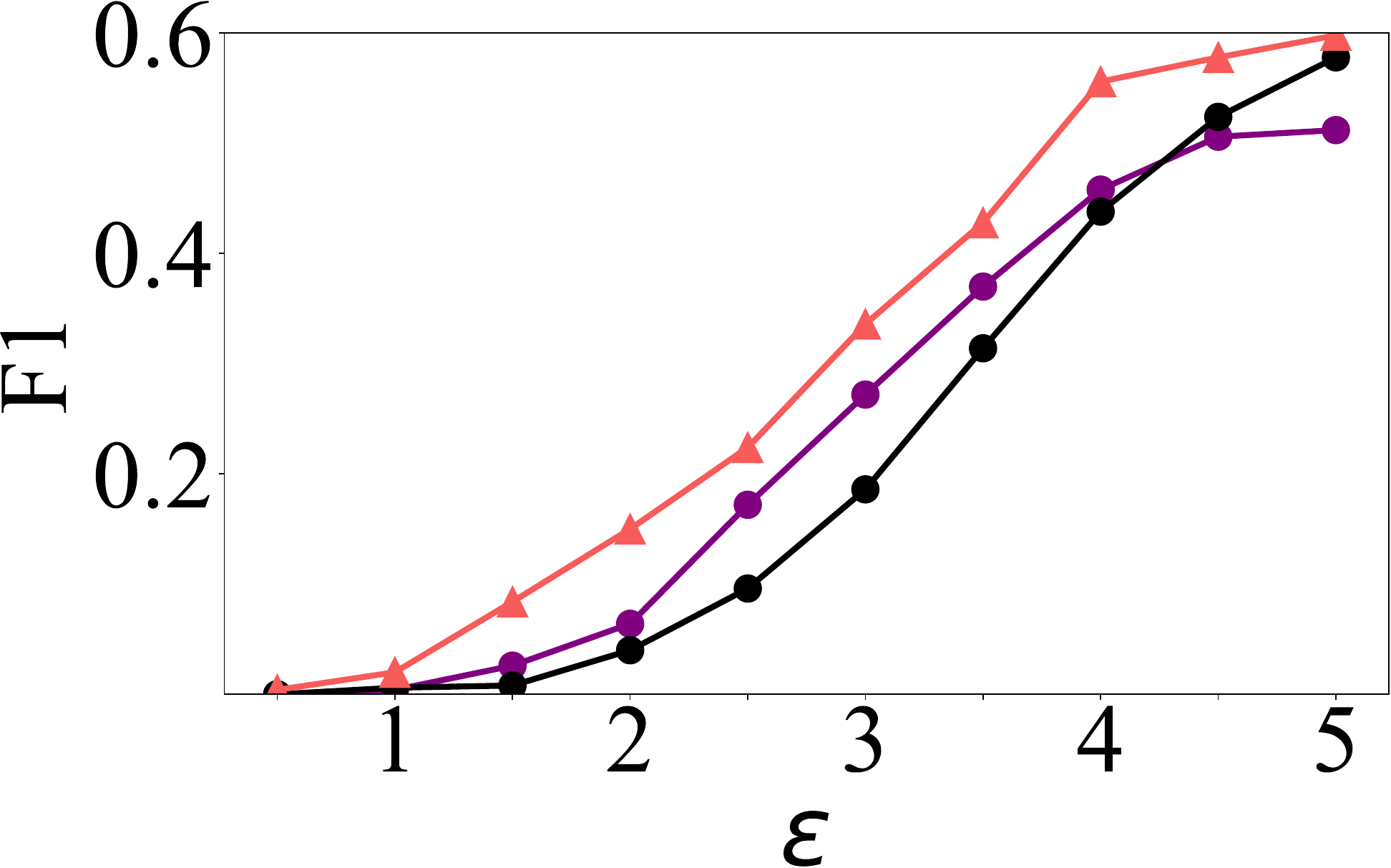}
        }\hfill
        \subfloat[YCM, OLH, $k$=10]{
    		\includegraphics[width=0.185\textwidth]{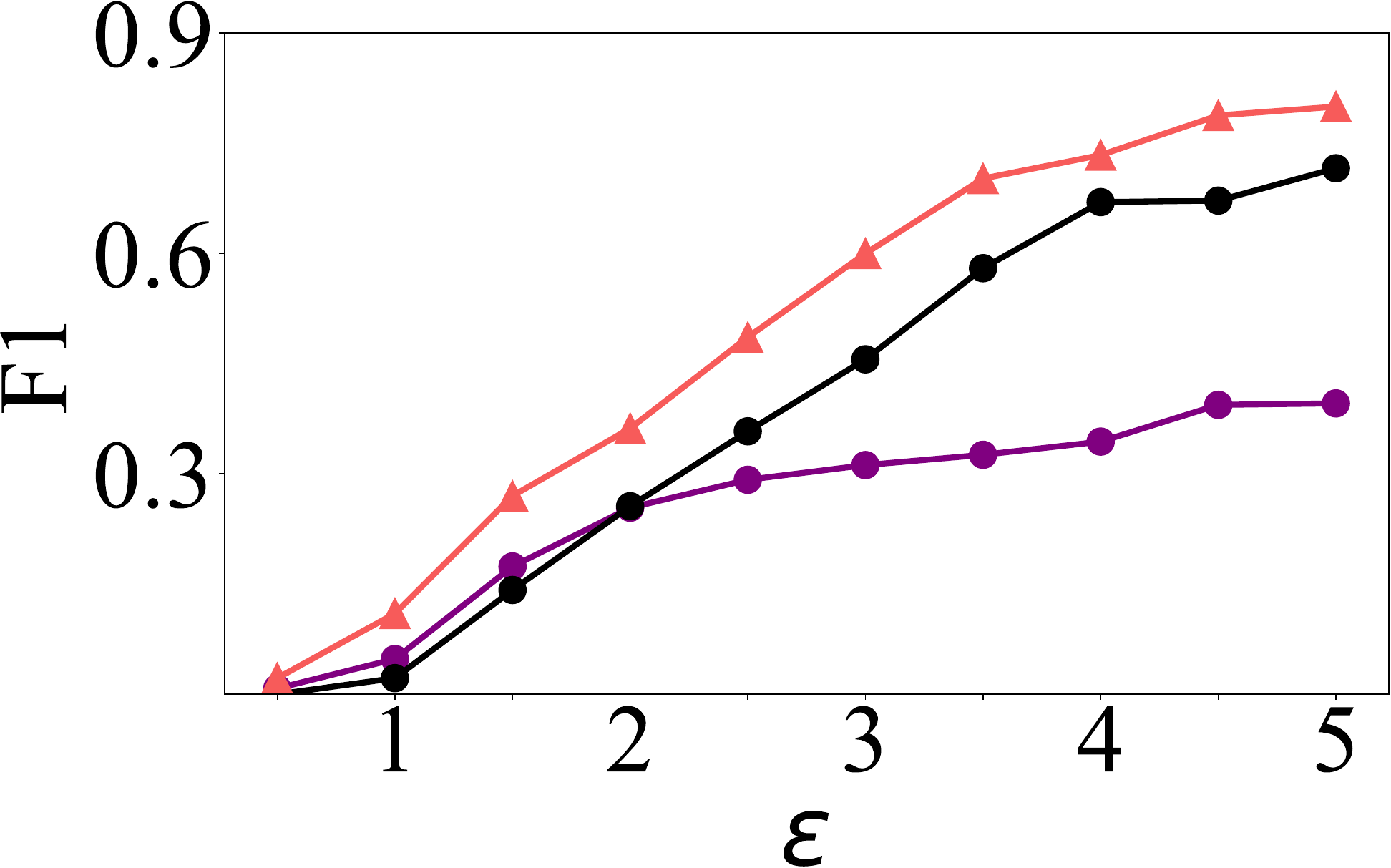}
    	}\hfill
        \subfloat[TYS, OLH, $k$=10]{
    		\includegraphics[width=0.185\textwidth]{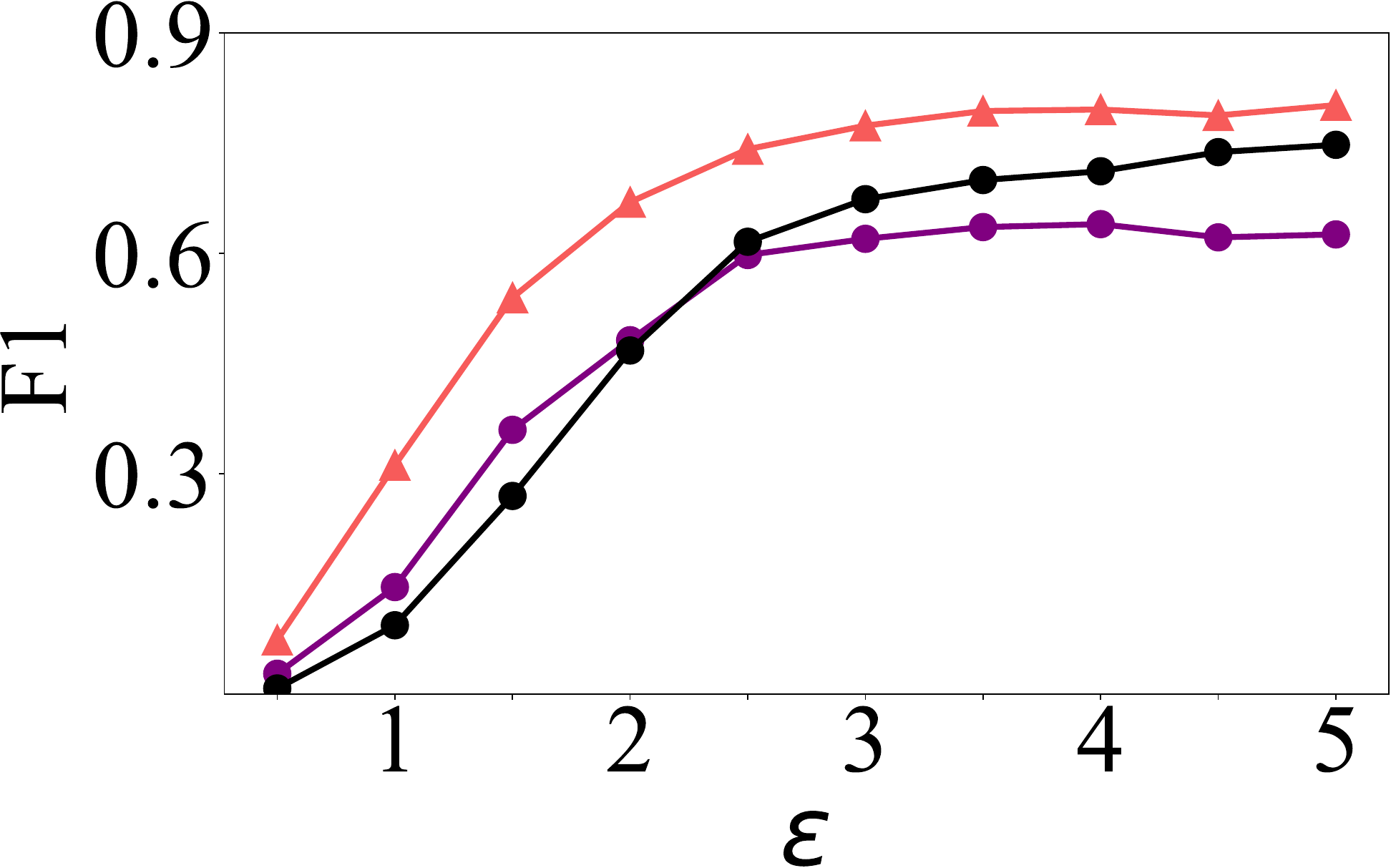}
    	}\hfill
        \subfloat[UBA, OLH, $k$=10]{
    		\includegraphics[width=0.185\textwidth]{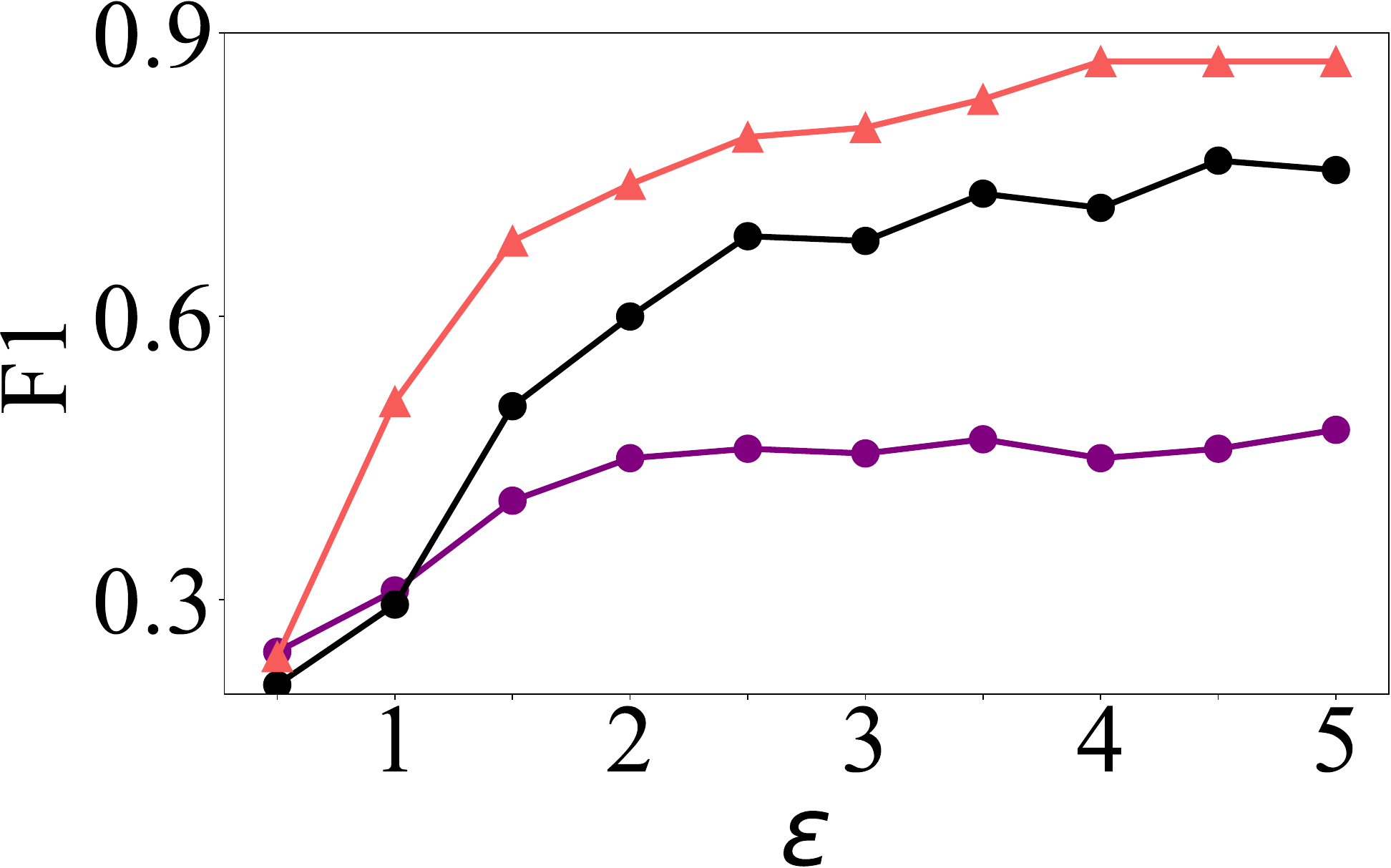}
    	}\hfill
        \subfloat[SYN, OLH, $k$=10]{
    		\includegraphics[width=0.185\textwidth]{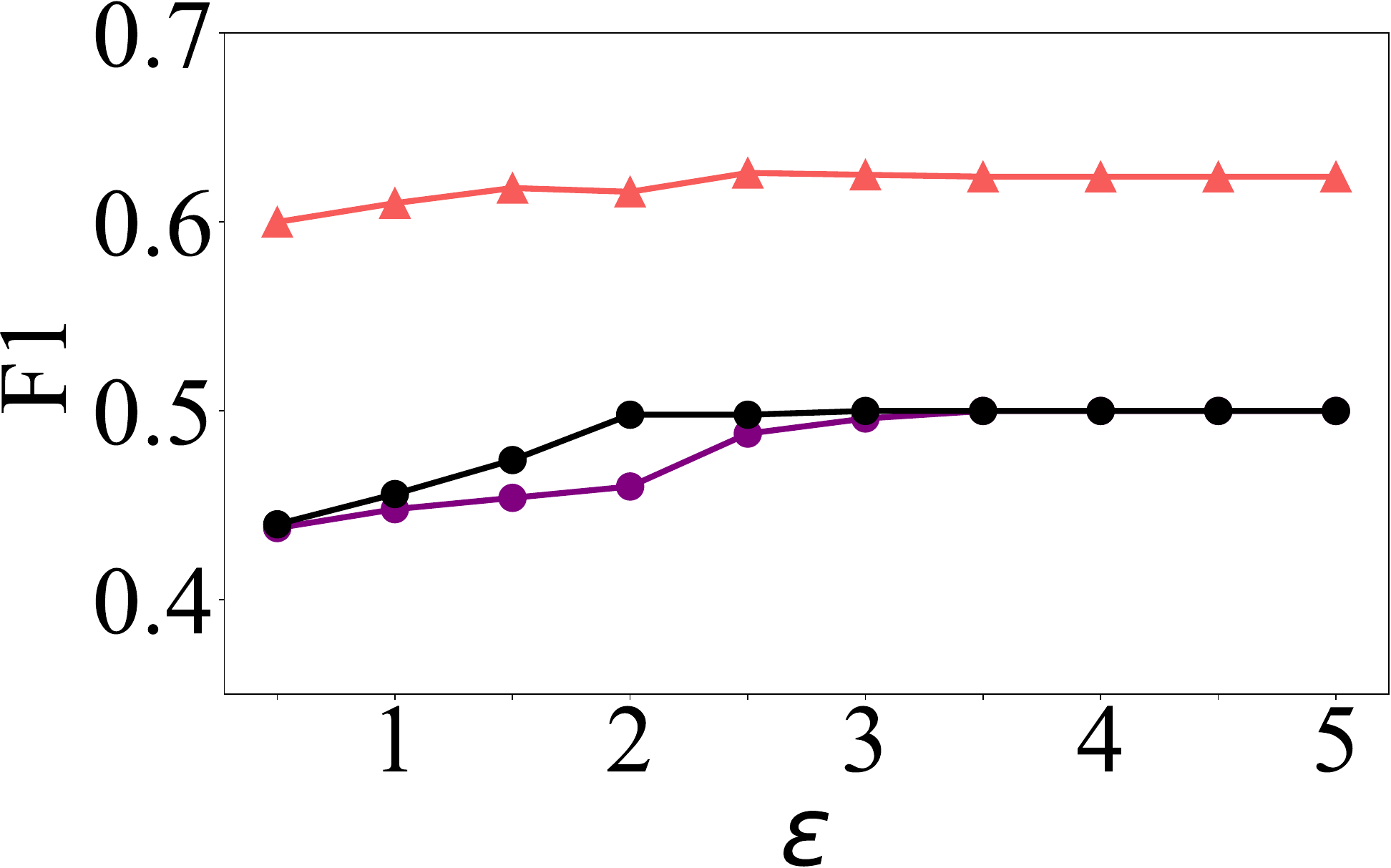}
    	}\hfill
    \end{minipage}
    \vspace{-0.1in}
    \caption{F1 scores vs. privacy budget $\epsilon$ under OUE and OLH.}
    \label{fig_different_FO}
\end{figure*}

\begin{figure*}[t]
    \centering
    \includegraphics[width=0.6\textwidth]{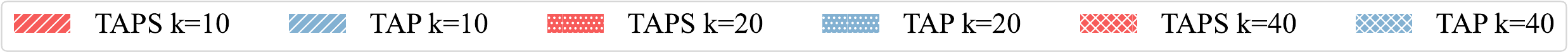} \\
    \vspace{-0.15in}
    \begin{minipage}{0.9\textwidth}
        \subfloat[RDB]{
        	\includegraphics[width=0.185\textwidth]{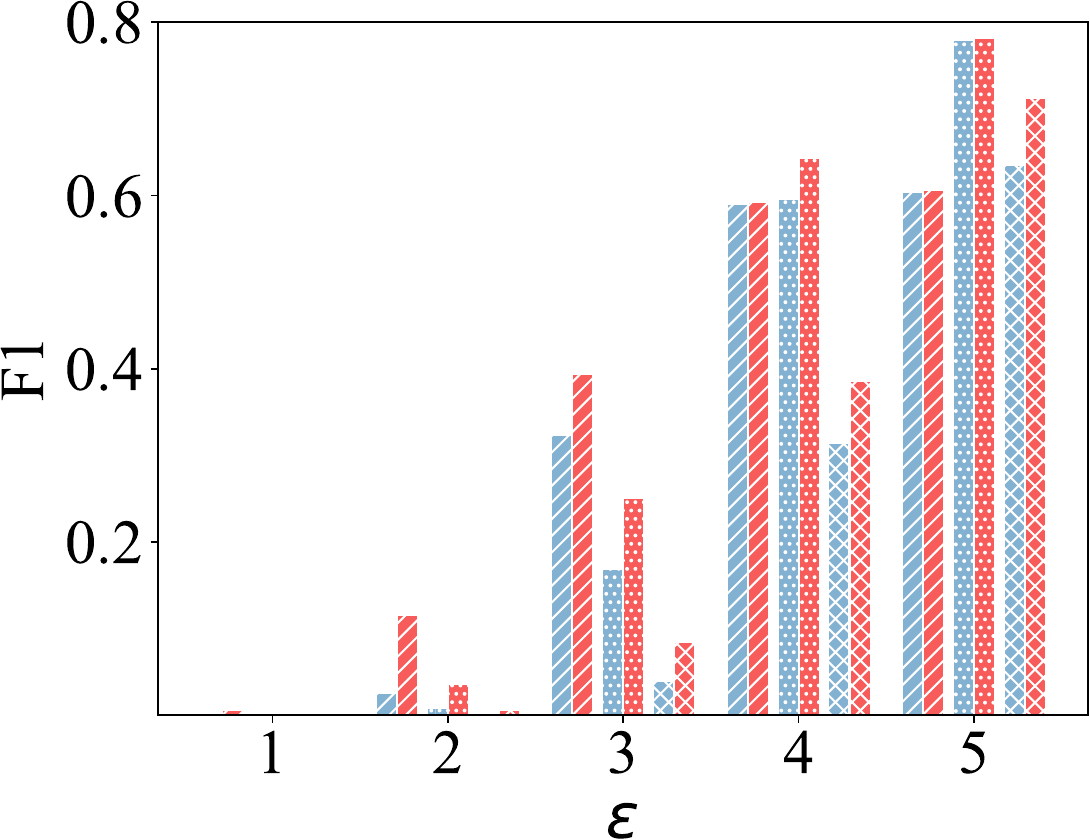}
        }\hfill
        \subfloat[YCM]{
    		\includegraphics[width=0.185\textwidth]{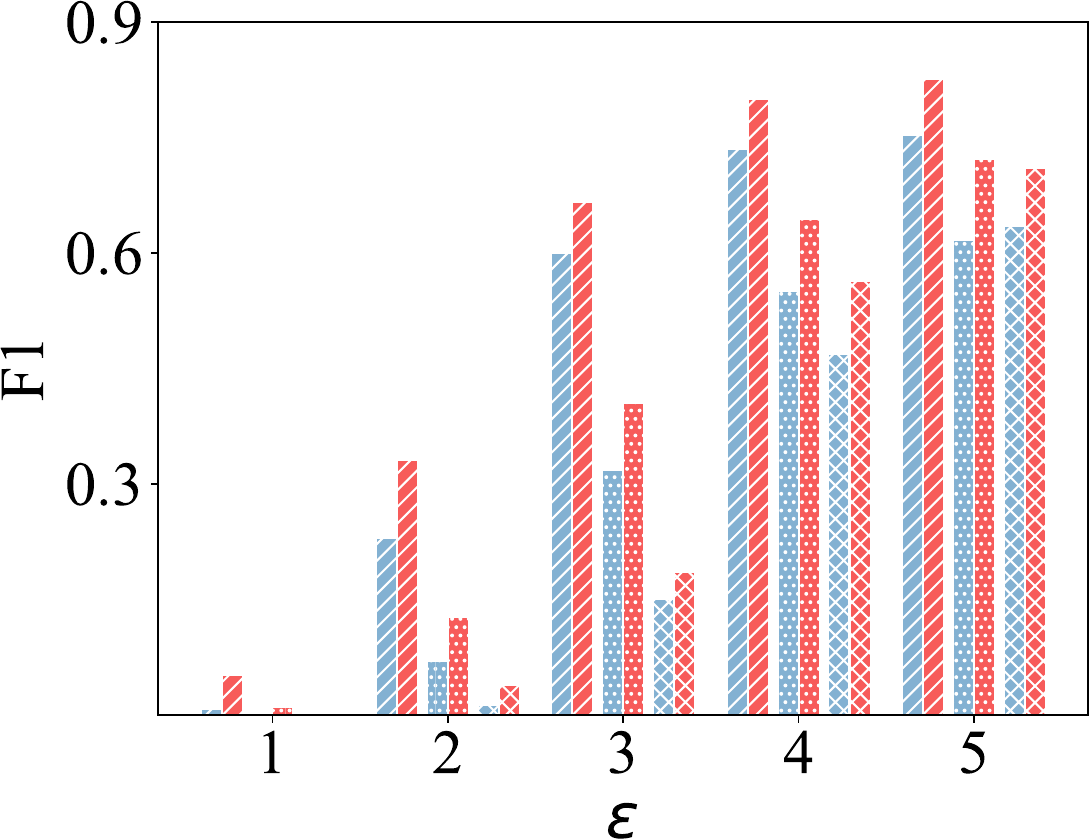}
    	}\hfill
        \subfloat[TYS]{
    		\includegraphics[width=0.185\textwidth]{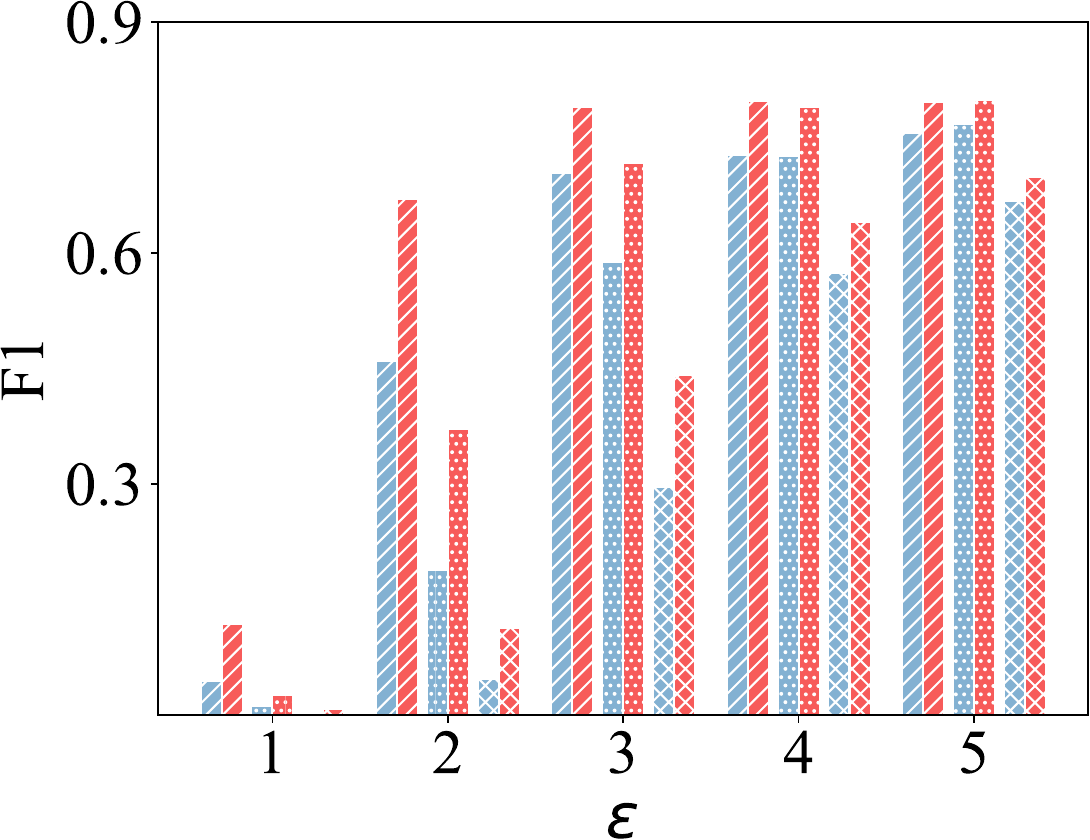}
    	}\hfill
        \subfloat[UBA]{
    		\includegraphics[width=0.185\textwidth]{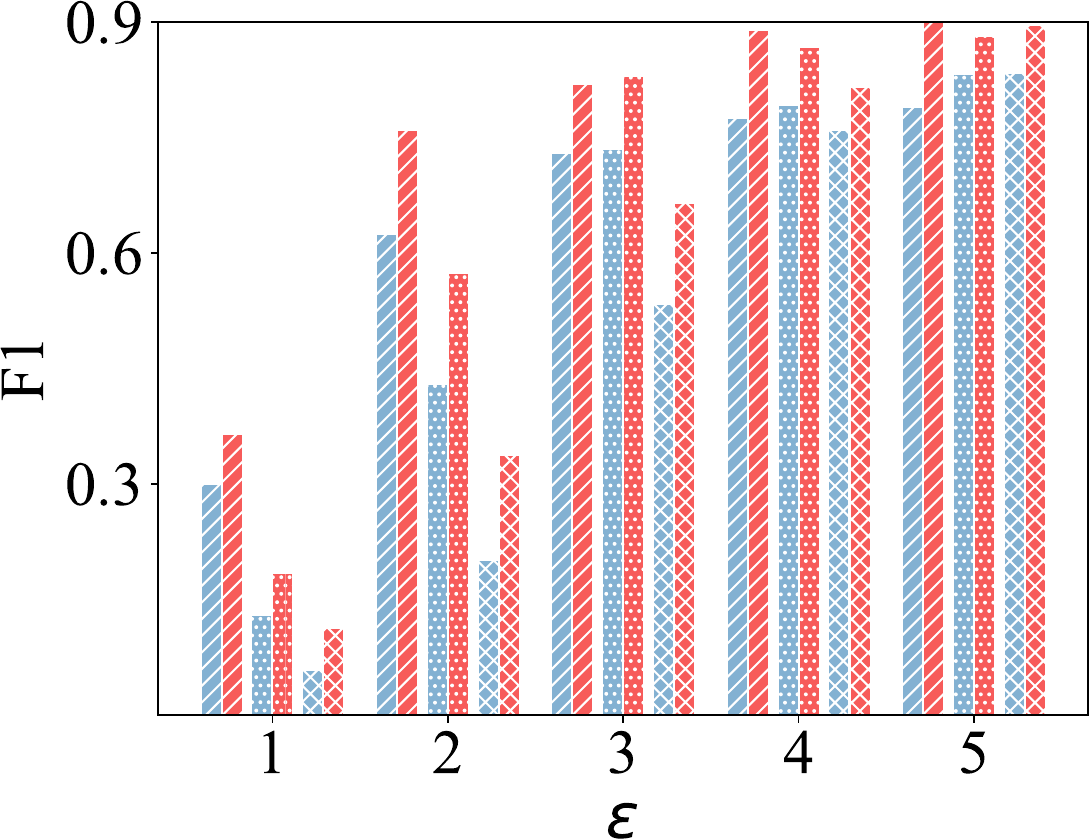}
    	}\hfill
        \subfloat[SYN]{
    		\includegraphics[width=0.185\textwidth]{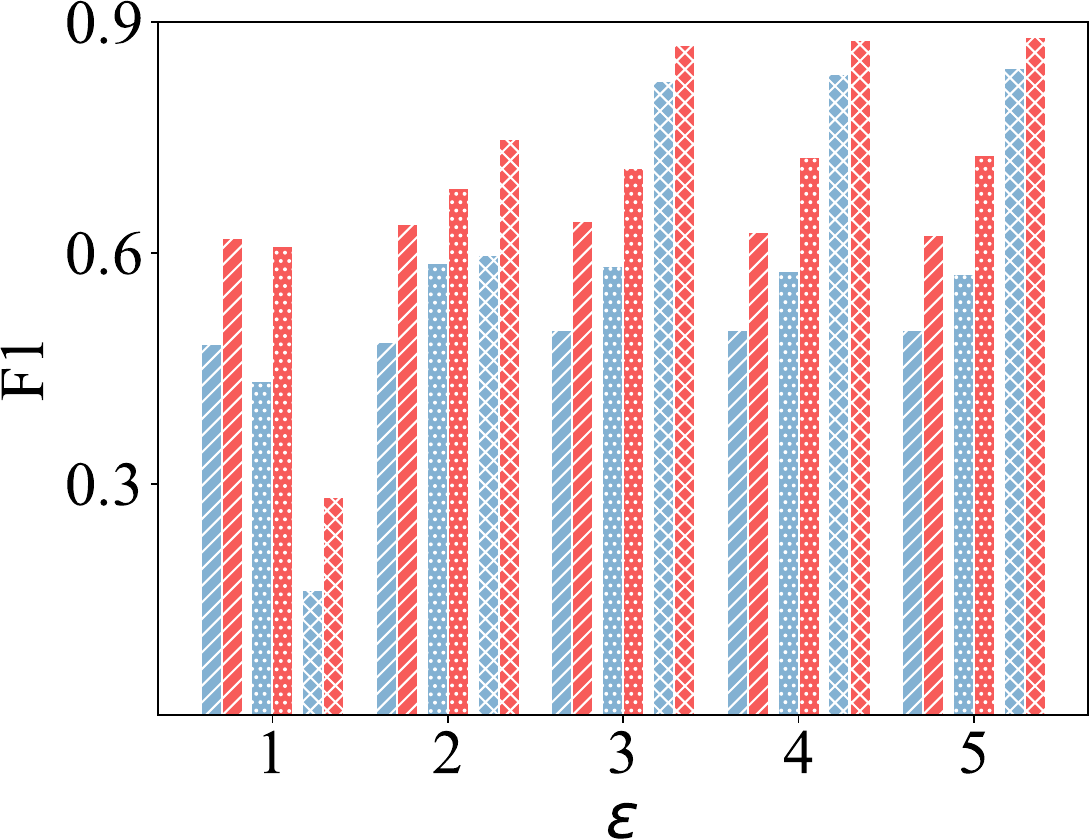}
    	}\hfill
    \end{minipage}
    \vspace{-0.15in}
    \caption{F1 scores of TAPS with or without the consensus-based pruning under varying privacy budget $\epsilon$.}
    \vspace{-0.1in}
    \label{fig_without_pruning}
\end{figure*}

\subsection{Overall Results on Utility}

\begin{table*}[t]
\centering
\caption{F1 scores with varying step sizes ($\epsilon=4$ and $k=10$).}
\vspace{-0.15in}
\resizebox{\textwidth}{!}{
\begin{tabular}{c|c|c|c|c|c|c|c|c|c|c|c|c|c|c|c|c|c|c|c|c|c|c|c|c}
\hline
\multicolumn{1}{c|}{Dataset} & Step size & GTF & FedPEM & TAPS & Dataset & Step size & GTF & FedPEM & TAPS & Dataset & Step size & GTF & FedPEM & TAPS & Dataset & Step size & GTF & FedPEM & TAPS & Dataset & Step size & GTF & FedPEM & TAPS \\ \hline
\multirow{3}{*}{RDB} & 2 & 0.544 & 0.552 & \textbf{0.592} & \multirow{3}{*}{YCM} & 2 & 0.402 & 0.736 & \textbf{0.80} & \multirow{3}{*}{TYS} & 2 & 0.58  & 0.736 & \textbf{0.798} & \multirow{3}{*}{UBA} & 2 & 0.44  & 0.78  & \textbf{0.89} & \multirow{3}{*}{SYN} & 2 & 0.50  & 0.50  & \textbf{0.628}\\
                     & 4 & 0.392 & 0.632 & \textbf{0.756} &   & 4 & 0.392 & 0.632 & \textbf{0.756} &   & 4 & 0.642 & 0.69  & \textbf{0.78}  &   & 4 & 0.50  & 0.755  & \textbf{0.87}  &
                     & 4 & 0.50  & 0.50  & \textbf{0.64}\\
                     & 6 & 0.36  & 0.256 & \textbf{0.382} &   & 6 & 0.346 & 0.572 & \textbf{0.656} &   & 6 & 0.634 & 0.668 & \textbf{0.748} &   & 6 & 0.415 & 0.685  & \textbf{0.815}  &
                     & 6 & 0.498 & 0.50  & \textbf{0.63}  \\ \hline
\end{tabular}}
\vspace{-0.1in}
\label{table_different_stepsize}
\end{table*}

We first evaluate the performance of the proposed TAPS mechanism over five datasets based on F1 score. We set privacy budget $\epsilon$ from $1$ to $5$, and the number of frequent items $k$ from $10$ to $40$. As shown in Figure~\ref{fig_F1}, 
although GTF can filter some noisy items when the privacy budget $\epsilon$ is small, it ignores the impacts of different quantities across parties and prunes too many inherent frequent but similar items, resulting in a poor performance. In the case of FedPEM, due to the independent estimation, it estimates many redundant prefixes or items and thus cannot lead to good performance. In contrast, our TAPS mechanism significantly outperforms baselines in all cases, even when $k$ is relatively large and $\epsilon$ is small and as the number of parties grows. For example, when $\epsilon = 1$, TAPS achieves an F1 score of 0.283 on the SYN dataset, compared to 0.1395 for the best baseline, demonstrating an improvement of over 100\%. This highlights the effectiveness of our adaptive extension, shared trie construction, and the consensus-based pruning.

With regard to the NCR metric, the TAPS mechanism always outperforms the baselines, as shown in Figure~\ref{fig_NCR}. The GTF mechanism achieves a higher NCR score in SYN when $k=10$ compare to its F1 score.  This discrepancy is due to the presence of some items in SYN whose frequencies are extremely high in certain parties. Therefore, with a relatively small $k=10$ and a large number of parties in SYN, GTF can achieve a higher NCR score. However, as $k$ increases, it drops many true negatives and thus diminishes its utility. Unlike the baselines, our TAPS mechanism benefits from the proposed strategies and consistently identify accurate heavy hitters, further demonstrating its effectiveness.

\subsection{Impact Study on Different FOs}\label{subsec:different FOs}
We further investigate the impact of different FOs on the performance. Specifically, for all the solutions, we substitute $k$-RR with two typical FOs, namely OUE~\cite{WBL2017} and OLH~\cite{WBL2017}, respectively. Note that for OUE, we assign a dummy position in the encoded vector for the out-of-domain items or prefixes. As shown in Figure~\ref{fig_different_FO}, the results are mostly consistent with those using $k$-RR. The proposed TAPS mechanism outperforms the baselines, which demonstrates the advantages and robustness of the proposed strategies. This indicates that TAPS is a general solution for federated hitter identification, and can be well adaptable to different FOs. As such, we can freely choose an appropriate FO according to the requirements of communication constraints and domain size of specific application scenario~\cite{WBL2017} to achieve better performance.

\begin{table*}[t]
\centering
\caption{Scalability evaluation under varying user population in UBA ($\epsilon=4$ and $k=10$).}
\vspace{-0.15in}

\resizebox{0.7\textwidth}{!}{
\begin{tabular}{c|c|c|c|c|c|c|c|c|c|c|c|c|c}
\hline
& \multicolumn{3}{c|}{F1 Scores} & \multicolumn{5}{c|}{Communication Costs} & \multicolumn{5}{c}{Running Time} \\ 
\hline
\multicolumn{1}{c|}{User proportion} & GTF & FedPEM & TAPS & GTF & FedPEM & TAPS & OUE & OLH & GTF & FedPEM & TAPS  & OUE & OLH \\ \hline
25\% & 0.48  & 0.785  & 0.9   & 18 kb & 18 kb & 69 kb   & $>$ 2 PiB   &$> 10^{4}$ kb & 2.1105 s  & 2.0688 s  & 12.3642 s & $> 72$ h  & $> 72$ h \\ 
50\% & 0.44  & 0.78   & 0.89  & 18 kb    & 18 kb    & 70 kb   &$>$ 2 PiB    &$> 10^{4}$ kb & 5.9571 s  & 6.2007 s  & 40.5691 s & $> 72$ h & $> 72$ h \\ 
75\% & 0.42  & 0.76   & 0.88 & 17 kb    & 17 kb    &54 kb   &$>$ 2 PiB    &$> 10^{5}$ kb & 11.19 s   & 11.9729 s & 79.9358 s & $> 72$ h & $> 72$ h \\ 
100\% & 0.44 & 0.78   & 0.89 & 17 kb    & 17 kb    &75 kb   &$>$ 2 PiB    &$> 10^{5}$ kb & 17.5221 s & 19.8624 s & 128.0133 s & $> 72$ h & $> 72$ h \\ 
\hline
\end{tabular}
}
\vspace{-0.1in}
\label{table_scalability}
\end{table*}

\begin{table}[t]
\centering
\caption{F1 scores of TAPS with fixed/adaptive extension numbers ($\epsilon=4$ and $k=10$). 
}
\vspace{-0.17in}

\resizebox{0.65\columnwidth}{!}{\begin{tabular}{c|c|c|c|c|c}
\hline
& $t=\lfloor k/2 \rfloor$ & $t=k$ & $t=2k$ & $t=3k$ & Adaptive t \\ \hline
RDB & 0.424   & 0.574   & 0.582   & 0.546   & \textbf{0.598}\\ \hline
YCM & 0.644   & 0.786   & 0.782   & 0.762   & \textbf{0.80}\\ \hline
TYS  & 0.753  & 0.769   & 0.781   & 0.785   & \textbf{0.80}\\ \hline
UBA  & 0.89    & 0.89     & 0.885     & 0.87     & \textbf{0.89}\\ \hline
SYN  & 0.624    & 0.614     & 0.602     & 0.60     & \textbf{0.628}\\ \hline

\end{tabular}}
\vspace{-0.1in}
\label{table_adaptive_extension_number}
\end{table}

\begin{table}[t]
\centering
\caption{F1 scores of TAPS with or without the shared trie ($\epsilon=4$ and $k=10$). 
}
\vspace{-0.17in}

\resizebox{0.65\columnwidth}{!}{\begin{tabular}{c|c|c|c|c|c}
\hline
     & RDB & YCM & TYS & UBA & SYN \\ \hline
TAPS (w/o shared trie) & 0.584 & 0.785 & 0.783 & 0.878 & 0.61  \\ \hline
TAPS                   & \textbf{0.598}  & \textbf{0.80} & \textbf{0.80} & \textbf{0.89} & \textbf{0.628}                   \\ \hline

\end{tabular}}
\vspace{-0.1in}
\label{table_wo_shared_trie}
\end{table}

\begin{table}[t]
\centering
\caption{Average recall scores in addressing statistical heterogeneity ($\epsilon=4$, $k=10$, \small{$\uparrow$}: improvement over the best baseline). 
}
\vspace{-0.17in}

\resizebox{0.55\columnwidth}{!}{\begin{tabular}{c|c|c|c|c}
\hline
&\# parties & GTF & FedPEM & TAPS \\ \hline
             RDB & 2 & 0.304 & 0.302 & \textbf{0.413 \small{($\uparrow
$ 35.9\%)}} \\ \hline
             YCM & 4 & 0.262 & 0.255 & \textbf{0.368 \small{($\uparrow
$ 40.5\%)}} \\ \hline
             TYS & 6 & 0.296 & 0.283 & \textbf{0.333 \small{($\uparrow
$ 12.5\%)}} \\ \hline
             UBA & 6 & 0.295 & 0.294 & \textbf{0.325 \small{($\uparrow
$ 10.2\%)}} \\ \hline
             SYN & 8 & 0.162 & 0.162 & \textbf{0.179 \small{($\uparrow
$ 10.5\%)}} \\ \hline
\end{tabular}}
\vspace{-0.1in}
\label{table_statistical_heterogeneity}
\end{table}

\begin{table}[t]
\centering
\caption{F1 scores with varying data heterogeneity controlled by $\beta$ in SYN ($\epsilon=4$ and $k=10$). 
}
\vspace{-0.17in}

\resizebox{0.5\columnwidth}{!}{\begin{tabular}{c|c|c|c}
\hline
& GTF & FedPEM & TAPS \\ \hline
             $Dir_N(\beta=0.2)$ & 0.384  & 0.40  & \textbf{0.538} \\ \hline
             $Dir_N(\beta=0.5)$& 0.50 & 0.50   & \textbf{0.628} \\ \hline
             $Dir_N(\beta=0.8)$& 0.50 & 0.552 & \textbf{0.624} \\ \hline
\end{tabular}}
\vspace{-0.1in}
\label{table_data_heterogeneity}
\end{table}

\subsection{Ablation Study}\label{subsec:ablation study}
We explore the impact of various parameters or components on performance, including different step sizes, adaptive trie extension, and shared shallow trie construction. Then, we assess our TAPS mechanism in addressing statistical heterogeneity. We also evaluate the effectiveness of our consensus-based pruning strategy and the robustness of TAPS under varying degrees of data heterogeneity.

We first study the performance of TAPS and baselines under varying step sizes---different extension lengths $\lfloor m/g\rfloor\in\{2,4,6\}$, where $m$ is the maximum length and $g$ denotes the granularity (i.e., iterations). Note that we fix the privacy budget $\epsilon=4$ and query $k=10$. We can observe that in Table~\ref{table_different_stepsize}, TAPS consistently performs the best with its F1 scores shown in bold, which demonstrate the effectiveness of the proposed adaptive trie extension, shared shallow trie construction, and pruning strategies. Besides, the larger extension length also enhances the benefits of pruning a lot.

Next, we investigate the adaptive trie extension strategy and set $\epsilon=4$ and $k=10$. Unless otherwise specified, we set the step size/extension length to $2$ in the following experiments. We evaluate the TAPS mechanism with different fixed extension numbers from $\{\lfloor k/2\rfloor,k,2k,3k\}$. In Table~\ref{table_adaptive_extension_number}, we observe that the optimal fixed extension number varies across different datasets; however, our adaptive strategy consistently outperforms the fixed options, highlighting the benefits from adaptive trie construction. 
Additionally, we further assess the impact of the shared shallow trie construction on the accuracy of federated heavy hitters. Specifically, we remove the shared shallow trie construction step in TAPS and then compare the results with those by TAPS. As shown in Table~\ref{table_wo_shared_trie}, the TAPS with the shared shallow trie construction performs better across all datasets, which confirms its benefit.

We then evaluate the performance of TAPS on statistical heterogeneity by analyzing the average recall scores of global ground truths identified as local heavy hitters across different parties. As shown in Table~\ref{table_statistical_heterogeneity}, TAPS outperforms all baselines, achieving at least a $10\%$ improvement over the best baseline across all datasets. This can be attributed to our shared shallow trie and adaptive extension strategies, which help align the heavy hitter targets. Additionally, the consensus-based pruning strategy mitigates statistical heterogeneity by leveraging prior knowledge from other parties.

We further validate the benefits of the consensus-based pruning strategy. As shown in Figure~\ref{fig_without_pruning}, the TAPS mechanism consistently outperforms its non-pruning version (TAP) across various datasets and queries $k$, underscores the value of our pruning strategy.

Then we study the impact of data heterogeneity on our TAPS mechanism's performance over the SYN dataset by varying the imbalance level of domain skewness. This is achieved by varying the Dirichlet distribution parameter $\beta\in\{0.2,0.5,0.8\}$, where a smaller $\beta$ indicates more imbalance. As shown in Table~\ref{table_data_heterogeneity} under $\epsilon=4$ and $k=10$, TAPS significantly outperforms baselines across various degrees of non-IID conditions. Both the adaptive trie extension and the consensus-based pruning strategies effectively mitigate the negative impacts of non-IID, leading to superior performance.

\subsection{Impact Study on Scalability}\label{subsec:scalability}
We finally examine the scalability of our TAPS mechanism across varying user populations in the largest UBA dataset. We randomly sample different subsets of users, $25\%$, $50\%$, $75\%$, and $100\%$, in UBA to evaluate the F1 scores, communication costs, and running times. As shown in Table~\ref{table_scalability}, our TAPS mechanism consistently outperforms the baselines across these varied populations, while maintaining competitive and acceptable communication and computation costs. The superior performance is because the shared shallow trie construction and the adaptive extension strategies help to preserve necessary prefixes by aggregating at a shallow level, and the consensus-based pruning strategy assists to leverage prior knowledge from other party to provide more informative frequency information. 
Although the communication cost of TAPS is slightly higher than that of GTF and FedPEM, primarily due to the pruning strategy, it remains significantly lower than that of directly using OUE or OLH. Regarding the computation costs, TAPS requires more running time than GTF and FedPEM since sequential estimation consumes more time than the baselines that can be executed in parallel in different parties. However, it is still substantially less than that required by directly using OLH, demonstrating the practicality and robustness of TAPS across different user scales in real-world settings.

\section{Conclusion}\label{conclusion}
In this paper, we propose a novel target-aligning prefix tree mechanism satisfying $\epsilon$-LDP, for federated heavy hitter analytics. We treat the estimation as a two-phase process, which applies the shared shallow trie construction and adaptive trie extension strategies. The proposed mechanism is further optimized by integrating a consensus-based pruning strategy, which enables each party to utilize (noisy) prior knowledge from the previous party to adaptively prune the candidate domain. To the best of our knowledge, this is the first solution to the federated heavy hitter analytics in a cross-party setting under the stringent $\epsilon$-LDP. Extensive experiments on both synthetic and real-world datasets demonstrate the effectiveness of our mechanism.

\section*{Acknowledgment}
This work was supported by the National Natural Science Foundation of China (Grant No: 92270123, 62102334, 62372122 and 62072390), and the Research Grants Council, Hong Kong SAR, China (Grant No:  15209922, 15208923, 15210023 and 25207224).

\bibliographystyle{ACM-Reference-Format}
\bibliography{mybibliography}

%%% -*-BibTeX-*-
%%% Do NOT edit. File created by BibTeX with style
%%% ACM-Reference-Format-Journals [18-Jan-2012].

\begin{thebibliography}{69}

%%% ====================================================================
%%% NOTE TO THE USER: you can override these defaults by providing
%%% customized versions of any of these macros before the \bibliography
%%% command.  Each of them MUST provide its own final punctuation,
%%% except for \shownote{}, \showDOI{}, and \showURL{}.  The latter two
%%% do not use final punctuation, in order to avoid confusing it with
%%% the Web address.
%%%
%%% To suppress output of a particular field, define its macro to expand
%%% to an empty string, or better, \unskip, like this:
%%%
%%% \newcommand{\showDOI}[1]{\unskip}   % LaTeX syntax
%%%
%%% \def \showDOI #1{\unskip}           % plain TeX syntax
%%%
%%% ====================================================================

\ifx \showCODEN    \undefined \def \showCODEN     #1{\unskip}     \fi
\ifx \showDOI      \undefined \def \showDOI       #1{#1}\fi
\ifx \showISBNx    \undefined \def \showISBNx     #1{\unskip}     \fi
\ifx \showISBNxiii \undefined \def \showISBNxiii  #1{\unskip}     \fi
\ifx \showISSN     \undefined \def \showISSN      #1{\unskip}     \fi
\ifx \showLCCN     \undefined \def \showLCCN      #1{\unskip}     \fi
\ifx \shownote     \undefined \def \shownote      #1{#1}          \fi
\ifx \showarticletitle \undefined \def \showarticletitle #1{#1}   \fi
\ifx \showURL      \undefined \def \showURL       {\relax}        \fi
% The following commands are used for tagged output and should be
% invisible to TeX
\providecommand\bibfield[2]{#2}
\providecommand\bibinfo[2]{#2}
\providecommand\natexlab[1]{#1}
\providecommand\showeprint[2][]{arXiv:#2}

\bibitem[Acharya and Sun(2019)]%
        {acharya2019communication}
\bibfield{author}{\bibinfo{person}{Jayadev Acharya} {and} \bibinfo{person}{Ziteng Sun}.} \bibinfo{year}{2019}\natexlab{}.
\newblock \showarticletitle{Communication Complexity in Locally Private Distribution Estimation and Heavy Hitters}. In \bibinfo{booktitle}{\emph{International Conference on Machine Learning (ICML)}}. PMLR, \bibinfo{pages}{51--60}.
\newblock


\bibitem[Agrawal and Srikant(1994)]%
        {AS1994}
\bibfield{author}{\bibinfo{person}{Rakesh Agrawal} {and} \bibinfo{person}{Ramakrishnan Srikant}.} \bibinfo{year}{1994}\natexlab{}.
\newblock \showarticletitle{Fast Algorithms for Mining Association Rules}. In \bibinfo{booktitle}{\emph{Proceedings of 20th International Conference on Very Large Data Bases (VLDB)}}, Vol.~\bibinfo{volume}{1215}. \bibinfo{pages}{487--499}.
\newblock


\bibitem[Apple(2017)]%
        {A2017}
\bibfield{author}{\bibinfo{person}{Differential Privacy~Team Apple}.} \bibinfo{year}{2017}\natexlab{}.
\newblock \showarticletitle{Learning with Privacy at Scale}. In \bibinfo{booktitle}{\emph{Apple Machine Learning Journal}}.
\newblock


\bibitem[Bassily and Smith(2015)]%
        {BS2015}
\bibfield{author}{\bibinfo{person}{Raef Bassily} {and} \bibinfo{person}{Adam Smith}.} \bibinfo{year}{2015}\natexlab{}.
\newblock \showarticletitle{Local, Private, Efficient Protocols for Succinct Histograms}. In \bibinfo{booktitle}{\emph{Proceedings of the 47th Annual ACM Symposium on Theory of Computing (STOC)}}. \bibinfo{pages}{127--135}.
\newblock


\bibitem[B{\"o}hler and Kerschbaum(2021)]%
        {BK2021}
\bibfield{author}{\bibinfo{person}{Jonas B{\"o}hler} {and} \bibinfo{person}{Florian Kerschbaum}.} \bibinfo{year}{2021}\natexlab{}.
\newblock \showarticletitle{Secure Multi-Party Computation of Differentially Private Heavy Hitters}. In \bibinfo{booktitle}{\emph{Proceedings of the 2021 ACM SIGSAC Conference on Computer and Communications Security (CCS)}}. \bibinfo{pages}{2361--2377}.
\newblock


\bibitem[Cao et~al\mbox{.}(2021)]%
        {cao2021data}
\bibfield{author}{\bibinfo{person}{Xiaoyu Cao}, \bibinfo{person}{Jinyuan Jia}, {and} \bibinfo{person}{Neil~Zhenqiang Gong}.} \bibinfo{year}{2021}\natexlab{}.
\newblock \showarticletitle{Data Poisoning Attacks to Local Differential Privacy Protocols}. In \bibinfo{booktitle}{\emph{Proceedings of the 30th USENIX Security Symposium (USENIX Security)}}. \bibinfo{pages}{947--964}.
\newblock


\bibitem[Chadha et~al\mbox{.}(2023)]%
        {CCD2023}
\bibfield{author}{\bibinfo{person}{Karan Chadha}, \bibinfo{person}{Junye Chen}, \bibinfo{person}{John Duchi}, \bibinfo{person}{Vitaly Feldman}, \bibinfo{person}{Hanieh Hashemi}, \bibinfo{person}{Omid Javidbakht}, \bibinfo{person}{Audra McMillan}, {and} \bibinfo{person}{Kunal Talwar}.} \bibinfo{year}{2023}\natexlab{}.
\newblock \showarticletitle{Differentially Private Heavy Hitter Detection using Federated Analytics}. In \bibinfo{booktitle}{\emph{ICML Workshop on Federated Learning and Analytics in Practice (FL-ICML)}}.
\newblock


\bibitem[Chen et~al\mbox{.}(2015)]%
        {CXZ2015}
\bibfield{author}{\bibinfo{person}{Rui Chen}, \bibinfo{person}{Qian Xiao}, \bibinfo{person}{Yu Zhang}, {and} \bibinfo{person}{Jianliang Xu}.} \bibinfo{year}{2015}\natexlab{}.
\newblock \showarticletitle{Differentially Private High-Dimensional Data Publication via Sampling-Based Inference}. In \bibinfo{booktitle}{\emph{Proceedings of the 21st ACM SIGKDD Conference on Knowledge Discovery and Data Mining (SIGKDD)}}. \bibinfo{pages}{129--138}.
\newblock


\bibitem[Cheng et~al\mbox{.}(2018)]%
        {CSX2018}
\bibfield{author}{\bibinfo{person}{Xiang Cheng}, \bibinfo{person}{Sen Su}, \bibinfo{person}{Shengzhi Xu}, \bibinfo{person}{Li Xiong}, \bibinfo{person}{Ke Xiao}, {and} \bibinfo{person}{Mingxing Zhao}.} \bibinfo{year}{2018}\natexlab{}.
\newblock \showarticletitle{A Two-Phase Algorithm for Differentially Private Frequent Subgraph Mining}. In \bibinfo{booktitle}{\emph{IEEE Transactions on Knowledge and Data Engineering (TKDE)}}, Vol.~\bibinfo{volume}{30}. \bibinfo{pages}{1411–1425}.
\newblock
\showISSN{1041-4347}


\bibitem[Cheng et~al\mbox{.}(2020)]%
        {CTS2020}
\bibfield{author}{\bibinfo{person}{Xiang Cheng}, \bibinfo{person}{Peng Tang}, \bibinfo{person}{Sen Su}, \bibinfo{person}{Rui Chen}, \bibinfo{person}{Zequn Wu}, {and} \bibinfo{person}{Binyuan Zhu}.} \bibinfo{year}{2020}\natexlab{}.
\newblock \showarticletitle{Multi-Party High-Dimensional Data Publishing under Differential Privacy}. In \bibinfo{booktitle}{\emph{IEEE Transactions on Knowledge and Data Engineering (TKDE)}}, Vol.~\bibinfo{volume}{32}. \bibinfo{pages}{1557--1571}.
\newblock


\bibitem[Cohan et~al\mbox{.}(2018)]%
        {CDK2018scientificpaper}
\bibfield{author}{\bibinfo{person}{Arman Cohan}, \bibinfo{person}{Franck Dernoncourt}, \bibinfo{person}{Doo~Soon Kim}, \bibinfo{person}{Trung Bui}, \bibinfo{person}{Seokhwan Kim}, \bibinfo{person}{Walter Chang}, {and} \bibinfo{person}{Nazli Goharian}.} \bibinfo{year}{2018}\natexlab{}.
\newblock \showarticletitle{A Discourse-Aware Attention Model for Abstractive Summarization of Long Documents}. In \bibinfo{booktitle}{\emph{Proceedings of the 2018 Conference of the North American Chapter of the Association for Computational Linguistics: Human Language Technologies (NAACL-HLT)}}.
\newblock


\bibitem[Cormode and Bharadwaj(2022)]%
        {CB2022}
\bibfield{author}{\bibinfo{person}{Graham Cormode} {and} \bibinfo{person}{Akash Bharadwaj}.} \bibinfo{year}{2022}\natexlab{}.
\newblock \showarticletitle{Sample-and-Threshold Differential Privacy: Histograms and Applications}. In \bibinfo{booktitle}{\emph{International Conference on Artificial Intelligence and Statistics (AISTATS)}}. \bibinfo{pages}{1420--1431}.
\newblock


\bibitem[Cormode et~al\mbox{.}(2021)]%
        {CMM2021}
\bibfield{author}{\bibinfo{person}{Graham Cormode}, \bibinfo{person}{Samuel Maddock}, {and} \bibinfo{person}{Carsten Maple}.} \bibinfo{year}{2021}\natexlab{}.
\newblock \showarticletitle{Frequency Estimation under Local Differential Privacy}. In \bibinfo{booktitle}{\emph{Proceedings of the VLDB Endowment (PVLDB)}}, Vol.~\bibinfo{volume}{14}. \bibinfo{pages}{2046–2058}.
\newblock
\showISSN{2150-8097}


\bibitem[Ding et~al\mbox{.}(2017)]%
        {DKY2017}
\bibfield{author}{\bibinfo{person}{Bolin Ding}, \bibinfo{person}{Janardhan Kulkarni}, {and} \bibinfo{person}{Sergey Yekhanin}.} \bibinfo{year}{2017}\natexlab{}.
\newblock \showarticletitle{Collecting Telemetry Data Privately}. In \bibinfo{booktitle}{\emph{Proceedings of the 31st International Conference on Neural Information Processing Systems (NeurIPS)}}. \bibinfo{pages}{3574–3583}.
\newblock


\bibitem[Ding et~al\mbox{.}(2022)]%
        {DSZ2022}
\bibfield{author}{\bibinfo{person}{Xiaofeng Ding}, \bibinfo{person}{Shujun Sheng}, \bibinfo{person}{Huajian Zhou}, \bibinfo{person}{Xiaodong Zhang}, \bibinfo{person}{Zhifeng Bao}, \bibinfo{person}{Pan Zhou}, {and} \bibinfo{person}{Hai Jin}.} \bibinfo{year}{2022}\natexlab{}.
\newblock \showarticletitle{Differentially Private Triangle Counting in Large Graphs}. In \bibinfo{booktitle}{\emph{IEEE Transactions on Knowledge and Data Engineering (TKDE)}}, Vol.~\bibinfo{volume}{34}. \bibinfo{pages}{5278--5292}.
\newblock


\bibitem[Du et~al\mbox{.}(2023)]%
        {DHZ2023}
\bibfield{author}{\bibinfo{person}{Yuntao Du}, \bibinfo{person}{Yujia Hu}, \bibinfo{person}{Zhikun Zhang}, \bibinfo{person}{Ziquan Fang}, \bibinfo{person}{Lu Chen}, \bibinfo{person}{Baihua Zheng}, {and} \bibinfo{person}{Yunjun Gao}.} \bibinfo{year}{2023}\natexlab{}.
\newblock \showarticletitle{LDPTrace: Locally Differentially Private Trajectory Synthesis}. In \bibinfo{booktitle}{\emph{Proceedings of the VLDB Endowment (PVLDB)}}, Vol.~\bibinfo{volume}{16}. \bibinfo{pages}{1897–1909}.
\newblock


\bibitem[Dwork et~al\mbox{.}(2006)]%
        {DMN2006}
\bibfield{author}{\bibinfo{person}{Cynthia Dwork}, \bibinfo{person}{Frank McSherry}, \bibinfo{person}{Kobbi Nissim}, {and} \bibinfo{person}{Adam Smith}.} \bibinfo{year}{2006}\natexlab{}.
\newblock \showarticletitle{Calibrating Noise to Sensitivity in Private Data Analysis}. In \bibinfo{booktitle}{\emph{Proceedings of the 3rd Conference on Theory of Cryptography (TCC)}}. \bibinfo{pages}{265–284}.
\newblock


\bibitem[Dwork et~al\mbox{.}(2014)]%
        {DR2014}
\bibfield{author}{\bibinfo{person}{Cynthia Dwork}, \bibinfo{person}{Aaron Roth}, {et~al\mbox{.}}} \bibinfo{year}{2014}\natexlab{}.
\newblock \showarticletitle{The Algorithmic Foundations of Differential Privacy}. In \bibinfo{booktitle}{\emph{Foundations and Trends{\textregistered} in Theoretical Computer Science (TCS)}}, Vol.~\bibinfo{volume}{9}. \bibinfo{pages}{211--407}.
\newblock


\bibitem[Elkordy et~al\mbox{.}(2023)]%
        {elkordy2023federated}
\bibfield{author}{\bibinfo{person}{Ahmed~Roushdy Elkordy}, \bibinfo{person}{Yahya~H Ezzeldin}, \bibinfo{person}{Shanshan Han}, \bibinfo{person}{Shantanu Sharma}, \bibinfo{person}{Chaoyang He}, \bibinfo{person}{Sharad Mehrotra}, \bibinfo{person}{Salman Avestimehr}, {et~al\mbox{.}}} \bibinfo{year}{2023}\natexlab{}.
\newblock \showarticletitle{Federated Analytics: A Survey}. In \bibinfo{booktitle}{\emph{APSIPA Transactions on Signal and Information Processing}}, Vol.~\bibinfo{volume}{12}.
\newblock


\bibitem[Erlingsson et~al\mbox{.}(2014)]%
        {EPK2014}
\bibfield{author}{\bibinfo{person}{{\'U}lfar Erlingsson}, \bibinfo{person}{Vasyl Pihur}, {and} \bibinfo{person}{Aleksandra Korolova}.} \bibinfo{year}{2014}\natexlab{}.
\newblock \showarticletitle{Rappor: Randomized Aggregatable Privacy-Preserving Ordinal Response}. In \bibinfo{booktitle}{\emph{Proceedings of the 21st ACM SIGSAC Conference on Computer and Communications Security (CCS)}}. \bibinfo{pages}{1054--1067}.
\newblock


\bibitem[Fanti et~al\mbox{.}(2016)]%
        {FPE2016}
\bibfield{author}{\bibinfo{person}{Giulia Fanti}, \bibinfo{person}{Vasyl Pihur}, {and} \bibinfo{person}{{\'{U}}lfar Erlingsson}.} \bibinfo{year}{2016}\natexlab{}.
\newblock \showarticletitle{Building a {RAPPOR} with the Unknown: Privacy-Preserving Learning of Associations and Data Dictionaries}. In \bibinfo{booktitle}{\emph{Proceedings on Privacy Enhancing Technologies (PETS)}}, Vol.~\bibinfo{volume}{2016}. \bibinfo{pages}{41--61}.
\newblock


\bibitem[Fu et~al\mbox{.}(2024)]%
        {fu2023dpsur}
\bibfield{author}{\bibinfo{person}{Jie Fu}, \bibinfo{person}{Qingqing Ye}, \bibinfo{person}{Haibo Hu}, \bibinfo{person}{Zhili Chen}, \bibinfo{person}{Lulu Wang}, \bibinfo{person}{Kuncan Wang}, {and} \bibinfo{person}{Xun Ran}.} \bibinfo{year}{2024}\natexlab{}.
\newblock \showarticletitle{DPSUR: Accelerating Differentially Private Stochastic Gradient Descent Using Selective Update and Release}. In \bibinfo{booktitle}{\emph{Proceedings of the VLDB Endowment (PVLDB)}}, Vol.~\bibinfo{volume}{17}. \bibinfo{pages}{1200–1213}.
\newblock


\bibitem[Gao et~al\mbox{.}(2023)]%
        {GZL2023}
\bibfield{author}{\bibinfo{person}{Chen Gao}, \bibinfo{person}{Yu Zheng}, \bibinfo{person}{Nian Li}, \bibinfo{person}{Yinfeng Li}, \bibinfo{person}{Yingrong Qin}, \bibinfo{person}{Jinghua Piao}, \bibinfo{person}{Yuhan Quan}, \bibinfo{person}{Jianxin Chang}, \bibinfo{person}{Depeng Jin}, \bibinfo{person}{Xiangnan He}, {et~al\mbox{.}}} \bibinfo{year}{2023}\natexlab{}.
\newblock \showarticletitle{A Survey of Graph Neural Networks for Recommender Systems: Challenges, Methods, and Directions}. In \bibinfo{booktitle}{\emph{ACM Transactions on Recommender Systems (TORS)}}, Vol.~\bibinfo{volume}{1}. \bibinfo{pages}{1--51}.
\newblock


\bibitem[Go et~al\mbox{.}(2009)]%
        {GBH2009twitter}
\bibfield{author}{\bibinfo{person}{Alec Go}, \bibinfo{person}{Richa Bhayani}, {and} \bibinfo{person}{Lei Huang}.} \bibinfo{year}{2009}\natexlab{}.
\newblock \showarticletitle{Twitter Sentiment Classification using Distant Supervision}. In \bibinfo{booktitle}{\emph{CS224N project report, Stanford}}, Vol.~\bibinfo{volume}{1}. \bibinfo{pages}{2009}.
\newblock


\bibitem[Hermann et~al\mbox{.}(2015)]%
        {KTE2015cnndailymail}
\bibfield{author}{\bibinfo{person}{Karl~Moritz Hermann}, \bibinfo{person}{Tomás Kociský}, \bibinfo{person}{Edward Grefenstette}, \bibinfo{person}{Lasse Espeholt}, \bibinfo{person}{Will Kay}, \bibinfo{person}{Mustafa Suleyman}, {and} \bibinfo{person}{Phil Blunsom}.} \bibinfo{year}{2015}\natexlab{}.
\newblock \showarticletitle{Teaching Machines to Read and Comprehend}. In \bibinfo{booktitle}{\emph{Advances in Neural Information Processing Systems (NeurIPS)}}. \bibinfo{pages}{1693--1701}.
\newblock


\bibitem[Huang et~al\mbox{.}(2024)]%
        {huang2024ldpguard}
\bibfield{author}{\bibinfo{person}{Kai Huang}, \bibinfo{person}{Gaoya Ouyang}, \bibinfo{person}{Qingqing Ye}, \bibinfo{person}{Haibo Hu}, \bibinfo{person}{Bolong Zheng}, \bibinfo{person}{Xi Zhao}, \bibinfo{person}{Ruiyuan Zhang}, {and} \bibinfo{person}{Xiaofang Zhou}.} \bibinfo{year}{2024}\natexlab{}.
\newblock \showarticletitle{LDPGuard: Defenses against Data Poisoning Attacks to Local Differential Privacy Protocols}. In \bibinfo{booktitle}{\emph{IEEE Transactions on Knowledge and Data Engineering (TKDE)}}.
\newblock


\bibitem[Jangir et~al\mbox{.}(2023)]%
        {JKK2023}
\bibfield{author}{\bibinfo{person}{Pranav Jangir}, \bibinfo{person}{Nishat Koti}, \bibinfo{person}{Varsha~Bhat Kukkala}, \bibinfo{person}{Arpita Patra}, \bibinfo{person}{Bhavish~Raj Gopal}, {and} \bibinfo{person}{Somya Sangal}.} \bibinfo{year}{2023}\natexlab{}.
\newblock \showarticletitle{Vogue: Faster Computation of Private Heavy Hitters}.
\newblock \bibinfo{journal}{\emph{IEEE Transactions on Dependable and Secure Computing (TDSC)}}, \bibinfo{pages}{1--12}.
\newblock


\bibitem[Kairouz et~al\mbox{.}(2021)]%
        {kairouz2021advances}
\bibfield{author}{\bibinfo{person}{Peter Kairouz}, \bibinfo{person}{H~Brendan McMahan}, \bibinfo{person}{Brendan Avent}, \bibinfo{person}{Aur{\'e}lien Bellet}, \bibinfo{person}{Mehdi Bennis}, \bibinfo{person}{Arjun~Nitin Bhagoji}, \bibinfo{person}{Kallista Bonawitz}, \bibinfo{person}{Zachary Charles}, \bibinfo{person}{Graham Cormode}, \bibinfo{person}{Rachel Cummings}, {et~al\mbox{.}}} \bibinfo{year}{2021}\natexlab{}.
\newblock \showarticletitle{Advances and Open Problems in Federated Learning}. In \bibinfo{booktitle}{\emph{Foundations and Trends{\textregistered} in Machine Learning}}, Vol.~\bibinfo{volume}{14}. \bibinfo{pages}{1--210}.
\newblock


\bibitem[Kairouz et~al\mbox{.}(2014)]%
        {KOV2014}
\bibfield{author}{\bibinfo{person}{Peter Kairouz}, \bibinfo{person}{Sewoong Oh}, {and} \bibinfo{person}{Pramod Viswanath}.} \bibinfo{year}{2014}\natexlab{}.
\newblock \showarticletitle{Extremal Mechanisms for Local Differential Privacy}. In \bibinfo{booktitle}{\emph{Proceedings of the 27th International Conference on Neural Information Processing Systems (NeurIPS)}}. \bibinfo{pages}{2879–2887}.
\newblock


\bibitem[Kasiviswanathan et~al\mbox{.}(2011)]%
        {KLN2011}
\bibfield{author}{\bibinfo{person}{Shiva~Prasad Kasiviswanathan}, \bibinfo{person}{Homin~K Lee}, \bibinfo{person}{Kobbi Nissim}, \bibinfo{person}{Sofya Raskhodnikova}, {and} \bibinfo{person}{Adam Smith}.} \bibinfo{year}{2011}\natexlab{}.
\newblock \showarticletitle{What Can We Learn Privately?}. In \bibinfo{booktitle}{\emph{SIAM Journal on Computing (SICOMP)}}, Vol.~\bibinfo{volume}{40}. \bibinfo{pages}{793--826}.
\newblock


\bibitem[Khodak et~al\mbox{.}(2018)]%
        {KSV2018reddit}
\bibfield{author}{\bibinfo{person}{Mikhail Khodak}, \bibinfo{person}{Nikunj Saunshi}, {and} \bibinfo{person}{Kiran Vodrahalli}.} \bibinfo{year}{2018}\natexlab{}.
\newblock \showarticletitle{A Large Self-Annotated Corpus for Sarcasm}. In \bibinfo{booktitle}{\emph{Proceedings of the Eleventh International Conference on Language Resources and Evaluation (LREC)}}.
\newblock


\bibitem[Li and Ye(2019)]%
        {li2019mobile}
\bibfield{author}{\bibinfo{person}{Ninghui Li} {and} \bibinfo{person}{Qingqing Ye}.} \bibinfo{year}{2019}\natexlab{}.
\newblock \showarticletitle{Mobile Data Collection and Analysis with Local Differential Privacy}. In \bibinfo{booktitle}{\emph{Proceedings of the IEEE 20th International Conference on Mobile Data Management (MDM)}}. \bibinfo{pages}{4--7}.
\newblock


\bibitem[Li et~al\mbox{.}(2022)]%
        {LDC2022}
\bibfield{author}{\bibinfo{person}{Qinbin Li}, \bibinfo{person}{Yiqun Diao}, \bibinfo{person}{Quan Chen}, {and} \bibinfo{person}{Bingsheng He}.} \bibinfo{year}{2022}\natexlab{}.
\newblock \showarticletitle{Federated Learning on Non-IID Data Silos: An Experimental Study}. In \bibinfo{booktitle}{\emph{2022 IEEE 38th international conference on data engineering (ICDE)}}. \bibinfo{pages}{965--978}.
\newblock


\bibitem[Li et~al\mbox{.}(2024)]%
        {LLL2024}
\bibfield{author}{\bibinfo{person}{Xiaochen Li}, \bibinfo{person}{Weiran Liu}, \bibinfo{person}{Jian Lou}, \bibinfo{person}{Yuan Hong}, \bibinfo{person}{Lei Zhang}, \bibinfo{person}{Zhan Qin}, {and} \bibinfo{person}{Kui Ren}.} \bibinfo{year}{2024}\natexlab{}.
\newblock \showarticletitle{Local Differentially Private Heavy Hitter Detection in Data Streams with Bounded Memory}. In \bibinfo{booktitle}{\emph{Proceedings of the 50th ACM Conference on Management of Data (SIGMOD)}}. \bibinfo{pages}{1--27}.
\newblock


\bibitem[Ma et~al\mbox{.}(2018)]%
        {MZH2018}
\bibfield{author}{\bibinfo{person}{Xiao Ma}, \bibinfo{person}{Liqin Zhao}, \bibinfo{person}{Guan Huang}, \bibinfo{person}{Zhi Wang}, \bibinfo{person}{Zelin Hu}, \bibinfo{person}{Xiaoqiang Zhu}, {and} \bibinfo{person}{Kun Gai}.} \bibinfo{year}{2018}\natexlab{}.
\newblock \showarticletitle{Entire Space Multi-Task Model: An Effective Approach for Estimating Post-Click Conversion Rate}. In \bibinfo{booktitle}{\emph{Proceedings of the 41st Annual International ACM SIGIR Conference on Research and Development in Information Retrieval (SIGIR)}}. \bibinfo{pages}{1137–1140}.
\newblock


\bibitem[Maas et~al\mbox{.}(2011)]%
        {MDP2011imdb}
\bibfield{author}{\bibinfo{person}{Andrew~L. Maas}, \bibinfo{person}{Raymond~E. Daly}, \bibinfo{person}{Peter~T. Pham}, \bibinfo{person}{Dan Huang}, \bibinfo{person}{Andrew~Y. Ng}, {and} \bibinfo{person}{Christopher Potts}.} \bibinfo{year}{2011}\natexlab{}.
\newblock \showarticletitle{Learning Word Vectors for Sentiment Analysis}. In \bibinfo{booktitle}{\emph{Proceedings of the 49th Annual Meeting of the Association for Computational Linguistics: Human Language Technologies (ACL-HLT)}}. \bibinfo{pages}{142--150}.
\newblock


\bibitem[Manning et~al\mbox{.}(2008)]%
        {MRS2008}
\bibfield{author}{\bibinfo{person}{Christopher~D Manning}, \bibinfo{person}{Prabhakar Raghavan}, {and} \bibinfo{person}{Hinrich Sch{\"u}tze}.} \bibinfo{year}{2008}\natexlab{}.
\newblock \bibinfo{booktitle}{\emph{Introduction to Information Retrieval}}.
\newblock \bibinfo{publisher}{Cambridge university press}.
\newblock


\bibitem[Ni et~al\mbox{.}(2019)]%
        {NLM2019amazon}
\bibfield{author}{\bibinfo{person}{Jianmo Ni}, \bibinfo{person}{Jiacheng Li}, {and} \bibinfo{person}{Julian McAuley}.} \bibinfo{year}{2019}\natexlab{}.
\newblock \showarticletitle{Justifying Recommendations Using Distantly-Labeled Reviews and Fine-Grained Aspects}. In \bibinfo{booktitle}{\emph{Proceedings of the 2019 Conference on Empirical Methods in Natural Language Processing and the 9th International Joint Conference on Natural Language Processing (EMNLP-IJCNLP)}}. \bibinfo{pages}{188--197}.
\newblock


\bibitem[Pi et~al\mbox{.}(2019)]%
        {PBZ2019}
\bibfield{author}{\bibinfo{person}{Qi Pi}, \bibinfo{person}{Weijie Bian}, \bibinfo{person}{Guorui Zhou}, \bibinfo{person}{Xiaoqiang Zhu}, {and} \bibinfo{person}{Kun Gai}.} \bibinfo{year}{2019}\natexlab{}.
\newblock \showarticletitle{Practice on Long Sequential User Behavior Modeling for Click-Through Rate Prediction}. In \bibinfo{booktitle}{\emph{Proceedings of the 25th ACM SIGKDD International Conference on Knowledge Discovery \& Data Mining (KDD)}}. \bibinfo{pages}{2671--2679}.
\newblock


\bibitem[Pi et~al\mbox{.}(2020)]%
        {PZZ2020}
\bibfield{author}{\bibinfo{person}{Qi Pi}, \bibinfo{person}{Guorui Zhou}, \bibinfo{person}{Yujing Zhang}, \bibinfo{person}{Zhe Wang}, \bibinfo{person}{Lejian Ren}, \bibinfo{person}{Ying Fan}, \bibinfo{person}{Xiaoqiang Zhu}, {and} \bibinfo{person}{Kun Gai}.} \bibinfo{year}{2020}\natexlab{}.
\newblock \showarticletitle{Search-based User Interest Modeling with Lifelong Sequential Behavior Data for Click-Through Rate Prediction}. In \bibinfo{booktitle}{\emph{Proceedings of the 29th ACM International Conference on Information \& Knowledge Management (CIKM)}}. \bibinfo{pages}{2685–2692}.
\newblock


\bibitem[Qin et~al\mbox{.}(2016)]%
        {QYY2016}
\bibfield{author}{\bibinfo{person}{Zhan Qin}, \bibinfo{person}{Yin Yang}, \bibinfo{person}{Ting Yu}, \bibinfo{person}{Issa Khalil}, \bibinfo{person}{Xiaokui Xiao}, {and} \bibinfo{person}{Kui Ren}.} \bibinfo{year}{2016}\natexlab{}.
\newblock \showarticletitle{Heavy Hitter Estimation over Set-Valued Data with Local Differential Privacy}. In \bibinfo{booktitle}{\emph{Proceedings of the 23rd ACM SIGSAC Conference on Computer and Communications Security (CCS)}}. \bibinfo{pages}{192--203}.
\newblock


\bibitem[Rajpurkar et~al\mbox{.}(2018)]%
        {RJL2018squad}
\bibfield{author}{\bibinfo{person}{Pranav Rajpurkar}, \bibinfo{person}{Robin Jia}, {and} \bibinfo{person}{Percy Liang}.} \bibinfo{year}{2018}\natexlab{}.
\newblock \showarticletitle{Know What You Don{'}t Know: Unanswerable Questions for {SQ}u{AD}}. In \bibinfo{booktitle}{\emph{Proceedings of the 56th Annual Meeting of the Association for Computational Linguistics (ACL)}}, Vol.~\bibinfo{volume}{2}. \bibinfo{pages}{784--789}.
\newblock


\bibitem[Rajpurkar et~al\mbox{.}(2016)]%
        {RZL2016squad}
\bibfield{author}{\bibinfo{person}{Pranav Rajpurkar}, \bibinfo{person}{Jian Zhang}, \bibinfo{person}{Konstantin Lopyrev}, {and} \bibinfo{person}{Percy Liang}.} \bibinfo{year}{2016}\natexlab{}.
\newblock \showarticletitle{{SQ}u{AD}: 100,000+ Questions for Machine Comprehension of Text}. In \bibinfo{booktitle}{\emph{Proceedings of the 2016 Conference on Empirical Methods in Natural Language Processing (EMNLP)}}. \bibinfo{pages}{2383--2392}.
\newblock


\bibitem[Ramage(2020)]%
        {R2020}
\bibfield{author}{\bibinfo{person}{Daniel Ramage}.} \bibinfo{year}{2020}\natexlab{}.
\newblock \bibinfo{title}{Federated Analytics: Collaborative Data Science without Data Collection}.
\newblock \bibinfo{howpublished}{Online}.
\newblock
\newblock
\shownote{\url{https://ai.googleblog.com/2020/05/federated-analytics-collaborative-data.html}}.


\bibitem[See et~al\mbox{.}(2017)]%
        {SLM2017cnndailymail}
\bibfield{author}{\bibinfo{person}{Abigail See}, \bibinfo{person}{Peter~J. Liu}, {and} \bibinfo{person}{Christopher~D. Manning}.} \bibinfo{year}{2017}\natexlab{}.
\newblock \showarticletitle{Get To The Point: Summarization with Pointer-Generator Networks}. In \bibinfo{booktitle}{\emph{Proceedings of the 55th Annual Meeting of the Association for Computational Linguistics (ACL)}}. \bibinfo{pages}{1073--1083}.
\newblock


\bibitem[Shao et~al\mbox{.}(2023)]%
        {SHH2023}
\bibfield{author}{\bibinfo{person}{Jiaqi Shao}, \bibinfo{person}{Shanshan Han}, \bibinfo{person}{Chaoyang He}, {and} \bibinfo{person}{Bing Luo}.} \bibinfo{year}{2023}\natexlab{}.
\newblock \showarticletitle{Privacy-Preserving Federated Heavy Hitter Analytics for Non-IID Data}. In \bibinfo{booktitle}{\emph{ICML Workshop on Federated Learning and Analytics in Practice (FL-ICML)}}.
\newblock


\bibitem[Su et~al\mbox{.}(2016)]%
        {STC2016}
\bibfield{author}{\bibinfo{person}{Sen Su}, \bibinfo{person}{Peng Tang}, \bibinfo{person}{Xiang Cheng}, \bibinfo{person}{Rui Chen}, {and} \bibinfo{person}{Zequn Wu}.} \bibinfo{year}{2016}\natexlab{}.
\newblock \showarticletitle{Differentially Private Multi-Party High-Dimensional Data Publishing}. In \bibinfo{booktitle}{\emph{Proceedings of the IEEE 32nd International Conference on Data Engineering (ICDE)}}. \bibinfo{pages}{205--216}.
\newblock


\bibitem[Sun et~al\mbox{.}(2024)]%
        {sun2024ldprecover}
\bibfield{author}{\bibinfo{person}{Xinyue Sun}, \bibinfo{person}{Qingqing Ye}, \bibinfo{person}{Haibo Hu}, \bibinfo{person}{Jiawei Duan}, \bibinfo{person}{Tianyu Wo}, \bibinfo{person}{Jie Xu}, {and} \bibinfo{person}{Renyu Yang}.} \bibinfo{year}{2024}\natexlab{}.
\newblock \showarticletitle{Ldprecover: Recovering Frequencies from Poisoning Attacks against Local Differential Privacy}. In \bibinfo{booktitle}{\emph{Proceedings of the IEEE 40th International Conference on Data Engineering (ICDE)}}. \bibinfo{pages}{1619--1631}.
\newblock


\bibitem[Tianchi(2018)]%
        {tmall}
\bibfield{author}{\bibinfo{person}{Tianchi}.} \bibinfo{year}{2018}\natexlab{}.
\newblock \bibinfo{title}{IJCAI-15 Repeat Buyers Prediction Dataset}.
\newblock
\newblock
\urldef\tempurl%
\url{https://tianchi.aliyun.com/dataset/dataDetail?dataId=42}
\showURL{%
\tempurl}


\bibitem[Wang et~al\mbox{.}(2021)]%
        {WSZ2021}
\bibfield{author}{\bibinfo{person}{Dan Wang}, \bibinfo{person}{Siping Shi}, \bibinfo{person}{Yifei Zhu}, {and} \bibinfo{person}{Zhu Han}.} \bibinfo{year}{2021}\natexlab{}.
\newblock \showarticletitle{Federated Analytics: Opportunities and Challenges}. In \bibinfo{booktitle}{\emph{IEEE Network}}, Vol.~\bibinfo{volume}{36}. \bibinfo{pages}{151--158}.
\newblock


\bibitem[Wang et~al\mbox{.}(2018b)]%
        {WXY2018}
\bibfield{author}{\bibinfo{person}{Ning Wang}, \bibinfo{person}{Xiaokui Xiao}, \bibinfo{person}{Yin Yang}, \bibinfo{person}{Ta~Duy Hoang}, \bibinfo{person}{Hyejin Shin}, \bibinfo{person}{Junbum Shin}, {and} \bibinfo{person}{Ge Yu}.} \bibinfo{year}{2018}\natexlab{b}.
\newblock \showarticletitle{PrivTrie: Effective Frequent Term Discovery under Local Differential Privacy}. In \bibinfo{booktitle}{\emph{2018 IEEE 34th International Conference on Data Engineering (ICDE)}}. \bibinfo{pages}{821--832}.
\newblock


\bibitem[Wang et~al\mbox{.}(2019b)]%
        {WXY2019}
\bibfield{author}{\bibinfo{person}{Ning Wang}, \bibinfo{person}{Xiaokui Xiao}, \bibinfo{person}{Yin Yang}, \bibinfo{person}{Jun Zhao}, \bibinfo{person}{Siu~Cheung Hui}, \bibinfo{person}{Hyejin Shin}, \bibinfo{person}{Junbum Shin}, {and} \bibinfo{person}{Ge Yu}.} \bibinfo{year}{2019}\natexlab{b}.
\newblock \showarticletitle{Collecting and Analyzing Multidimensional Data with Local Differential Privacy}. In \bibinfo{booktitle}{\emph{Proceedings of the IEEE 35th International Conference on Data Engineering (ICDE)}}. \bibinfo{pages}{638--649}.
\newblock


\bibitem[Wang et~al\mbox{.}(2017)]%
        {WBL2017}
\bibfield{author}{\bibinfo{person}{Tianhao Wang}, \bibinfo{person}{Jeremiah Blocki}, \bibinfo{person}{Ninghui Li}, {and} \bibinfo{person}{Somesh Jha}.} \bibinfo{year}{2017}\natexlab{}.
\newblock \showarticletitle{Locally Differentially Private Protocols for Frequency Estimation}. In \bibinfo{booktitle}{\emph{Proceedings of the 26th USENIX Security Symposium (USENIX Security)}}. \bibinfo{pages}{729--745}.
\newblock


\bibitem[Wang et~al\mbox{.}(2018a)]%
        {WLJ2018}
\bibfield{author}{\bibinfo{person}{Tianhao Wang}, \bibinfo{person}{Ninghui Li}, {and} \bibinfo{person}{Somesh Jha}.} \bibinfo{year}{2018}\natexlab{a}.
\newblock \showarticletitle{Locally Differentially Private Frequent Itemset Mining}. In \bibinfo{booktitle}{\emph{2018 IEEE Symposium on Security and Privacy (SP)}}. \bibinfo{pages}{127--143}.
\newblock


\bibitem[Wang et~al\mbox{.}(2019a)]%
        {WLJ2019}
\bibfield{author}{\bibinfo{person}{Tianhao Wang}, \bibinfo{person}{Ninghui Li}, {and} \bibinfo{person}{Somesh Jha}.} \bibinfo{year}{2019}\natexlab{a}.
\newblock \showarticletitle{Locally Differentially Private Heavy Hitter Identification}. In \bibinfo{booktitle}{\emph{IEEE Transactions on Dependable and Secure Computing (TDSC)}}, Vol.~\bibinfo{volume}{18}. \bibinfo{pages}{982--993}.
\newblock


\bibitem[Wang et~al\mbox{.}(2022)]%
        {wang2022fedfpm}
\bibfield{author}{\bibinfo{person}{Zibo Wang}, \bibinfo{person}{Yifei Zhu}, \bibinfo{person}{Dan Wang}, {and} \bibinfo{person}{Zhu Han}.} \bibinfo{year}{2022}\natexlab{}.
\newblock \showarticletitle{FedFPM: A Unified Federated Analytics Framework for Collaborative Frequent Pattern Mining}. In \bibinfo{booktitle}{\emph{IEEE Conference on Computer Communications (INFOCOM)}}. \bibinfo{pages}{61--70}.
\newblock


\bibitem[Warner(1965)]%
        {W1965}
\bibfield{author}{\bibinfo{person}{Stanley~L Warner}.} \bibinfo{year}{1965}\natexlab{}.
\newblock \showarticletitle{Randomized Response: A Survey Technique for Eliminating Evasive Answer Bias}. In \bibinfo{booktitle}{\emph{Journal of the American Statistical Association (JASA)}}, Vol.~\bibinfo{volume}{60}. \bibinfo{pages}{63--69}.
\newblock


\bibitem[Wu et~al\mbox{.}(2020)]%
        {WQC2020mind}
\bibfield{author}{\bibinfo{person}{Fangzhao Wu}, \bibinfo{person}{Ying Qiao}, \bibinfo{person}{Jiun-Hung Chen}, \bibinfo{person}{Chuhan Wu}, \bibinfo{person}{Tao Qi}, \bibinfo{person}{Jianxun Lian}, \bibinfo{person}{Danyang Liu}, \bibinfo{person}{Xing Xie}, \bibinfo{person}{Jianfeng Gao}, \bibinfo{person}{Winnie Wu}, {et~al\mbox{.}}} \bibinfo{year}{2020}\natexlab{}.
\newblock \showarticletitle{Mind: A Large-Scale Dataset for News Recommendation}. In \bibinfo{booktitle}{\emph{Proceedings of the 58th Annual Meeting of the Association for Computational Linguistics (ACL)}}. \bibinfo{pages}{3597--3606}.
\newblock


\bibitem[Xu et~al\mbox{.}(2020)]%
        {xu2020collecting}
\bibfield{author}{\bibinfo{person}{Min Xu}, \bibinfo{person}{Bolin Ding}, \bibinfo{person}{Tianhao Wang}, {and} \bibinfo{person}{Jingren Zhou}.} \bibinfo{year}{2020}\natexlab{}.
\newblock \showarticletitle{Collecting and Analyzing Data Jointly from Multiple Services under Local Differential Privacy}. In \bibinfo{booktitle}{\emph{Proceedings of the VLDB Endowment (PVLDB)}}, Vol.~\bibinfo{volume}{13}. \bibinfo{pages}{2760--2772}.
\newblock


\bibitem[Yang et~al\mbox{.}(2022)]%
        {YCS2022}
\bibfield{author}{\bibinfo{person}{Jianyu Yang}, \bibinfo{person}{Xiang Cheng}, \bibinfo{person}{Sen Su}, \bibinfo{person}{Huizhong Sun}, {and} \bibinfo{person}{Changju Chen}.} \bibinfo{year}{2022}\natexlab{}.
\newblock \showarticletitle{Collecting Individual Trajectories under Local Differential Privacy}. In \bibinfo{booktitle}{\emph{Proceedings of the 23rd IEEE International Conference on Mobile Data Management (MDM)}}. \bibinfo{pages}{99--108}.
\newblock


\bibitem[Ye et~al\mbox{.}(2023)]%
        {ye2023stateful}
\bibfield{author}{\bibinfo{person}{Qingqing Ye}, \bibinfo{person}{Haibo Hu}, \bibinfo{person}{Kai Huang}, \bibinfo{person}{Man~Ho Au}, {and} \bibinfo{person}{Qiao Xue}.} \bibinfo{year}{2023}\natexlab{}.
\newblock \showarticletitle{Stateful Switch: Optimized Time Series Release with Local Differential Privacy}. In \bibinfo{booktitle}{\emph{IEEE Conference on Computer Communications (INFOCOM)}}. \bibinfo{pages}{1--10}.
\newblock


\bibitem[Ye et~al\mbox{.}(2019)]%
        {YHM2019}
\bibfield{author}{\bibinfo{person}{Qingqing Ye}, \bibinfo{person}{Haibo Hu}, \bibinfo{person}{Xiaofeng Meng}, {and} \bibinfo{person}{Huadi Zheng}.} \bibinfo{year}{2019}\natexlab{}.
\newblock \showarticletitle{PrivKV: Key-Value Data Collection with Local Differential Privacy}. In \bibinfo{booktitle}{\emph{Proceedings of the 40th IEEE Symposium on Security and Privacy (SP)}}. \bibinfo{pages}{317--331}.
\newblock


\bibitem[Yurochkin et~al\mbox{.}(2019)]%
        {YAG2019}
\bibfield{author}{\bibinfo{person}{Mikhail Yurochkin}, \bibinfo{person}{Mayank Agarwal}, \bibinfo{person}{Soumya Ghosh}, \bibinfo{person}{Kristjan Greenewald}, \bibinfo{person}{Nghia Hoang}, {and} \bibinfo{person}{Yasaman Khazaeni}.} \bibinfo{year}{2019}\natexlab{}.
\newblock \showarticletitle{Bayesian Nonparametric Federated Learning of Neural Networks}. In \bibinfo{booktitle}{\emph{International Conference on Machine Learning (ICML)}}. \bibinfo{pages}{7252--7261}.
\newblock


\bibitem[Zellers et~al\mbox{.}(2018)]%
        {ZBS2018swag}
\bibfield{author}{\bibinfo{person}{Rowan Zellers}, \bibinfo{person}{Yonatan Bisk}, \bibinfo{person}{Roy Schwartz}, {and} \bibinfo{person}{Yejin Choi}.} \bibinfo{year}{2018}\natexlab{}.
\newblock \showarticletitle{SWAG: A Large-Scale Adversarial Dataset for Grounded Commonsense Inference}. In \bibinfo{booktitle}{\emph{Proceedings of the 2018 Conference on Empirical Methods in Natural Language Processing (EMNLP)}}.
\newblock


\bibitem[Zhang et~al\mbox{.}(2017)]%
        {ZCP2017}
\bibfield{author}{\bibinfo{person}{Jun Zhang}, \bibinfo{person}{Graham Cormode}, \bibinfo{person}{Cecilia~M Procopiuc}, \bibinfo{person}{Divesh Srivastava}, {and} \bibinfo{person}{Xiaokui Xiao}.} \bibinfo{year}{2017}\natexlab{}.
\newblock \showarticletitle{PrivBayes: Private Data Release via Bayesian Networks}. In \bibinfo{booktitle}{\emph{ACM Transactions on Database Systems (TODS)}}, Vol.~\bibinfo{volume}{42}. \bibinfo{pages}{1--41}.
\newblock


\bibitem[Zhang et~al\mbox{.}(2015)]%
        {ZZL2015yelp_agnews_yahoo}
\bibfield{author}{\bibinfo{person}{Xiang Zhang}, \bibinfo{person}{Junbo Zhao}, {and} \bibinfo{person}{Yann LeCun}.} \bibinfo{year}{2015}\natexlab{}.
\newblock \showarticletitle{Character-Level Convolutional Networks for Text Classification}. In \bibinfo{booktitle}{\emph{Advances in Neural Information Processing Systems (NeurIPS)}}, Vol.~\bibinfo{volume}{28}.
\newblock


\bibitem[Zhang et~al\mbox{.}(2023)]%
        {ZYC2023}
\bibfield{author}{\bibinfo{person}{Yuemin Zhang}, \bibinfo{person}{Qingqing Ye}, \bibinfo{person}{Rui Chen}, \bibinfo{person}{Haibo Hu}, {and} \bibinfo{person}{Qilong Han}.} \bibinfo{year}{2023}\natexlab{}.
\newblock \showarticletitle{Trajectory Data Collection with Local Differential Privacy}. In \bibinfo{booktitle}{\emph{Proceedings of the VLDB Endowment (PVLDB)}}, Vol.~\bibinfo{volume}{16}. \bibinfo{pages}{2591–2604}.
\newblock


\bibitem[Zhou et~al\mbox{.}(2019)]%
        {ZMF2019}
\bibfield{author}{\bibinfo{person}{Guorui Zhou}, \bibinfo{person}{Na Mou}, \bibinfo{person}{Ying Fan}, \bibinfo{person}{Qi Pi}, \bibinfo{person}{Weijie Bian}, \bibinfo{person}{Chang Zhou}, \bibinfo{person}{Xiaoqiang Zhu}, {and} \bibinfo{person}{Kun Gai}.} \bibinfo{year}{2019}\natexlab{}.
\newblock \showarticletitle{Deep Interest Evolution Network for Click-Through Rate Prediction}. In \bibinfo{booktitle}{\emph{Proceedings of the 33rd AAAI Conference on Artificial Intelligence (AAAI)}}. \bibinfo{pages}{5941--5948}.
\newblock


\bibitem[Zhu et~al\mbox{.}(2020)]%
        {ZKM2020}
\bibfield{author}{\bibinfo{person}{Wennan Zhu}, \bibinfo{person}{Peter Kairouz}, \bibinfo{person}{Brendan McMahan}, \bibinfo{person}{Haicheng Sun}, {and} \bibinfo{person}{Wei Li}.} \bibinfo{year}{2020}\natexlab{}.
\newblock \showarticletitle{Federated Heavy Hitters Discovery with Differential Privacy}. In \bibinfo{booktitle}{\emph{International Conference on Artificial Intelligence and Statistics (AISTATS)}}. \bibinfo{pages}{3837--3847}.
\newblock


\end{thebibliography}

\end{document}